\documentclass[letterpaper, twocolumns, unpublished, 10pt]{quantumarticle}

\pdfoutput=1

\usepackage[utf8]{inputenc}
\usepackage[english]{babel}
\usepackage[T1]{fontenc}

\usepackage{amsmath,amssymb} \usepackage{amsthm}
\usepackage{physics}
\usepackage{hyperref}

\usepackage{mdframed}
\usepackage{booktabs}
\usepackage{longtable}

\usepackage{graphicx}
\usepackage{xcolor}

\usepackage[numbers, sort&compress]{natbib}

\usepackage{soul}

\usepackage[ruled,vlined, linesnumbered]{algorithm2e}

\newtheorem{theorem}{Theorem}

\newtheorem{corollary}{Corollary}
\newtheorem{lemma}{Lemma}
\newtheorem{proposition}{Proposition}

\newcommand{\introduce}[1]{\textit{#1}}


\newcommand{\stabgens}{\ensuremath{\mathcal{S}}}

\newcommand{\Z}{\mathbb{Z}}
\newcommand{\N}{\mathbb{N}}

\DeclareMathOperator{\cheeger}{h}
\newcommand{\hepsilon}{{\cheeger}_{\varepsilon}}

\DeclareMathOperator{\cnot}{CNOT}
\DeclareMathOperator{\cut}{cut}
\DeclareMathOperator{\depth}{depth}

\DeclareMathOperator{\con}{\text{c}}
\DeclareMathOperator{\im}{Im}
\DeclareMathOperator{\Index}{index}
\newcommand{\CEC}{{{\cal D}_C}}

\title{Bounds on stabilizer measurement circuits and obstructions to local implementations of quantum LDPC codes}

\author{Nicolas Delfosse}
\author{Michael E. Beverland}

\affiliation{
    Microsoft Quantum
    \&
    Microsoft Research,
    Redmond, WA 98052, USA
}

\author{Maxime A. Tremblay}
\affiliation{
    Institut quantique
    \&
    Département de physique,
    Université de Sherbrooke,
    Sherbrooke, Qc, Canada, J1K 2R1
}

\begin{document}

\maketitle

\begin{abstract}
In this work we establish lower bounds on the size of Clifford circuits that measure a family of commuting Pauli operators.
Our bounds depend on the interplay between a pair of graphs: the Tanner graph of the set of measured Pauli operators, and the connectivity graph which represents the qubit connections required to implement the circuit.
For local-expander quantum codes, which are promising for low-overhead quantum error correction, we prove that any syndrome extraction circuit implemented with local Clifford gates in a 2D square patch of $N$ qubits has depth at least $\Omega(n/\sqrt{N})$ where $n$ is the code length.
Then, we propose two families of quantum circuits saturating this bound.
First, we construct 2D local syndrome extraction circuits for quantum LDPC codes with bounded depth using only $O(n^2)$ ancilla qubits.
Second, we design a family of 2D local syndrome extraction circuits for hypergraph product codes using $O(n)$ ancilla qubits with depth $O(\sqrt{n})$.
Finally, we use circuit noise simulations to compare the performance of a family of hypergraph product codes using this last family of 2D syndrome extraction circuits with a syndrome extraction circuit implemented with fully connected qubits.
While there is a threshold of about $10^{-3}$ for a fully connected implementation, we observe no threshold for the 2D local implementation despite simulating error rates of as low as $10^{-6}$. This suggests that quantum LDPC codes are impractical with 2D local quantum hardware.
We believe that our proof technique is of independent interest and could find other applications.
Our bounds on circuit sizes are derived from a lower bound on the amount of correlations between two subsets of qubits of the circuit and an upper bound on the amount of correlations introduced by each circuit gate, which together provide a lower bound on the circuit size.
\end{abstract}

\section{Introduction and overview}
\label{sec:intro}

Quantum Low Density Parity Check (LDPC) codes~\cite{mackay_sparse-graph_2004, tillich_quantum_2014, hastings2021fiber, panteleev2020quantum}, capable of encoding many logical qubits within the same block, offer promising performance for large scale fault-tolerant quantum computation.
Quantum LDPC codes have two appealing features:
(i) they are defined by constraints involving a small number of qubits,
(ii) some can encode $k$ logical qubits into $n$ physical qubits with $k = \Omega(n)$, with a large minimum distance $d$.
Recall that a minimum distance $d$ guarantees that the code can correct any error acting on up to $(d-1)/2$ qubits\footnote{In the ideal case of perfect syndrome extraction.}.
Property (i) makes quantum LDPC codes easier to implement than general codes, and (ii) implies quantum LDPC codes perform well and could lead to a more favorable overhead than the surface code~\cite{gottesman_fault-tolerant_2014, fawzi_constant_2018}.

Numerical simulations of the performance of quantum LDPC codes without geometric locality constraints show encouraging results~\cite{kovalev2018numerical, grospellier_numerical_2019, grospellier_combining_2020, tremblay2021layers}.
In this work, we explore the potential of quantum LDPC codes implemented with quantum hardware limited to local operations in a 2D grid of qubits which matches the capability of many hardware technologies such as superconducting qubits~\cite{arute2019quantum, chamberland2020building} and Majorana qubits~\cite{karzig2017scalable}.

To correct errors affecting an encoded quantum state, we first execute a quantum circuit, the so-called syndrome extraction circuit, which measures the local constraints defining the code. 
Then, the syndrome output by the syndrome extraction circuit is fed to the decoder which is a classical subroutine that processes the syndrome and returns a correction. 
Finally, the encoded state is corrected using the output of the decoder.
 
Decoders have been proposed for quantum LDPC codes~\cite{leverrier_quantum_2015, panteleev2019degenerate, roffe2020decoding, delfosse2021toward} which can be combined with any syndrome extraction circuit. 
With 2D connectivity, we expect the bottleneck for error correction performance to be the syndrome extraction circuit.
However, the question of designing 2D local syndrome extraction circuits for quantum LDPC codes remains open.

Locality is known to restrain the parameters achievable with quantum error correction codes.
Bravyi, Poulin and Terhal~\cite{bravyi_tradeoffs_2010} proved that quantum codes defined by local commuting projectors in a 2D grid of $\sqrt{n} \times \sqrt{n}$ qubits obey the bound $k d^2 \leq O(n)$.
This result translates into a bound on the depth of syndrome extraction circuits built from with local unitary gates and local measurements in a 2D square patch.
Indeed, assume that a family of quantum codes with length $n \rightarrow +\infty$ encoding $k \leq Rn$ logical qubits can be implemented with 2D local syndrome extraction circuits with bounded depth.
The projectors defining a code can be obtained by backpropagating the projectors corresponding to local measurements in the circuit.
For bounded-depth local syndrome extraction circuits, this procedure leads to a family of local commuting projectors in 2D.
If the circuit is supported on a square patch of $n$ qubits, this implies that the parameters of the codes are limited by the tradeoff $k d^2 \leq O(n)$ which leads to $d = O(1)$.
The only quantum codes that can be implemented in bounded depth with these assumptions have a bounded minimum distance and therefore a non-vanishing logical error rate (and no threshold).

\begin{figure}[t]
  \centering
  \includegraphics[scale=.4]{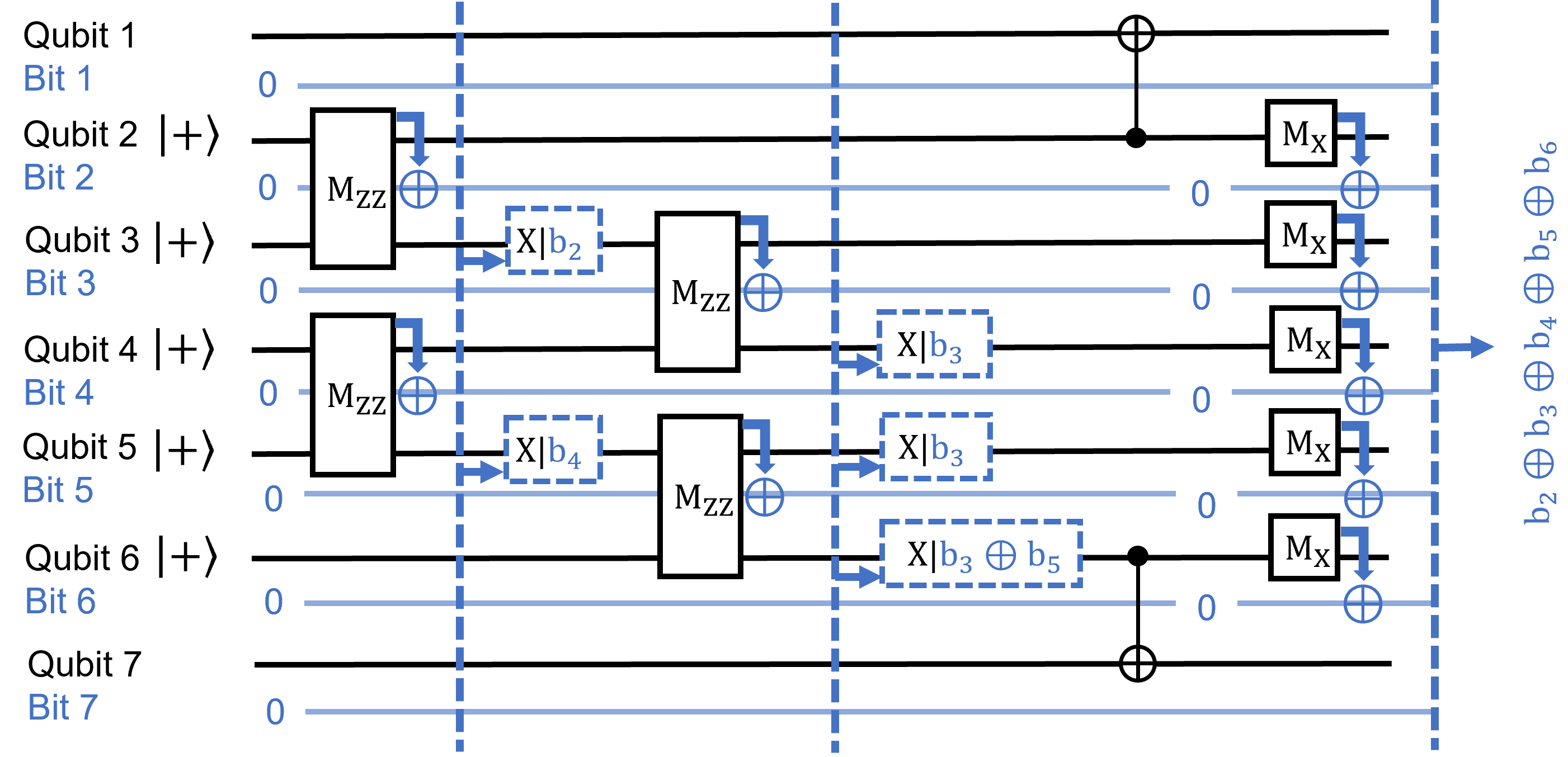}
  \caption{
  A local circuit on a line for qubits implementing the measurement of the Pauli operator $XX$ supported on the endpoints of the line (qubit 1 and qubit 7). 
  Each qubit comes with a classical memory represented by the light blue line which is used to store the measurement outcomes.
  The circuit is made with single-qubit and two-qubit quantum operations, represented in black, acting on adjacent qubits.
  The blue boxes represent classically-controlled Pauli operations that are applied only if a subset of bits has parity $1$.
  These conditional gates may depend on any subset of bits and require long-range classical communication, represented by blue vertical dashed lines to access the classical bit stored.
  }
  \label{fig::long_range_MXX}
\end{figure}

This argument does not preclude bounded-depth syndrome extraction circuits for general quantum LDPC codes with realistic hardware constraints in 2D because it does not allow for classical communication. 
It is reasonable to assume that a quantum chip is equipped with at least some long-range classical communication channels. 
As an example, the circuit of Fig.~\ref{fig::long_range_MXX} can be used to perform a Pauli measurement between the two endpoints of a line of qubits in bounded depth using long-range classical communication.
However, it is unclear if we can simultaneously measure a large set of Pauli operators in bounded depth.

Moreover, if the syndrome extraction cannot be implemented in bounded depth using a total of $O(n)$ qubits, then what is the minimum depth of a 2D local syndrome extraction circuit using $O(n)$ qubits? Alternatively, what is the minimum number of ancilla qubits necessary to implement the syndrome extraction in bounded depth with a 2D local circuit? Finally, how do quantum LDPC codes perform with these syndrome extraction circuits?

We explore all these questions in this article.
We consider a class of 2D local Clifford circuits made with single-qubit and two-qubit unitary Clifford gates and single-qubit and two-qubit Pauli measurements.
We also authorize classically-controlled Pauli operations controlled on the parity of any set of outcomes of previous measurements.
By extending Clifford operations with these classical controlled gates, which exploit long-range classical communication, we obtain a class of Clifford circuit which can implement non-local Pauli measurement in bounded depth through cat states.
We assume no restriction on the classical storage of the measurement outcomes or the classical communication. As a result, the parities controlling different conditional Pauli operations of the same layer of a circuit can be computed instantaneously and simultaneously even if some conditions depend on the same bits or depend on bits that are stored far away from the qubits supporting the Pauli operation.

Our main technical result is a general lower bound on the depth of Clifford circuits implementing the measurement of a family of Pauli operators. This bound relies on the connectivity graph $G_{\con}$ of the circuit which is the graph whose vertices support the qubits and such that two qubits are connected if the circuit contains a two-qubit operation acting on them.

\begin{theorem} \label{theorem:separator_bound}
Let $C$ be a Clifford circuit measuring commuting Pauli operators $S_1, \dots, S_r$.
Then, for any subset of qubits $L$, we have
$$
\depth(C) \geq \frac{n_{\cut}}{64 |\partial L|},
$$
where $n_{\cut}$ is the number of independent operators $S_i$ with support on both $L$ and its complement and $\partial L$ is the set of edges connecting $L$ and its complement in the connectivity graph of the circuit.
\end{theorem}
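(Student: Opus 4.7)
The plan is to prove the bound via a potential-function argument of the form sketched in the abstract: define a nonnegative integer $\Phi_t$ at each time step $t \in \{0, 1, \dots, T\}$ with $T = \depth(C)$, and establish three properties. First, $\Phi_0 = 0$: before the circuit runs, no correlations are yet ``stored'' across the cut. Second, $\Phi_T \geq n_{\cut}$: by correctness of the circuit, at least $n_{\cut}$ units of cut-crossing correlation must have been produced by the end. Third, $\Phi_{t+1} - \Phi_t$ is bounded by a constant times $|\partial L|$ at each step. Summing the third inequality over $t$ and combining with the first two immediately yields $n_{\cut} \leq c \cdot T \cdot |\partial L|$, which is the claimed bound once all constants are accounted for.

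The natural candidate for $\Phi_t$ is a ``cut-rank'' built from the Heisenberg picture. I backpropagate each Pauli measurement performed by time $t$ through the preceding unitary Cliffords, obtaining Pauli operators $\tilde M_1, \dots, \tilde M_{j(t)}$ on the input qubits. Let $\mathcal G_t$ be the subgroup of $\pauli_n$ they generate (modulo overall phases), and define $\Phi_t$ as the dimension over $\gf_2$ of $\mathcal G_t$ modulo the subgroup generated by its elements supported entirely on $L$ or entirely on $L^c$. Equivalently, $\Phi_t$ counts the number of independent generators of $\mathcal G_t$ whose support is truly split across the cut.

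For property (ii), correctness of the circuit forces every $S_i$ (up to a sign determined by classical feedforward) to be a product of the backpropagated measurement operators, i.e., an element of $\mathcal G_T$. Since $n_{\cut}$ of the $S_i$ have support on both $L$ and $L^c$ and are independent modulo single-sided operators, they contribute at least $n_{\cut}$ independent classes to the quotient, so $\Phi_T \geq n_{\cut}$. For property (iii), I go through the gate types. Single-qubit gates, classically controlled Paulis, and two-qubit gates whose supporting edge lies entirely inside $L$ or entirely inside $L^c$ do not change $\Phi_t$, since they only act on one side. A two-qubit Clifford or a two-qubit measurement on an edge of $\partial L$ can grow $\Phi_t$ by at most a constant, bounded by the dimension of the Pauli algebra on two qubits. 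In any single layer of $C$, gates act on disjoint qubits, so the number of cut-crossing gates is at most $|\partial L|$; a careful accounting (including measurement-induced updates and classically conditioned signs) yields the uniform per-layer bound $64 |\partial L|$.

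The hard part will be handling classical feedforward rigorously. Classically controlled Paulis change signs of Heisenberg-evolved operators but not supports, so they are inert for $\Phi_t$; the nontrivial point is property (ii), where one must rule out that clever classical post-processing synthesizes a split-support $S_i$ purely from within-side measurement operators. The argument is that the supports of the backpropagated operators $\tilde M_j$ are determined by the quantum (unitary and measurement) part of the circuit alone, so the split-support count is a genuinely quantum-geometric obstruction that no amount of classical communication can circumvent. Pinning this down cleanly, and checking that measurement updates to the group $\mathcal G_t$ indeed respect the cut decomposition, is the technical heart of the proof.
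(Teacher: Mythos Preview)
Your potential-function approach is quite different from the paper's, and there are two genuine gaps that prevent it from going through as written.

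\textbf{Property (ii) fails in general.} You assert that the $n_{\cut}$ independent cut-crossing operators $S_i$ remain independent modulo the subgroup generated by single-sided elements of $\mathcal G_T$, but this is not implied by the hypothesis of the theorem. Take $S_1 = X_1 X_3$ and $S_2 = X_2 X_3$ with qubits $1,2 \in L$ and $3 \in L^{\mathsf c}$. Both operators cross the cut and are independent, so $n_{\cut}=2$; yet $S_1 S_2 = X_1 X_2$ lies entirely in $L$, hence belongs to the single-sided subgroup of $\mathcal G_T$, and $[S_1]=[S_2]$ in your quotient. Your $\Phi_T$ is then $1$, not $2$. The theorem statement only promises independence of the $S_i$, not independence of their images in the quotient, so your lower bound on $\Phi_T$ is unjustified.

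\textbf{Property (iii) does not match your definition of $\Phi_t$.} With $\Phi_t$ defined via backpropagation of all measurements performed by time $t$ to the \emph{input}, the quantity changes only when new measurements are added; unitary gates at layer $t$ do not touch $\mathcal G_t$ at all (they only affect backpropagation of \emph{future} measurements). Conversely, a single-qubit measurement deep inside $L$ at layer $t$ can backpropagate through many earlier gates to an operator whose support straddles the cut, increasing $\Phi_t$ even though no gate at layer $t$ crosses $\partial L$. Your case analysis ``go through the gate types at layer $t$'' therefore does not bound $\Phi_{t+1}-\Phi_t$ by a constant times $|\partial L|$. One could try to salvage a cumulative bound on $\Phi_T$ directly, but that essentially requires the entropic machinery the paper develops.

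The paper circumvents both issues by abandoning the stabilizer cut-rank entirely. It studies the circuit $C \circ E \circ C$ with a uniformly random Pauli error $E$ inserted between two runs, and takes as potential the \emph{conditional} classical mutual information $I(O^{(2)}_{\bar L},E_{\bar L};O^{(2)}_{\bar R},E_{\bar R}\mid O^{(1)})$. The random error is what makes the lower bound work: knowing the second-round outcomes on one side together with $E$ on that side determines the commutation data $m_i(E_{\text{other side}})$ for each cut-crossing $S_i$, and because $E$ is uniform these bits are themselves uniform, yielding $\geq n_{\cut}/2$ bits of conditional mutual information regardless of linear dependencies among restrictions $S_i|_L$. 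For the upper bound, the paper first rewrites $C$ so that all measurements and classically-controlled Paulis are pushed to the end (Lemma~\ref{lemma:circuit_transformation}); the quantum mutual information across the cut then grows by at most $4$ per cut-crossing two-qubit unitary in the residual unitary block, and the final conditioning and single-sided measurements cannot increase it. Conditioning on $O^{(1)}$ is precisely what neutralises the classical feedforward that you flag as the hard part.
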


This result is obtained by studying the correlations created between two subsets of qubits of the circuit. We believe that our proof technique could find other applications to place bounds on quantum circuits and we provide an overview of our proof strategy in Section~\ref{subsec:proof_strategy}.

Applying Theorem~\ref{theorem:separator_bound}, we obtain bounds on the circuit depth and the number of ancilla qubits required to perform the syndrome extraction of quantum LDPC codes in different settings.
For families of local-expander quantum LDPC codes with length $n$, if the syndrome extraction is implemented with a 2D local Clifford circuit acting a $\sqrt{N} \times \sqrt{N}$ patch of qubits, we prove (Corollary~\ref{cor:dense_2d_local_meas_circuit}) that the depth of the syndrome extraction circuits satisfies 
\begin{align} \label{eqn:intro_syndrome_circuit_bound}
\depth(C) \geq \Omega\left(\frac{n}{\sqrt{N}}\right) \cdot
\end{align}
We establish similar results for $D$-dimensional local Clifford circuits (Corollary~\ref{cor:dense_D_dim_local_meas_circuit}) and for 2D local Clifford circuits acting on an arbitrary subset of the square grid $\Z^2$ (Corollary~\ref{cor:general_2d_local_meas_circuit}).
The assumption of local-expansion is necessary because surface codes violate these bounds. However, their parameters are limited by the tradeoff~\cite{bravyi_tradeoffs_2010}.
Many standard families of classical and quantum LDPC codes exhibit local-expansion such as random classical LDPC codes\cite{gallager_low-density_1962}, classical expander codes~\cite{sipser_expander_1996}, quantum hyperbolic codes~\cite{zemor2009cayley, guth2014quantum} and quantum hypergraph product codes~\cite{tillich_quantum_2014}. 
The assumption of local expansion is also commonly used to guarantee that the decoder performs well~\cite{sipser_expander_1996, hastings2013decoding, leverrier_quantum_2015, delfosse2021toward} and some notion of expansion is known to be required to obtain good quantum LDPC codes~\cite{baspin2021connectivity}.

Then, we propose two classes of syndrome extraction circuits saturating the bound~\eqref{eqn:intro_syndrome_circuit_bound}.
On the one hand, we design 2D local Clifford syndrome extraction circuits for any family of CSS codes that run in bounded depth and use only $O(n^{2})$ ancilla qubits which is the minimum number of ancilla qubits for a bounded depth circuits based on~\eqref{eqn:intro_syndrome_circuit_bound}.
On the other hand, we design 2D local Clifford syndrome extraction circuits for any family of hypergraph product codes~\cite{tillich_quantum_2014} that run in depth $O(\sqrt{n})$ and use only $O(n)$ ancilla qubits. Based on the bound~\eqref{eqn:intro_syndrome_circuit_bound}, this is the best achievable depth when using $O(n)$ ancilla qubits.

\begin{figure}[t]
  \centering
  \includegraphics{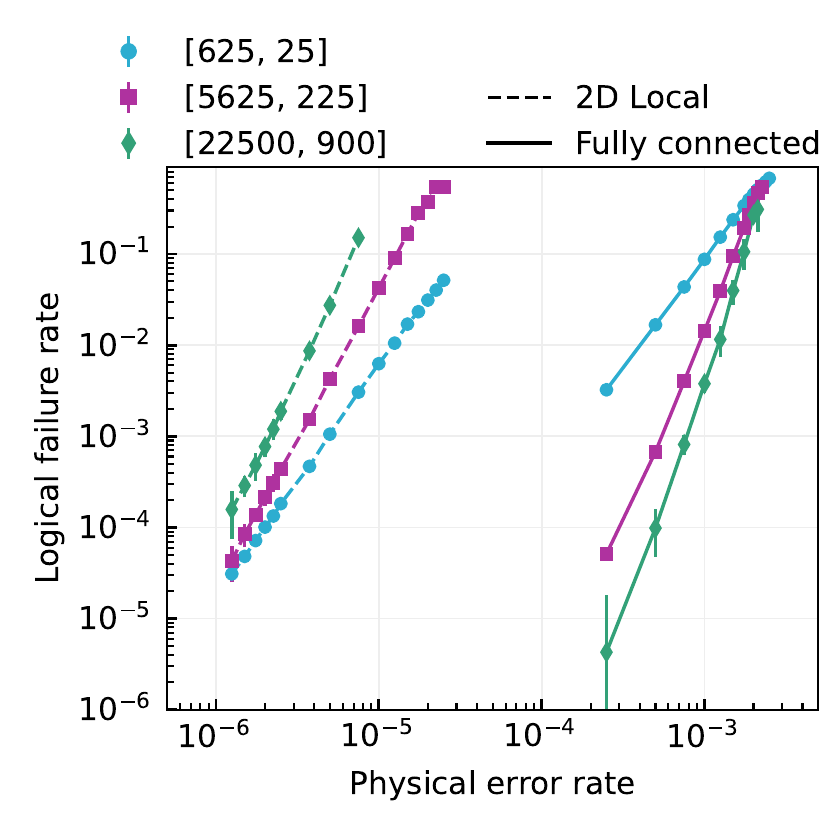}
  \caption{
  Logical failure rate as a function of the gate error rate for a family of hypergraph product codes with rate 1/25, using either the 2D local syndrome extraction circuit described in Section~\ref{sec::constant_overhead_circuits} or a bounded-depth syndrome extraction circuit described in the beginning of Section~\ref{sec::constant_depth_circuits} made with two-qubit gates acting on arbitrary pairs of qubits.
  }
  \label{fig::numerical_results}
\end{figure}

To assess the performance of quantum LDPC codes of finite (rather than asymptotic) length, we perform simulations with good 2D local syndrome extraction circuits.
In particular, we simulate the performance of a family of hypergraph product codes~\cite{tillich_quantum_2014} with encoding rate $k/n = 1/25$. 
To preserve the high rate of logical qubits per physical qubits of these codes, we use our second family of syndrome extraction circuits which uses $O(n)$ ancilla qubits. Including the ancilla qubits used for syndrome extraction, the rate of the family is $k/N = 1/98$.
Fig.~\ref{fig::numerical_results} shows the performance of this family of quantum LDPC codes in two different settings. 
The code and the decoder are the same in both settings but the syndrome extraction is implemented either with our 2D local syndrome extraction circuit or with a bounded-depth syndrome extraction circuit using arbitrary two-qubit gates, ignoring geometric constraints.
This second case assumes fully connected qubits.
Both schemes are simulated with circuit level noise.
In the regime of our simulation, we observe that the 2D local implementation of this family of quantum LDPC codes achieves a performance comparable to the fully connected implementation with a physical noise rate increased by a factor 100.
Moreover, we observe in Fig.~\ref{fig::numerical_results} that increasing the block size degrades the performance in 2D local case which shows that if this family of codes has a non-zero error threshold in 2D, it is below $10^{-6}$.
In a companion paper~\cite{tremblay2021layers}, we explore an alternative implementation of these quantum LDPC codes using some long-range connections between the qubits.

We introduce definitions and background material in Section~\ref{sec:background}.
Our bounds on the depth and the number of ancilla qubits of local syndrome extraction circuits are proven in Section~\ref{sec:circuit_bound}.
The proof of our main technical result, Theorem~\ref{theorem:separator_bound}, is provided in Section~\ref{sec:proofs}.
Section~\ref{sec::constant_depth_circuits} describes a construction of bounded-depth 2D local syndrome extraction circuits for CSS codes.
The 2D local syndrome extraction circuits for hypergraph product codes using a linear number of ancilla qubits are proposed in Section~\ref{sec::constant_overhead_circuits}.
The last section, Section~\ref{sec::numerical_results}, describes our numerical results.

\section{Background and definitions}
\label{sec:background}

Here we review some background topics and provide definitions that will be used throughout the rest of the paper.

\subsection{Clifford circuits}
\label{subsec:clifford_circuits}

A quantum circuit is a sequence of operations acting on a set of qubits.
There are many important classes of circuit, which can typically be defined by restricting the set of allowed operations from which the circuit is composed.
In this work, we focus on {\em Clifford circuits} which are built from the following operations.
\begin{itemize}
\item Preparations of $\ket 0$ or $\ket +$.
\item Single-qubit and two-qubit Pauli measurements.
\item Single-qubit and two-qubit unitary Clifford gates.
\item Classically-controlled Pauli operations, applied only if some subset of previous measurement outcomes has parity 1.
\item Output a set of classical bits obtained by computing the parity of some subsets of measurement outcomes.
\end{itemize}
To avoid any confusion with the outcomes of the single-qubit and two-qubit measurements of the circuit, we reserve the term {\em output} to refer to the parity bits returned at the end of the circuit.
We assume that the Pauli operation has access to all the measurement outcomes extracted in the past.
We discuss the generalization of our results to circuits involving more general operations in Section~\ref{sec:generalizations}.

A Clifford circuit is decomposed into {\em layers} of the above operations in such a way that the support of two operations of the same layer do not overlap and the {\em depth} of the circuit is the number of layers.
Here we define the support of a classically-controlled Pauli operation to be the set of qubits acted on by the Pauli operation. 
The support of a classically-controlled Pauli operation is independent of the classical bits controlling the gate.
Therefore, the conditions of two classically-controlled Pauli operations of the same layer may overlap.
We assume unlimited classical communication and classical storage capacity and do not account for any delay due to classical control.
For example, Fig.~\ref{fig::long_range_MXX} shows a circuit with depth six.
Here, we include Pauli operations in the calculation of the depth of the circuit.
Given that Pauli operations can be implemented by frame update, which requires only classical processing, we could also ignore Pauli operations in the calculation of the depth of quantum circuits. 
With this notion of depth, our lower bounds on the depth of a quantum circuit would remain unchanged up to a constant factor.

Given a circuit $C$, we define the \emph{connectivity graph of the circuit} $G_{\con} = G_{\con}(C)$ which has a vertex for each qubit, and an edge connecting each pair of vertices that are in the support of any single operation applied in the circuit.

On the other hand, consider a set of qubits whose connectivity is described by a graph $G$, {\em i.e.} the two-qubit gates available are the gates supported on the edges of the graph $G$. A circuit $C$ is said to be implementable with qubits with connectivity $G$ if $G_{\con}(C)$ is a subgraph of $G$.

A {\em $D$-dimensional $b$-local circuit} is a circuit acting on qubits placed on a subset of the grid $\Z^D$ such that each gate acts on qubits at distance at most $b$ from each other, where distances are calculated with the infinity norm.
We sometimes refer to a family of $D$-dimensional $b$-local circuits as a {\em family of $D$-dimensional local circuits}, omitting the constant $b$.
For example, measuring the stabilizer generators of the surface code can be implemented by a two-dimensional $1$-local circuit family of bounded depth using a square-grid connectivity graph~\cite{fowler_surface_2012}.

\subsection{Pauli measurement circuits}

Consider a set of commuting $n$-qubit independent Pauli operators $\mathcal{S}=\{S_1, \dots, S_r \}$.
For $m = (m_1, \dots, m_r) \in \Z_2^r$, denote by $\Pi_m$ the projector onto the common eigenspace of the $S_i$ with eigenvalue $(-1)^{m_i}$.
Following the postulates of quantum mechanics, a {\em Pauli measurement circuit} for the measurement of $\mathcal{S}$ is a Clifford circuit which takes as an input a $n$-qubit state with density matrix $\rho$ and returns the output $m = (m_1, \dots, m_r) \in \Z_2^r$ with probability 
$
\Tr(\Pi_m \rho).
$
Moreover, the state of the $n$ input qubits must be mapped onto 
$
\Pi_m \rho \Pi_m / \Tr(\Pi_m \rho).
$ 

We call the $n$ qubits supporting the operators in $\mathcal{S}$ the {\it data qubits},
and any number $a_n$ of other qubits in the circuit the {\it ancilla qubits} such that the total number of qubits in the circuit is $N = n + a_n$.
We assume that ancilla qubits are prepared in the state $\ket 0$ or $\ket +$ before being used and they are returned to a state $\ket 0$ or $\ket +$ at the end of the circuit.
We also assume that data qubits are never measured with a single qubit measurement.

Like in Section~\ref{subsec:clifford_circuits}, we assume that the output $m_i$, corresponding to the measurement of $S_i$, is obtained by taking the parity of a subset $O_i$ of outcomes of single-qubit and two qubit measurements of the circuit, that is
$
m_i = \oplus_{o \in O_i} o,
$
where $\oplus$ denotes the sum modulo 2.
These outcomes $o$ may be extracted from any layer of the circuit and the sets $O_i$ may overlap.

\subsection{Tanner graph and contracted Tanner graph}

Consider a set $\stabgens=\{S_1, \dots, S_r \}$ of $n$-qubit Pauli operators acting on qubits that we denote $q_1, \dots, q_n$.
The \introduce{Tanner graph} of the set of Pauli operators $\stabgens$, denoted $T(\stabgens)$ or simply $T$ when no confusion is possible, is defined to be the bipartite graph $T = (V_Q \cup V_{\stabgens}, E)$ where $V_Q = \qty{q_1, q_2,\ldots, q_n}$ and $V_{\stabgens} = \qty{S_1, S_2,\ldots, S_r}$ correspond respectively to the data qubits and the Pauli operators.
The vertices $q_i$ and $S_j$ are connected by an edge if and only if the operator $S_j$ acts non-trivially on qubit $q_i$.

The {\em contracted Tanner graph} $\bar T(\stabgens)$ or simply $\bar T$, is the graph with vertex set $V_Q$ such that two vertices $q_i$ and $q_j$ are connected by an edge if and only if there exists an operator $S_k$ that acts non-trivially on both $q_i$ and $q_j$.
Alternatively, the contracted Tanner graph can be obtained by connecting two data qubits if and only if they are at distance two in the Tanner graph.

\subsection{Stabilizer codes and quantum LDPC codes}

A \introduce{stabilizer code} with length $n$ is the common $+1$-eigenspace of a set of $n$-qubit independent commuting Pauli operators $\stabgens = \qty{S_{1}, S_{2}, \ldots, S_r}$.
We refer to the operators $S_i$ as the {\em stabilizer generators} of the code.
When considering a stabilizer code, we always assume that it comes with fixed set of stabilizer generators.
By the Tanner graph of a stabilizer code, we mean the Tanner graph of its set of stabilizer generators.

By a {\em syndrome extraction circuit} for the stabilizer code with stabilizer generators $\mathcal{S}$, we mean a Pauli measurement circuit for the set $\mathcal{S}$.

A {\em family of quantum LDPC codes} is a family of stabilizer codes $(Q_i)_{i \in \N}$ such that the Tanner graph $T_i$ of $Q_i$ has degree bounded by some constant $c$ independent of $i$.
Equivalently, there exists a constant $c$ such that for all $i \in \N$, each stabilizer generator has weight at most $c$ and each qubit is acted on by at most $c$ stabilizer generators.

In Sections~\ref{sec::constant_depth_circuits} and \ref{sec::constant_overhead_circuits}, we focus on Calderbank Shor Steane (CSS) codes~\cite{calderbank_good_1996, steane_simple_1996}.
These are codes for which each stabilizer generator is either composed entirely of
$I$ and $X$ operators or entirely of $I$ and $Z$ operators, such that $\stabgens = \stabgens_X \cup \stabgens_Z$ where $\stabgens_X$ is the set of $X$-type stabilizer generators etc.
In the case of CSS codes, we define the $X$ Tanner graph $T_X$ as the subgraph of $T$ induced by the vertices corresponding to the qubits and the $X$ stabilizers.
We define the $Z$ Tanner graph $T_Z$ similarly.

\subsection{Expander graphs and expander codes}

We introduce the following local generalization of the {\em Cheeger constant} of a graph $G = (V, E)$, denoted $\hepsilon$, defined as
\begin{align*}
\hepsilon(G) = \min_{\substack{L \subseteq V \\ |L| \leq \varepsilon|V|/2}} \frac{|\partial L|}{|L|},
\end{align*}
for all $\varepsilon \in [0, 1]$.
Therein, $\partial L$ is the boundary of $L$, that is the set of edges connecting $L$ and its complement. 
The usual Cheeger constant, denoted $h(G)$ is obtained by setting $\varepsilon = 1$, that is $h(G) = h_1(G)$.

A {\em family of $\alpha$-expander graphs} is a family of graphs $(G_i)_{i \in \N}$ such that $h(G_i) \geq \alpha$ for all $i \in \N$. 
We consider a generalization of this notion by considering expansion over small subsets of vertices.
A {\em family of $(\alpha, \varepsilon)$-expander graphs} is a family of graphs $(G_i)_{i \in \N}$ such that $h_{\varepsilon}(G_i) \geq \alpha$ for all $i \in \N$.

Clearly, if $(G_i)_{i \in \N}$ is a family of $\alpha$-expander graphs, it is also a family of $(\alpha, \varepsilon)$-expander graphs for all $\varepsilon \in [0, 1]$. However, $(\alpha, \varepsilon)$-expansion with $\varepsilon < 1$ does not imply $\alpha$-expansion.

By a {\em family of quantum expander codes} we mean a family of quantum LDPC codes such that the family of contracted Tanner graphs of the stabilizer generators is a family of $\alpha$-expander graphs for some $\alpha > 0$.
Similarly, we define a {\em family of local-expander codes} as a family of quantum LDPC codes equipped with $(\alpha, \varepsilon)$-expander contracted Tanner graphs for some $\alpha, \varepsilon > 0$.
By definition, a family of expander codes is also a family of local-expander codes.

It is common in the literature to define quantum expander codes or local quantum expander codes in terms of their Tanner graphs rather than their contracted Tanner graphs.
In Lemma~\ref{lemma:Tanner_expander_implies_meas_expander}, we show that if a code's Tanner graph is a local expander graph, then so too is its contracted Tanner graph. 
This allows our results to be applied to code families with known expansion properties of the Tanner graphs.

\begin{lemma} \label{lemma:Tanner_expander_implies_meas_expander}
Let $T$ be the Tanner graph of a stabilizer code with length $n$ and with $r$ stabilizer generators and let $\bar T$ be its contracted Tanner graph.
Then, for all $\varepsilon \in [0, 1]$, we have
\begin{align*}
\text{h}_{\varepsilon'}(\bar T) \geq \frac{\hepsilon(T)}{\deg(T)},
\end{align*}
where $\varepsilon' =  \frac{n+r}{(1 + \deg(T)) n} \varepsilon$.
\end{lemma}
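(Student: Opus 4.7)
The plan is to lift any small subset $L \subseteq V_Q$ witnessing the Cheeger ratio in $\bar T$ to a corresponding closed set $\tilde L$ in $T$, apply the assumed local expansion of $T$ to $\tilde L$, and then argue that the boundary of $\tilde L$ in $T$ is at most $\deg(T)$ times larger than the boundary of $L$ in $\bar T$. The whole content of the proof lies in this last comparison of boundaries.

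Fix $L \subseteq V_Q$ with $|L| \leq \varepsilon' n/2$ and define
\[
\tilde L \;=\; L \;\cup\; \{S \in V_{\stabgens} \,:\, N_T(S) \subseteq L\}.
\]
Every stabilizer added to $\tilde L$ is a $T$-neighbor of some qubit in $L$, so, using $\deg(T)$-regularity of $T$,
\[
|\tilde L| \;\leq\; (1+\deg(T))\,|L| \;\leq\; \tfrac{\varepsilon(n+r)}{2},
\]
where the second inequality uses the definition of $\varepsilon'$. This places $\tilde L$ in the regime where the hypothesis applies, giving $|\partial_T \tilde L| \geq \hepsilon(T)\,|\tilde L| \geq \hepsilon(T)\,|L|$.

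Since every $S \in \tilde L \setminus L$ has all its $T$-neighbors in $L$, no edge of $\partial_T \tilde L$ is incident to $\tilde L \setminus L$, and hence $|\partial_T \tilde L| = \sum_{q \in L} |N_T(q) \setminus \tilde L|$. The key claim is the pointwise estimate
\[
|N_T(q) \setminus \tilde L| \;\leq\; \deg(T)\,|N_{\bar T}(q) \setminus L|
\qquad \text{for every } q\in L.
\]
To see this, double-count pairs $(S,q')$ with $S\in N_T(q)$ and $q'\in N_T(S)\setminus L$:
\[
\sum_{S\in N_T(q)\setminus \tilde L}|N_T(S)\setminus L|
\;=\;
\sum_{q'\in V_Q\setminus L}|N_T(q)\cap N_T(q')|.
\]
On the left, every summand is at least $1$ by the very definition of $\tilde L$, so the sum is at least $|N_T(q)\setminus \tilde L|$. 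On the right, only the $q'\in N_{\bar T}(q)\setminus L$ give a nonzero term, and each term is at most $\deg(T)$. The claim follows.

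Summing the pointwise inequality over $q \in L$ and noting that $|\partial_{\bar T} L| = \sum_{q\in L}|N_{\bar T}(q)\setminus L|$ (since $\bar T$ is a simple graph), I obtain $|\partial_T \tilde L| \leq \deg(T)\cdot|\partial_{\bar T} L|$. Combining with the Cheeger estimate from the second paragraph yields
\[
|\partial_{\bar T} L| \;\geq\; \frac{\hepsilon(T)}{\deg(T)}\,|L|,
\]
and taking the infimum over admissible $L$ gives the desired bound on $h_{\varepsilon'}(\bar T)$. The main obstacle is the double-counting identity, which requires that $\tilde L$ be the \emph{full} closure (not merely $L$): otherwise stabilizers lying entirely inside $L$ would contribute spurious edges to $\partial_T \tilde L$ with no counterpart in $\partial_{\bar T} L$, breaking the comparison. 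The remaining steps are bookkeeping about the relative sizes of $L$, $\tilde L$, and the vertex sets of $T$ and $\bar T$.
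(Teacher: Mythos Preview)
Your proof is correct. Both you and the paper follow the same high-level strategy: lift $L\subseteq V_Q$ to a set in $T$, bound its size by $(1+\deg(T))|L|$ so that $h_\varepsilon(T)$ applies, and then show that the $T$-boundary of the lifted set is at most $\deg(T)\cdot|\partial_{\bar T}L|$.

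The one genuine difference is the choice of lift. The paper uses the \emph{outer} closure $L\cup N_T(L)$, adding every stabilizer with at least one neighbour in $L$; you use the \emph{inner} closure $\tilde L=L\cup\{S:N_T(S)\subseteq L\}$, adding only stabilizers whose entire support lies in $L$. This changes where the boundary lives: for the paper, $\partial_T(L\cup N_T(L))$ consists of edges from stabilizers in $N_T(L)$ to qubits outside $L$, and they compare boundaries via a covering map $\varphi_L:\partial_{\bar T}L\to\mathcal P(\partial_T(L\cup N_T(L)))$; for you, $\partial_T\tilde L$ consists of edges from qubits in $L$ to stabilizers not fully contained in $L$, and your pointwise double-counting handles it directly. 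Your version makes the boundary comparison a bit more transparent (the sum $\sum_{q\in L}|N_T(q)\setminus\tilde L|$ is literally $|\partial_T\tilde L|$), at the cost of the size bound $|\tilde L|\le(1+\deg(T))|L|$ being loose rather than tight. Two small cosmetic points: you say ``$\deg(T)$-regularity'' where you only need (and only have) a maximum-degree bound; and your displayed double-counting identity is literally correct only after noting that the terms with $S\in\tilde L$ vanish, which you use implicitly but do not state.
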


\begin{proof}
A subset $L \subseteq V(\bar T)$ can be treated as a subset of $V(T)$.
To avoid any confusion, we use the notation $\partial_{\bar T} L$ and $\partial_T L$ for the set of edges leaving $L$ in the two graphs.

To enumerate edges of $\partial_{\bar T} L$, we introduce a function $\varphi_L: \partial_{\bar T} L \rightarrow {\mathcal P}(\partial_T (L \cup N_T(L)))$ where ${\mathcal P}(A)$ denotes the power set of $A$.
The function $\varphi_L$ maps the edge $e = \{u, v\}$ with $u \in L$ and $v \notin L$ onto the set of edges $\{w, v\}$ of $T$ such that $w$ is adjacent to both $u$ and $v$ in $T$.
On the one hand, there are at most $\deg(T)$ edges in the image of an edge $e \in \partial_{\bar T} L$ and each edge of $\partial_T (L \cup N_T(L))$ belongs to one of the subsets of the image of $\varphi_L$.
This shows that $|\im \varphi_L| \geq |\partial_T (L \cup N_T(L))| / \deg(T)$.
On the other hand, we have $|\im \varphi_L| \leq |\partial_{G_m} L|$.
This proves that
\begin{align} \label{eq:lemma_expansion_proof}
|\partial_{\bar T} L| \geq \frac{|\partial_T (L \cup N_T(L))|}{\deg(T)} \cdot
\end{align}

Let $\varepsilon \in [0, 1]$ and let $\varepsilon' = \frac{n+r}{(1 + \deg(T)) n} \varepsilon$.
Consider a subset $L$ of $V(\bar T)$ such that 
$|L| \leq \varepsilon' |V(\bar T)| / 2$.
Then, we have 
\begin{align}
|L \cup N_T(L)|
& \leq (1 + \deg(T)) |L| \\
& \leq (1 + \deg(T)) \varepsilon' \frac{|V(\bar T)|}{2} \\
& = \varepsilon \frac{|V(T)|}{2} \cdot
\end{align}
Therein, we used $|V(T)| = n+r$ and $|V(\bar T)| = n$.

Given that $|L \cup N_T(L)| \leq \varepsilon |V(T)| / 2$, we can use the Cheeger constant of $T$, which yields
\begin{align}
|\partial(L \cup N_T(L))| 
\geq \hepsilon(T) |L \cup N_T(L))|
\geq \hepsilon(T) |L| \cdot
\end{align}
Combining this with Eq.~\eqref{eq:lemma_expansion_proof}, we get
\begin{align}
|\partial_{\bar T} L| 
& \geq \frac{|\partial_T (L \cup N_T(L))|}{\deg(T)} \\
& \geq \frac{\hepsilon(T) |L|}{\deg(T)},
\end{align}
proving the lemma.
\end{proof}

\section{Circuit bounds}
\label{sec:circuit_bound}

Our main technical result (Theorem~\ref{theorem:separator_bound}) establishes a lower bound on the depth of Clifford circuits measuring a set of commuting Pauli operators $S_1, \dots, S_r$. 
In what follows, $\bar T$ denotes the contracted Tanner graph of the set of measured operators $S_1, \dots, S_r$.
In this section, we provide different applications of Theorem~\ref{theorem:separator_bound}.
The proof of this theorem is deferred to Section~\ref{sec:proofs}.

\subsection{Local circuits on qubits on a patch of $\Z^2$}

In what follows, we use the terminology of an $N$-qubit {\em patch} of $\Z^2$ to refer to a $\sqrt{N} \times \sqrt{N}$ square grid of qubits.

\begin{proposition} \label{prop:dense_2d_local_meas_circuit}
Let $C$ be a 2D $b$-local Clifford Pauli measurement circuit supported on an $N$-qubit patch of $\Z^2$.
Then, for all $\varepsilon \in [0, 1]$ we have
$$
\depth(C) \geq c \frac{\hepsilon(\bar T)}{b^3w(w-1)} \frac{\varepsilon n/2 - \sqrt{N}}{\sqrt{N}},
$$
for some constant $c$, where $\bar T$ and $w$ are respectively the contracted Tanner graph and maximum weight of the set of measured Pauli operators.
\end{proposition}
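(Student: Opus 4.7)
The plan is to apply Theorem~\ref{theorem:separator_bound} with $L$ equal to a vertical strip of the patch. Such an $L$ has a small edge boundary in the connectivity graph $G_{\con}$ because the circuit is $b$-local, while the $h_\varepsilon$-expansion of $\bar T$ forces many stabilizer generators to straddle the strip, yielding a large $n_{\cut}$.

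First I would take $L$ to be the leftmost $\ell$ columns of the $\sqrt{N}\times\sqrt{N}$ patch. Since every two-qubit operation acts on qubits within $\ell_\infty$-distance $b$, every edge of $\partial L$ in $G_{\con}$ is incident to one of the at most $b\sqrt{N}$ qubits lying within $b$ columns of the cut, and each such qubit has at most $(2b+1)^2-1 = O(b^2)$ neighbors in $G_{\con}$. This gives $|\partial L|\le C b^{3}\sqrt{N}$ for an absolute constant $C$. Next I would pick $\ell$ so that the number $|L_Q|$ of data qubits inside $L$ satisfies $\varepsilon n/2 - \sqrt{N}\le |L_Q|\le \varepsilon n/2$; such an $\ell$ exists because each column contains at most $\sqrt{N}$ data qubits, so $|L_Q|$ increases in steps of at most $\sqrt{N}$ as $\ell$ sweeps from $0$ to $\sqrt{N}$.

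Then I would use the local-expansion of $\bar T$ to lower-bound $n_{\cut}$. The inequality $|L_Q|\le \varepsilon n/2$ lets me apply $|\partial_{\bar T} L_Q| \ge h_\varepsilon(\bar T)\,|L_Q|$. Each stabilizer generator $S_i$ of weight at most $w$ that straddles the cut contributes at most $w(w-1)/2$ edges to $\partial_{\bar T} L_Q$ (one per pair of data qubits in its support lying on opposite sides), and the straddling $S_i$'s are automatically independent since they form a subset of the independent family $\{S_1,\dots,S_r\}$. Combining these observations,
\begin{equation*}
n_{\cut}\ \ge\ \frac{2\,|\partial_{\bar T} L_Q|}{w(w-1)}\ \ge\ \frac{2\,h_\varepsilon(\bar T)\bigl(\varepsilon n/2 - \sqrt{N}\bigr)}{w(w-1)}.
\end{equation*}
Substituting the two bounds into Theorem~\ref{theorem:separator_bound} then produces the announced estimate with $c = 1/(32 C)$.

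The argument is largely bookkeeping once Theorem~\ref{theorem:separator_bound} is in hand; the only delicate point is ensuring that the column-sweeping choice of $\ell$ lands $|L_Q|$ inside the window where the $(\alpha,\varepsilon)$-expansion is usable, which is the source of the $-\sqrt{N}$ term in the numerator. If one were willing to assume the stronger Cheeger constant $h(\bar T)$, the averaging over $\ell$ would be unnecessary, but for local-expander families, where $h_\varepsilon$ is the natural hypothesis, this pigeonhole step is essential and accounts for the main geometric loss.
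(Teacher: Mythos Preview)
Your proof is correct and follows essentially the same argument as the paper: choose $L$ to be a left prefix of columns so that the number of data qubits in $L$ lies in $[\varepsilon n/2-\sqrt{N},\,\varepsilon n/2]$, bound $|\partial L|\le O(b^3\sqrt{N})$ from $b$-locality, convert the $h_\varepsilon(\bar T)$ expansion into $n_{\cut}\ge \tfrac{2h_\varepsilon(\bar T)}{w(w-1)}(\varepsilon n/2-\sqrt{N})$ via the observation that each straddling $S_i$ accounts for at most $\binom{w}{2}$ boundary edges of $\bar T$, and then plug into Theorem~\ref{theorem:separator_bound}. The paper's proof differs only in cosmetic details of how the $b^3$ factor is bookkept.
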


\begin{proof}
Consider the patch with integer lattice coordinates $(x,y)$. 
For an integer $i$, the subset of qubits supported on the vertical line $x=i$ is denoted $Q_{x=i}$ and $Q_{x \leq u}$ is the union of all the sets $Q_{x=i}$ with $i \leq u$.

Let ${\bf D}$ be the set of data qubits of the circuit. By shifting $u$, we can find a position such that
\begin{align*}
\varepsilon n/2 - \sqrt{N}
\leq
|{\bf D} \cap Q_{x \leq u}|
\leq \varepsilon n/2.
\end{align*}
This value $u$ exists because when $u$ moves by one unit at most $\sqrt{N}$ data qubits change side.

Denote by $c_b = (2b+1)^2$ the number of vertices contained in a closed ball with radius $b$ in $\Z^2$.
Using the $b$-locality of the circuit, we see that the set $L$ of all the qubits included in $Q_{x \leq u}$ satisfies
\begin{align}
|\partial L|
\leq \sum_{i=u-b+1}^{u}  c_b |Q_{x=i}|
\leq c b^3 \sqrt{N},
\end{align}
for some constant $c$.

Moreover, using the Cheeger constant of the contracted Tanner graph, the number of operators $S_i$ with support on both $L$ and its complement is at least 
\begin{align}
n_{\cut}
& \geq \frac{2\hepsilon(\bar T)}{w(w-1)} |{\bf D} \cap L| \\
& \geq \frac{2\hepsilon(\bar T)}{w(w-1)} (\varepsilon n/2 - \sqrt{N}) \cdot
\end{align}
The term $2/(w(w-1))$ is present because each $S_i$ induces at most $\binom{w}{2}$ edges in the contracted Tanner graph.

Applying Theorem~\ref{theorem:separator_bound} with this set $L = Q_{x \leq u}$, we get
\begin{align}
\depth(C) 
\geq c' \frac{\hepsilon(\bar T)}{b^3w(w-1)} \frac{\varepsilon n/2 - \sqrt{N}}{\sqrt{N}},
\end{align}
for some constant $c'$.
\end{proof}

The following corollary is a straightforward application of Proposition~\ref{prop:dense_2d_local_meas_circuit}.

\begin{corollary} \label{cor:dense_2d_local_meas_circuit}
Let $(C_i)_{i \in \N}$ be a family of 2D local Clifford syndrome extraction circuits for a family of local-expander quantum LDPC codes with length $n_i \rightarrow \infty$ where $C_i$ acts on qubits in a patch of $\Z^2$.
Then, we have
\begin{align*}
\depth(C_i) \geq \Omega\left(\frac{n_i}{\sqrt{N_i}}\right),
\end{align*}
where $N_i$ is the total number of qubits used by the circuit. 
\end{corollary}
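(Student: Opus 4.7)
The plan is to read off the corollary as a direct asymptotic consequence of Proposition~\ref{prop:dense_2d_local_meas_circuit}, after verifying that each of the parameters appearing in the proposition is bounded by constants independent of $i$. First I would extract these constants from the hypotheses: since $(Q_i)_{i\in\N}$ is a family of quantum LDPC codes, there is a uniform constant $c_0$ bounding the weight $w_i \leq c_0$ of every measured stabilizer and the degree of the Tanner graph; since each $C_i$ is 2D local, there is a fixed locality radius $b$; and since the codes form a family of local-expander codes, there exist $\alpha,\varepsilon>0$ such that $h_{\varepsilon}(\bar T_i)\geq \alpha$ for all $i$. In particular the prefactor $c\,h_{\varepsilon}(\bar T_i)/(b^3 w_i(w_i-1))$ in Proposition~\ref{prop:dense_2d_local_meas_circuit} is bounded below by a positive constant that does not depend on $i$.

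Next I would apply Proposition~\ref{prop:dense_2d_local_meas_circuit} with this fixed $\varepsilon$ to obtain
\begin{align*}
\depth(C_i) \;\geq\; c'\cdot \frac{\varepsilon n_i/2 - \sqrt{N_i}}{\sqrt{N_i}}
\end{align*}
for some constant $c'>0$ independent of $i$. To convert this into the claimed $\Omega(n_i/\sqrt{N_i})$ bound, I would split into two cases depending on how $N_i$ compares to $n_i^2$. If $\sqrt{N_i} \leq \varepsilon n_i/4$, then $\varepsilon n_i/2 - \sqrt{N_i} \geq \varepsilon n_i/4$ and the displayed inequality immediately gives $\depth(C_i) \geq (c'\varepsilon/4)\,n_i/\sqrt{N_i}$. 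Otherwise $\sqrt{N_i} > \varepsilon n_i/4$, so $n_i/\sqrt{N_i} < 4/\varepsilon$ is bounded, and the trivial lower bound $\depth(C_i)\geq 1$ already majorizes a constant multiple of $n_i/\sqrt{N_i}$. Taking the worse of the two constants yields the corollary.

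Since Proposition~\ref{prop:dense_2d_local_meas_circuit} does all the real work, I do not anticipate a substantive obstacle; the only care point is ensuring the expansion parameter $\varepsilon$ used in invoking the proposition is the same one coming from the local-expander hypothesis, and confirming that the LDPC and 2D-local assumptions allow all the other constants ($w_i$, $b$, and the ball cardinality in $\Z^2$) to be absorbed into a single multiplicative constant uniform in $i$.
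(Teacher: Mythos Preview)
Your proposal is correct and follows exactly the route the paper takes: the paper states only that the corollary ``is a straightforward application of Proposition~\ref{prop:dense_2d_local_meas_circuit}'' without further detail, and your write-up supplies precisely that application. Your explicit case split on whether $\sqrt{N_i}\leq \varepsilon n_i/4$ is a nice touch that the paper omits but which is needed to make the $\Omega(n_i/\sqrt{N_i})$ conclusion airtight when $N_i$ could in principle be large.
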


For simplicity, we have considered here a circuit with $N$ qubits filling a $\sqrt{N} \times \sqrt{N}$ square grid. 
However Corollary~\ref{cor:dense_2d_local_meas_circuit} also holds for a family of circuits acting on a subset of qubits occupying a constant fraction of this $\sqrt{N} \times \sqrt{N}$ square grid because this is equivalent to replacing $N$ by $N' = \Omega(N)$.
Therefore, this corollary holds for qubits placed on a hexagonal or a triangular lattice.
The proof of Proposition~\ref{prop:dense_2d_local_meas_circuit} and the statement of  Corollary~\ref{cor:dense_2d_local_meas_circuit} can be readily modified to apply to rectangular patches $\ell \times \ell'$, where $\ell$ and $\ell'$ can grow at different paces in the code family.

\subsection{Local circuits on qubits on a patch of $\Z^D$}

By an $N$-qubit {\em patch} of $\Z^D$ we mean a $D$-dimensional cubic grid of qubits where the length of each side of a $D$-cube is $N^{1/D}$.
Proposition~\ref{prop:dense_2d_local_meas_circuit} immediately generalizes to $D$-dimensions as follows.

\begin{proposition} \label{prop:dense_D_dim_local_meas_circuit}
Let $D \geq 2$.
Let $C$ be a $D$-dimensional $b$-local Clifford Pauli measurement circuit acting on an $N$-qubit patch of $\Z^D$.
Then, for all $\varepsilon \in [0, 1]$ we have
$$
\depth(C) \geq c \frac{\hepsilon(\bar T)}{b^{D+1}w(w-1)} \frac{\varepsilon n/2 - N^{(D-1)/D}}{N^{(D-1)/D}} ,
$$
for some constant $c$, where $\bar T$ and $w$ are respectively the contracted Tanner graph and maximum weight of the set of measured Pauli operators.
\end{proposition}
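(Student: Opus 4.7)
The plan is to mimic the proof of Proposition~\ref{prop:dense_2d_local_meas_circuit} almost verbatim, simply replacing the planar slicing argument by a hyperplane slicing argument in $\Z^D$. First I would coordinatize the patch as $\{1,\ldots,N^{1/D}\}^D$ and, for an integer $u$, define the slab $Q_{x_1 \leq u}$ consisting of all qubits whose first coordinate is at most $u$. A single hyperplane $\{x_1=i\}$ contains at most $N^{(D-1)/D}$ qubits, so as $u$ increases by one the number of data qubits in the slab changes by at most $N^{(D-1)/D}$. Consequently there exists $u$ such that the set $L$ of all circuit qubits lying in $Q_{x_1 \leq u}$ satisfies
\begin{align*}
\varepsilon n/2 - N^{(D-1)/D}
\leq |{\bf D}\cap L|
\leq \varepsilon n/2,
\end{align*}
where ${\bf D}$ denotes the set of data qubits.

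Next I would bound the edge boundary $|\partial L|$ in the connectivity graph. A ball of radius $b$ in $\Z^D$ (infinity norm) contains $c_b = (2b+1)^D$ lattice points, so each qubit in the connectivity graph has degree at most $c_b-1$. Because only qubits lying in the slab $u-b+1 \leq x_1 \leq u$ can have edges crossing the hyperplane $x_1=u+1/2$, and there are at most $b \cdot N^{(D-1)/D}$ such qubits, we get
\begin{align*}
|\partial L| \leq c_b \cdot b \cdot N^{(D-1)/D} \leq c\, b^{D+1} N^{(D-1)/D},
\end{align*}
for some absolute constant $c$. This is the only place where the power of $b$ changes relative to the 2D case.

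The lower bound on $n_{\cut}$ follows exactly as in the 2D proof. Using that each measured operator $S_i$ of weight at most $w$ contributes at most $\binom{w}{2}$ edges to the contracted Tanner graph $\bar T$, the local Cheeger inequality applied to $L\cap {\bf D}$ (whose size is at most $\varepsilon n/2$) yields
\begin{align*}
n_{\cut} \geq \frac{2\hepsilon(\bar T)}{w(w-1)}\bigl(\varepsilon n/2 - N^{(D-1)/D}\bigr).
\end{align*}

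Plugging the two estimates into Theorem~\ref{theorem:separator_bound} gives the claimed lower bound, with the constant $c$ absorbing the numerical factors from the separator bound and from $c_b$. I do not anticipate any genuine obstacle: the only thing to verify carefully is the exponent of $b$ in the boundary bound, since one needs both a factor $(2b+1)^D$ for the degree of a vertex and an additional factor $b$ for the thickness of the slab of qubits that can have edges crossing the separating hyperplane, giving the $b^{D+1}$ appearing in the denominator. Everything else is a direct transcription of the two-dimensional argument.
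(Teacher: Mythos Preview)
Your proposal is correct and follows exactly the approach the paper has in mind: the paper does not even spell out a separate proof, merely stating that Proposition~\ref{prop:dense_2d_local_meas_circuit} ``immediately generalizes to $D$-dimensions,'' and your hyperplane-slicing argument with the $b^{D+1}$ boundary estimate is precisely that generalization.
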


We now state the $D$-dimensional analog of Corollary~\ref{cor:dense_2d_local_meas_circuit} which is an immediate application of this proposition.

\begin{corollary} \label{cor:dense_D_dim_local_meas_circuit}
Let $D \geq 2$.
Let $(C_i)_{i \in \N}$ be a family of $D$-dimensional local Clifford syndrome extraction circuits where $C_i$ acts on qubits in a patch of $\Z^D$ for a family of local-expander quantum LDPC codes with length $n_i \rightarrow \infty$.
Then, we have
\begin{align*}
\depth(C_i) \geq \Omega\left(\frac{n_i}{N_i^{(D-1)/D}}\right),
\end{align*}
where $N_i$ is the total number of qubits used by the circuit. 
\end{corollary}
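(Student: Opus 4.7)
The plan is to deduce this corollary as an essentially immediate consequence of Proposition~\ref{prop:dense_D_dim_local_meas_circuit}, in direct parallel with the 2D case. First I would unpack the hypotheses into constants that do not depend on $i$: because $(C_i)_{i \in \N}$ is a family of $D$-dimensional local circuits, there is a fixed locality parameter $b$ valid for every $C_i$; because the underlying codes form a quantum LDPC family, the maximum stabilizer weight $w$ is bounded by a fixed constant; and because the codes are local-expander, there exist constants $\alpha, \varepsilon > 0$ such that $\hepsilon(\bar T_i) \geq \alpha$ for all $i$, where $\bar T_i$ is the contracted Tanner graph of $C_i$. If the hypothesis of local expansion happens to be stated in terms of Tanner graphs rather than contracted Tanner graphs, I would invoke Lemma~\ref{lemma:Tanner_expander_implies_meas_expander} at this step to transfer expansion to $\bar T_i$, at the cost of rescaling $\alpha$ and $\varepsilon$ by constants depending only on $w$.

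Next I would substitute these uniform constants into Proposition~\ref{prop:dense_D_dim_local_meas_circuit} with this fixed $\varepsilon$, absorbing $\alpha, b, w$ into a single constant $K > 0$, to obtain
$$
\depth(C_i) \geq K \cdot \frac{\varepsilon n_i/2 - N_i^{(D-1)/D}}{N_i^{(D-1)/D}}.
$$
The only subtlety is that this raw expression is informative only when the numerator is $\Omega(n_i)$, which is not automatic since $N_i$ could in principle grow much faster than $n_i$. I would resolve this by splitting into two regimes. If $N_i^{(D-1)/D} \geq \varepsilon n_i/4$, then $n_i/N_i^{(D-1)/D}$ is bounded by the constant $4/\varepsilon$, so the claimed lower bound $\Omega(n_i/N_i^{(D-1)/D})$ is trivially implied by $\depth(C_i) \geq 1$. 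Otherwise $N_i^{(D-1)/D} < \varepsilon n_i/4$, so the numerator above is at least $\varepsilon n_i/4$, and the displayed inequality yields $\depth(C_i) \geq (K\varepsilon/4) \cdot n_i/N_i^{(D-1)/D}$, as required.

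I do not anticipate any substantial obstacle, since the real work is already packaged inside Proposition~\ref{prop:dense_D_dim_local_meas_circuit}. The only care needed is to check that the three hypotheses (geometric locality, LDPC, local expansion) each genuinely supply uniform constants across the family, and to handle the degenerate regime where $N_i$ is so large compared with $n_i$ that the claimed asymptotic bound collapses to the vacuous $\depth(C_i) \geq O(1)$. Both points are routine bookkeeping rather than genuine mathematical content.
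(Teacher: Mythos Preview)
Your proposal is correct and follows exactly the approach the paper intends: the paper states that Corollary~\ref{cor:dense_D_dim_local_meas_circuit} is ``an immediate application'' of Proposition~\ref{prop:dense_D_dim_local_meas_circuit} without spelling out any details. Your explicit handling of the uniform constants and the degenerate regime where $N_i^{(D-1)/D} \geq \varepsilon n_i/4$ is careful bookkeeping that the paper simply elides.
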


\subsection{Local circuits on qubits on any subset of $\Z^2$}

Here we consider circuits supported on an arbitrary subset of the square grid $\Z^2$.

\begin{proposition} \label{prop:general_2d_local_meas_circuit}
Let $C$ be a 2D $b$-local Clifford Pauli measurement circuit using a total of $N$ qubits on any subset of $\Z^2$.
Then, for all $\varepsilon \in [0, 1]$ we have
$$
\depth(C) \geq c \frac{\hepsilon(\bar T)}{w(w-1)b^6} \frac{\varepsilon^{3/2} n^{3/2}}{N},
$$
for some constant $c$, where $\bar T$ and $w$ are respectively the contracted Tanner graph and maximum weight of the set of measured Pauli operators.
\end{proposition}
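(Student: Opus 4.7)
The plan is to combine the sweep-line averaging technique of Proposition~\ref{prop:dense_2d_local_meas_circuit} with a bidirectional choice of sweep axis and a geometric lemma that lower-bounds the length of the valid-sweep interval in terms of the bounding box of the data qubits. Let $\mathbf{D}$ denote the set of data qubits.

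For $u \in \Z$ set $L^x_u := \{q \suchthat x(q) \leq u\}$ and $f_x(u) := |\mathbf{D} \cap L^x_u|$, and define $L^y_v$, $f_y(v)$ analogously. Fix a parameter $\alpha \in (0, 1/2)$. The sets
\begin{equation*}
V_x := \{u \suchthat f_x(u) \in [\alpha \varepsilon n, \varepsilon n/2]\}, \quad V_y := \{v \suchthat f_y(v) \in [\alpha \varepsilon n, \varepsilon n/2]\}
\end{equation*}
are intervals of consecutive integers. Writing $\phi_x(u) := \sum_{i=u-b+1}^{u} |Q_{x=i}|$ for the number of qubits in the $b$-column slab ending at $u$, the $b$-locality of $C$ gives $|\partial L^x_u| \leq c_b \phi_x(u)$ with $c_b = O(b^2)$. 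Each qubit is counted in at most $b$ such slabs, so $\sum_{u \in V_x} \phi_x(u) \leq bN$, and an averaging argument produces $u^\star \in V_x$ with $\phi_x(u^\star) \leq bN/|V_x|$. This yields a cut $L^x_{u^\star}$ of boundary $O(b^3 N/|V_x|)$ satisfying $\alpha \varepsilon n \leq |\mathbf{D} \cap L^x_{u^\star}| \leq \varepsilon n/2$; the analogous cut $L^y_{v^\star}$ has boundary $O(b^3 N/|V_y|)$.

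The crux is a geometric lower bound on $\max(|V_x|, |V_y|)$. Let $W_D \times H_D$ denote the bounding rectangle of $\mathbf{D}$. Because each lattice position contains at most one qubit, the area inequality $W_D H_D \geq n$ holds. On the other hand, the $(1/2 - \alpha)\varepsilon n$ data qubits responsible for the increment of $f_x$ over $V_x$ lie inside an $|V_x| \times H_D$ rectangle, so $|V_x| H_D \geq (1/2 - \alpha)\varepsilon n$; symmetrically $|V_y| W_D \geq (1/2 - \alpha)\varepsilon n$. Combining these three inequalities by choosing whichever of $|V_x|, |V_y|$ is larger forces $\max(|V_x|, |V_y|) \geq \Omega(\varepsilon^{1/2}\sqrt{n})$ up to polynomial factors in $b$.

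Finally, applying Theorem~\ref{theorem:separator_bound} to the better of the two cuts, combined with the local-expansion lower bound $n_{\cut} \geq \frac{2 \hepsilon(\bar T)}{w(w-1)} |\mathbf{D} \cap L|$ (valid because $|\mathbf{D} \cap L| \leq \varepsilon n/2$), yields the claimed depth inequality after optimizing $\alpha$. The main obstacle is the geometric step: in contrast to the patch case of Proposition~\ref{prop:dense_2d_local_meas_circuit}, where the slab-cut width is automatically $\sqrt{N}$, here the $N$ qubits may be arranged arbitrarily in $\Z^2$, so one must jointly exploit the bounding-box area inequality and the bidirectional choice of sweep axis to control $\max(|V_x|, |V_y|)$. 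The remaining steps are routine adaptations of the patch argument, with the various $b$-factors absorbed into the stated $b^6$ denominator.
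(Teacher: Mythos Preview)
Your geometric lemma---the assertion that $\max(|V_x|, |V_y|) \geq \Omega(\varepsilon^{1/2}\sqrt{n})$---is false, and it is the load-bearing step. Place the $n$ data qubits in an ``L'' shape: $n/2$ qubits at $(0,1),\dots,(0,n/2)$ and $n/2$ at $(1,0),\dots,(n/2,0)$. Then $f_x$ jumps from $0$ to $n/2$ at the single column $x=0$, so for every $\alpha\in(0,1/2)$ and every $\varepsilon<1$ the set $V_x$ is empty, and for $\varepsilon=1$ it is the singleton $\{0\}$; by symmetry the same holds for $V_y$. Hence $\max(|V_x|,|V_y|)\le 1$ regardless of $n$, and your averaging step yields only $|\partial L|\le O(b^3 N)$, collapsing the final depth bound to a constant rather than the required $\Omega(\sqrt{n})$. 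Note that your three displayed inequalities are all \emph{satisfied} in this example (with $W_D=H_D=n/2$ one has $W_DH_D=n^2/4\ge n$ and, for $\varepsilon=1$, $|V_x|H_D=n/2\ge(1/2-\alpha)n$), yet the conclusion fails: the combination you propose cannot work because you never obtain an \emph{upper} bound on $W_D H_D$, which is what would be needed to force $|V_x||V_y|$ large. The premise that the increment of $f_x$ across $V_x$ equals $(1/2-\alpha)\varepsilon n$ is also unjustified in general, since $f_x$ may leap over the interval $[\alpha\varepsilon n,\varepsilon n/2]$ in a single step.

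The structural issue is that axis-aligned half-plane cuts are too rigid for an arbitrary subset of $\Z^2$; good cuts in irregular layouts need not be half-planes at all. The paper's proof therefore abandons sweep lines and instead invokes the Lipton--Tarjan planar separator theorem (in the refined form of~\cite{lipton1980applications}) applied to the $b$-thickening $H_b$ of the qubit set. This guarantees, for \emph{any} planar layout, a decomposition into connected pieces of size at most $\alpha|V(H_b)|$ by removing $O(\sqrt{|V(H_b)|/\alpha})$ vertices; pieces are then greedily aggregated until $\Theta(\varepsilon n)$ data qubits are enclosed, and the removed vertex set controls $|\partial L|$. In the ``L'' example a single vertex near the corner already separates the layout, producing precisely the small-boundary cut that your averaging argument cannot locate.
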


\begin{proof}
To construct the subset $L$ of Proposition~\ref{theorem:separator_bound}, we use a separation theorem~\cite{lipton1979separator}. 
We use the variant of the separation theorem proposed in Theorem~3 of \cite{lipton1980applications} which provides a decomposition of a planar graph $(V, E)$ into connected components $(V_i, E_i)$ with size $|V_i| \leq \alpha |V|$ by removing at most $c\sqrt{|V|/\alpha}$ vertices.

Denote by $c_b=(2b+1)^2$ the size of a closed ball with radius $b$. Recall that we use the infinity distance.
We apply the aforementioned separation theorem with $\alpha = \frac{\varepsilon n}{4 c_b N}$, to the subgraph $H_b$ of the square grid containing all the nodes at distance less or equal to $b$ from any of the sites which contain a qubit.
Let $V_1, \dots, V_m$ be the vertex sets of the subgraphs of the decomposition and let $V_0$ be the set of removed vertices.
For $i=1, \dots, m$, we have 
\begin{align} \label{eqn:planar_separation_subset_size}
|V_i| \leq \alpha |V(H_b)| \leq \frac{\varepsilon n}{4},
\end{align}
because the graph $H_b$ contains at most $c_b N$ vertices.

Denote $V_i' = \cup_{j=1}^i V_j$ and let ${\bf D} \cap V_i'$ be the set of data qubits of the circuit included in $V_i'$.
Select the minimum index $i_0$ such that $|{\bf D} \cap V_{i_0+1}'| > \varepsilon n/2$.
Then, we have
\begin{align} \label{eqn:data_qubits_in_L}
\varepsilon n/4
\leq
|{\bf D}  \cap V_{i_0}'|
\leq \varepsilon n/2 \cdot
\end{align}
The upper bound is clear by the definition of $i_0$.
The lower bound is due to the fact that discarding $V_{i_0+1}$ removes at most $\varepsilon n/4$ data qubits based on Eq.~\eqref{eqn:planar_separation_subset_size}.

Let $L$ be the set of all the qubits included in $V_{i_0}'$.
Let us derive an upper bound on the size of $\partial L$ in the connectivity graph.
If $\{u, v\} \in  \partial L$ with $u \in L$ and $v \notin L$, there exists a path with length $\leq b$ joining $u$ and $v$ in the square grid and this path contains at least one vertex $w$ of the removed set $V_0$.
This induces a map from $\partial L$ to $V_0$ and each vertex of $V_0$ has at most $c_b^2$ preimages $\{u, v\}$ because $u$ and $v$ are at distance at most $b$ from $w$.
Therefore, we have
\begin{align}
|\partial L|
& \leq c_b^2 |V_0| \\
& \leq c c_b^2 \sqrt{\frac{|V(H_b)|}{\alpha}} \\
& \leq c c_b^2 \sqrt{\frac{4 c_b^2 N^2}{\varepsilon n}} \\
& \leq 2 c c_b^3 \frac{N}{\sqrt{\varepsilon n}}, \label{eq:upper_bound_partialL}
\end{align}
for some constant $c$.

Moreover, like in the proof of Proposition~\ref{prop:dense_2d_local_meas_circuit}, using the Cheeger constant of the contracted Tanner graph, we get 
\begin{align}
n_{\cut}(L) \geq \frac{2}{w(w-1)} \hepsilon(\bar T) |{\bf D} \cap L|
\geq \frac{\hepsilon(\bar T)}{2w(w-1)} \varepsilon n \cdot \label{eq:lower_bound_ncut}
\end{align}

We conclude by applying Theorem~\ref{theorem:separator_bound} with this set $L$ using Eq.~\eqref{eq:upper_bound_partialL} and Eq.~\eqref{eq:lower_bound_ncut} which yields
\begin{align}
\depth(C) 
& \geq c' \frac{\varepsilon^{3/2} \hepsilon(\bar T)}{w(w-1)b^6} \frac{n^{3/2}}{N} \cdot
\end{align}
\end{proof}

Applying this result to syndrome extraction circuits for families of local-expander quantum LDPC codes, we obtain the following result.

\begin{corollary} \label{cor:general_2d_local_meas_circuit}
Let $(C_i)_{i \in \N}$ be a family of 2D local Clifford syndrome extraction circuits for a family of local-expander quantum LDPC codes with length $n_i \rightarrow \infty$ acting on qubits on any subset of $\Z^2$.
Then, we have
\begin{align*}
\depth(C_i) \geq \Omega\left(\frac{n_i^{3/2}}{N_i}\right),
\end{align*}
where $N_i$ is the total number of qubits used by the circuit.
\end{corollary}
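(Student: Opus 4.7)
The plan is to derive this corollary as an essentially immediate consequence of Proposition~\ref{prop:general_2d_local_meas_circuit}, the main work being to identify which quantities appearing in that proposition are constants uniform across the code family and which grow with $i$.

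First, I would extract the uniform constants from the hypotheses. Since $(C_i)_{i \in \N}$ is a family of syndrome extraction circuits for a family of \emph{local-expander} quantum LDPC codes, by definition there exist constants $\alpha > 0$ and $\varepsilon \in (0,1]$, independent of $i$, such that the contracted Tanner graph $\bar T_i$ of the $i$-th code satisfies $\cheeger_\varepsilon(\bar T_i) \geq \alpha$. The LDPC hypothesis further guarantees a constant $c_{\text{LDPC}}$ bounding the maximum stabilizer weight, so the weight $w_i$ appearing in Proposition~\ref{prop:general_2d_local_meas_circuit} obeys $w_i \leq c_{\text{LDPC}}$ for all $i$. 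Finally, the 2D local hypothesis on $(C_i)$ provides a uniform constant $b_0$ such that each $C_i$ is $b_0$-local on some subset of $\Z^2$, so the locality parameter $b$ in Proposition~\ref{prop:general_2d_local_meas_circuit} is uniformly bounded.

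Next, I would plug everything into Proposition~\ref{prop:general_2d_local_meas_circuit} with this fixed $\varepsilon$. The lower bound of that proposition is
\begin{align*}
\depth(C_i) \geq c \, \frac{\cheeger_\varepsilon(\bar T_i)}{w_i(w_i-1)\,b^6} \, \frac{\varepsilon^{3/2} n_i^{3/2}}{N_i}.
\end{align*}
Under the hypotheses above, the prefactor $c\,\cheeger_\varepsilon(\bar T_i)\varepsilon^{3/2}/(w_i(w_i-1)b^6)$ is bounded below by an absolute constant depending only on $\alpha$, $\varepsilon$, $c_{\text{LDPC}}$, and $b_0$. Hence the bound simplifies to $\depth(C_i) \geq \Omega(n_i^{3/2}/N_i)$, which is exactly the claim.

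There is no real obstacle here; the only thing to be careful about is checking that every parameter that enters the denominator of Proposition~\ref{prop:general_2d_local_meas_circuit} is indeed uniformly bounded in $i$, which is what the definitions of a local-expander family, a quantum LDPC family, and a 2D local circuit family each provide. All the combinatorial content---the planar separator argument and the Cheeger bound on the number of cut stabilizers---has already been absorbed into Proposition~\ref{prop:general_2d_local_meas_circuit}, so the corollary reduces to bookkeeping on constants.
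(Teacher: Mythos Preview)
Your proposal is correct and matches the paper's approach exactly: the paper simply states that Corollary~\ref{cor:general_2d_local_meas_circuit} follows by applying Proposition~\ref{prop:general_2d_local_meas_circuit} to a local-expander quantum LDPC family, which is precisely the bookkeeping on uniform constants that you carry out.
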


The bound obtained in this section for local circuits on a general subset of $\mathbb{Z}^2$ is weaker than the bound obtained for local circuits on a patch of $\mathbb{Z}^2$ in Proposition~\ref{prop:dense_2d_local_meas_circuit}.
We do not know if this particular bound is tight -- i.e. if there is a smaller-depth syndrome using 2D local extraction circuit acting on qubits placed on a subset of $\mathbb{Z}^2$. 
However, we note that bounded-depth circuits using a linear number of ancilla qubits are still excluded.

\subsection{Generalizations}
\label{sec:generalizations}

Our results can be immediately generalized to other classes of quantum circuits.

First, above we assumed that the Pauli operators measured in a Pauli measurement circuit are independent, but our results directly apply to the measurement of non-independent operators by considering an independent subset.

For simplicity, we assume that circuits involve operations acting on at most two qubits. We can generalize our results to Clifford circuits including Clifford gates and Pauli measurements supported on up to $w$ qubits.
In this case, the connectivity graph of the circuit is a hypergraph defined in such a way that each gate is supported on a hyperedge and the bound of Theorem~\ref{theorem:separator_bound} becomes
$$
\depth(C) \geq \frac{n_{\cut}}{32c_e |\partial L|},
$$
where $c_e = 2 \lfloor w/2 \rfloor$.
Therein, $\partial L$ denotes the set of hyperedges containing at least one vertex in each set $L$ and $L^{\mathsf{c}}$.
The value of the constant $c_e$ corresponds to the maximum increase of entanglement entropy induced by a $w$-qubit unitary gate (Proposition~2 in~\cite{marien2016entanglement}) which is central in the proof of Proposition~\ref{prop:qmi_rate}.

The results of this article also hold for some non-Clifford circuits. Using the exact same proof technique, one can extend our results to the case of quantum circuits made with preparations of $\ket 0$ or $\ket +$, unitary gates acting on up to $w$ qubits, and single-qubit measurements in the computational basis.
However, in this class of circuits, we do not allow classical communication and classically-controlled Pauli operations.
We leave the case of unitary circuits with unconstrained classical communication open.

\section{Proof of Theorem~\ref{theorem:separator_bound}}
\label{sec:proofs}

This section provides the proof of Theorem~\ref{theorem:separator_bound}.

\subsection{Proof strategy}
\label{subsec:proof_strategy}

In this subsection, we sketch the overall strategy for our proof before providing and proving the formal Lemmas which comprise the rigorous proof in the following sections.

We consider a Pauli measurement circuit $C$ which returns the outcomes of the measurement of a set of Pauli operators $S_1, \dots, S_r$.
Our strategy to bound the depth of $C$ is to study correlations introduced by the circuit across a partition of the qubits into two subsets, $L$ and $R$.
On the one hand, because the circuit implements a non-trivial operation, the two sides of the partition are generally correlated at the end of the circuit.
We can use this to derive a lower bound on the amount of correlation created by the circuit.
On the other hand, building these correlations with local gates takes time because we expect that each gate introduces a bounded amount of correlation.
We can use this to derive an upper bound on the amount of correlation created by a circuit as a function of its depth.
Combining both arguments, we obtain a lower bound on the depth of the circuit.

To apply this strategy, we need a protocol which uses the circuit $C$ and a measure of correlation that allows us to readily quantify correlations to build both the lower bound and the upper bound.
In what follows we consider a number of relevant aspects which lead us to such a protocol and measure.

{\bf Initial state.}
To avoid the presence of correlation in the initial state of the system, we fix the input state to be $\ket 0^{\otimes n}$ for the $n$ data qubits.

{\bf Measure of correlation.}
Different notions can be used to capture correlations between the two parts of the partition $L \cup R$ such as the classical mutual information, the entanglement entropy or the quantum mutual information.
Our starting point is to study the classical mutual information between the measurement outcomes extracted in each side of the partition. 
We will refine this notion throughout this section.

{\bf Repeated circuit.}
Consider as an example the Pauli measurement circuit of Fig.~\ref{fig::long_range_MXX} which measures the operator $XX$ supported on the endpoints of a line of qubits, with all qubits initially in the $\ket{0}$ state.
Pick the partition $L \cup R$ of the qubits where $L$ contains the first four qubits and $R$ contains the three remaining qubits.
The output of the circuit is $o = b_2 \oplus b_3 \oplus b_4 \oplus b_5 \oplus b_6$.
Since the input state is fixed to $\ket 0^2$, the value of $o$ will be a uniform random bit, and there is therefore no correlation between the outcomes in $L$ and $R$. 
However, if we run the measurement circuit a second time, we obtain the same output $o = b_2' \oplus b_3' \oplus b_4' \oplus b_5' \oplus b_6'$, even though each individual measurement outcome $b_i'$ is a uniform random bit independent of the previous outcomes $b_j$.
Then, because the circuit output is fixed to $o$, there is one bit of classical mutual information between the outcomes observed on each side during the second run, that is
\begin{align*}
I(b_2', b_3', b_4'; b_5', b_6') = 1 \cdot
\end{align*}
This example encourages us to consider the circuit $C \circ C$ and to use the notion of classical mutual information between measurement outcomes across the two sides of the the partition to capture correlations
\begin{align*}
I(O_L^{(2)}; O_R^{(2)}),
\end{align*}
where $O_L^{(2)}$ (respectively $O_R^{(2)}$) denotes the set of all measurement outcomes extracted on the qubits of $L$ (respectively $R$) during the second run of $C$.

{\bf Highlighting correlations with errors:}
The example considered above allowed us to track the build up of correlations because the output of the simple measurement circuit depends on some measurement outcomes $b_i$ extracted on each side of the partition.
However, if a Pauli operator is measured using a single ancilla qubit, or a set of ancillas that are all supported on the same side of the partition, the argument breaks down.
To highlight the correlations present in the system and that cannot be detected by looking only at the outcomes, we introduce a layer consisting of a random Pauli error $E$ acting on the data qubits between the two runs of $C$ and we consider the circuit $C \circ E \circ C$.
If a measured operator $S_i$ is supported on both sides of the partition, a Pauli error acting on one side may flip the outcome observed on the other side.
Therefore, we measure the correlations using the mutual information
\begin{align*}
I(O_L^{(2)}, E_L; O_R^{(2)}, E_R) \cdot
\end{align*}
where $E_L$ and $E_R$ are the restrictions of the Pauli error $E$ to each subset of qubits.

{\bf Discounting correlations due to classical communication:}
We have argued that we can derive a lower bound on the mutual information 
$I(O_L^{(2)}, E_L; O_R^{(2)}, E_R)$ between the two sides of a partition of the qubits based on the stabilizer generators which are supported on both sides of the partition. 
To apply our strategy we also need an upper bound on the amount of correlation as a function of the circuit depth.
Unfortunately, the quantity $I(O_L^{(2)}, E_L; O_R^{(2)}, E_R)$ is not an appropriate measure of correlation to derive a non-trivial upper bound.
This is because the Clifford circuits that we consider include classical communication through classically-controlled Pauli operations and these operations can boost the mutual information as the illustrated by the following example.

One can create $n$ bits of mutual information in bounded depth for a partition $L \cup R$ of $2n$ qubits using only single qubit operations and classical communication as follows.
Group $2n$ qubits into $n$ pairs, with each pair containing one qubit of $L$ in the state $\ket +$ and one qubit of $R$ in the state $\ket 0$.
First, we measure all the $\ket +$ states in the computational basis and if the outcome is non-trivial we apply an $X$ gate to the other qubit of the pair.
Then, we measure all the qubits in the computational basis.
This constant-depth circuit produces one bit of mutual information for each pair of qubits.
This proves that, using classical mutual information as a measure of correlation,
we would not be able to derive a sufficiently good upper bound on the amount of correlations as a function of the circuit depth.
The entanglement entropy, which is a standard measure of correlation in a bipartite quantum system suffers from the same issue.

To avoid this issue of classical communication introducing mutual information we first transform the circuit $C$ into a circuit $C'$ (where $C'$ is guaranteed to have the same action and a similar depth to $C$) by postponing the measurements and the classical controlled Pauli operations until the end of the circuit.
Moreover, to avoid the increase of mutual information by the controlled-Pauli operations at the end of the first run of $C'$, we use the conditional mutual information 
\begin{align*}
I(O_L^{(2)}, E_L; O_R^{(2)}, E_R | O^{(1)}),
\end{align*}
conditioned on the outcomes of the first run of $C'$.

This notion of correlation is adapted to obtain both a non-trivial lower bound and a non-trivial upper bound on the amount of correlation between two subsets of qubits $L$ and $R$ and leads to the bound of Theorem~\ref{theorem:separator_bound}.

The proof of Theorem~\ref{theorem:separator_bound} relies on Lemma~\ref{lemma:entropic_lower_bound} and Lemma~\ref{lemma:entropic_upper_bound} proven in the following subsections which provide a lower bound and an upper bound on the conditional mutual information $I(O^{(2)}_L, E_L; O^{(2)}_R, E_R | O^{(1)})$.
Instead of using the circuit $C \circ E \circ C$, we use the version of this circuit obtained through the circuit transformations described in Lemma~\ref{lemma:circuit_transformation} and Lemma~\ref{lemma:circuit_transformation_double_meas}.

\subsection{Circuit transformation}
\label{subsec:circuit_transformation}

Instead of working with the Clifford circuit $C$, we build a modified version $C'$ of $C$ which implements the same Pauli measurements on the data qubits but which is easier to analyze.

\begin{figure}[t]
  \centering
  \includegraphics[scale=.7]{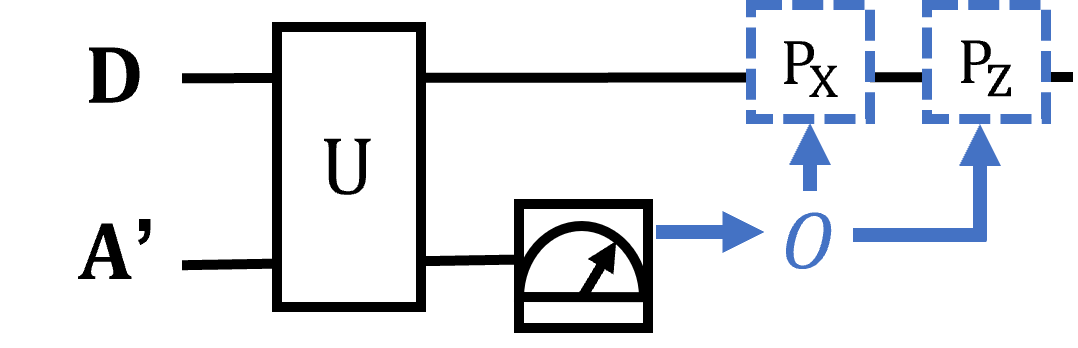}
  \caption{
  Transformed Clifford circuit obtained in Lemma~\ref{lemma:circuit_transformation}.
  Any Clifford circuit can be transformed into such a four-stage circuit made with layers of unitary Clifford gates, followed by a layer of single-qubit measurements and two layers of classically-controlled Pauli operations.
  In this figure, each wire represents a block of multiple qubits, $U$ is a product a single-qubit and two-qubit Pauli operations, the blocks $P_X$ and $P_Z$ represent a layer of classically-controlled single-qubit Paulis.
  }
  \label{fig:proof_circuit_transformation_1}
\end{figure}

\begin{lemma}\label{lemma:circuit_transformation}
Let $C$ be a Clifford circuit implementing an operation on a set of data qubits {\bf D} using a set of ancilla qubits {\bf A}.
Then, there exists a circuit $C'$ represented in Fig.~\ref{fig:proof_circuit_transformation_1} which implements the same operation as $C$ on {\bf D},
using a set of ancilla qubits ${\bf A'}$, made with the following steps.
\begin{enumerate}
\item The preparation of all the ancilla qubits in ${\bf A'}$ is in the state $\ket 0$ or $\ket +$.
\item A unitary Clifford circuit $U$ with $\depth(U) \leq 4\depth(C)$ built from single-qubit and two-qubit unitary Clifford gates.
\item A layer of single-qubit measurements of every qubit in ${\bf A'}$ with outcome set ${O}$, followed by classically-controlled $X$ and $Z$ on ${\bf D}$ depending on $O$.
\end{enumerate}
Moreover, any subset $L \subset {\bf D} \cup {\bf A}$ of qubits of $C$ maps onto a subset $L' \subset {\bf D} \cup {\bf A'}$ such that each layer of the circuit $C'$ contains at most $|\partial L|$ gates supported on both $L'$ and its complement.
\end{lemma}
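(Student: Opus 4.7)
The plan is to apply the principle of deferred measurement systematically to $C$, pushing every Pauli measurement and every classical control all the way to the end, and filling the middle with a unitary Clifford circuit. Concretely, I replace each $1$- or $2$-qubit Pauli measurement of an operator $P$ in $C$ by a coherent measurement gadget: a basis-change Clifford $V$ chosen so that $V P V^{\dagger}$ acts as a single-qubit $Z$ on some support qubit $q$, followed by a CNOT from $q$ to a fresh ancilla $a_P$ prepared in $\ket{0}$, followed by $V^{\dagger}$. The ancilla $a_P$ carries the bit that would have been the outcome of $P$, and the data qubits are left in the correct post-measurement state once $a_P$ is eventually measured in the $Z$ basis.

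I next eliminate every classically-controlled Pauli. For each classically-controlled Pauli in $C$ I propagate it forward past every subsequent unitary Clifford and coherent measurement gadget by Heisenberg conjugation, which turns it into another Pauli of possibly larger support but with the same classical condition. After this propagation, only one classically-controlled Pauli per original operation remains, sitting immediately before the final simultaneous measurement of ${\bf A}'$. I split these remaining Paulis into their ${\bf D}$ and ${\bf A}'$ parts: the ${\bf A}'$ parts commute or anticommute with the $Z$-basis measurement and therefore collapse by frame update into XORs of the classical output bits (which the model allows since classical communication is free), while the ${\bf D}$ parts become a final $P_X$ layer followed by a $P_Z$ layer of classically-controlled Paulis on the data qubits, producing the shape required by the lemma.

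For the depth, operations that share no qubits in a layer of $C$ produce subcircuits of $C'$ acting on disjoint qubits, so they remain in parallel. The heaviest per-operation contribution is a $2$-qubit measurement gadget with three sublayers ($V$, CNOT, $V^{\dagger}$), while a Clifford gate contributes $1$ sublayer, a preparation is pushed to the initial layer, and a propagated classically-controlled Pauli contributes nothing inside $U$; allowing one extra buffer sublayer to absorb scheduling conflicts between gadgets sharing an ancilla yields $\depth(U) \leq 4\,\depth(C)$. For the locality claim, I place each fresh ancilla $a_P$ on the side of $L'$ that contains $q$, the support qubit used in the coherent copy. Non-crossing operations stay non-crossing in $C'$; a crossing $2$-qubit Pauli measurement of $P$ yields crossings only in its $V$ and $V^{\dagger}$ sublayers (the coherent-copy CNOT is internal to one side of $L'$), while crossing Cliffords of $C$ are kept verbatim. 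Since propagating classically-controlled Paulis introduces no new two-qubit gate in $U$, each sublayer of $C'$ has at most $|\partial L|$ crossings. The main obstacle is ensuring that propagating a classically-controlled Pauli through a coherent measurement gadget does not secretly produce a crossing, which follows from the ancilla-side convention above: the gadget only enlarges the Pauli onto $a_P$, which lies on the same side of $L'$ as the original support qubit of $P$.
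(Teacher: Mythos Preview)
Your argument is correct and follows essentially the same route as the paper. The paper's proof differs from yours only in the order of operations: it first replaces two-qubit measurements by a gadget that still contains an explicit single-qubit measurement (Fig.~\ref{fig:circuit_simplification}), then duplicates ancillas so each is measured exactly once, then pushes classically-controlled Paulis forward past Cliffords (by conjugation) and past single-qubit measurements (by the anticommutation/XOR rule of Fig.~\ref{fig:circuit_identities_1}), and only at the very end postpones the now-independent single-qubit measurements. You instead make the interior of the circuit unitary from the outset by replacing every measurement with a coherent copy onto a fresh ancilla, so that forward propagation is pure Heisenberg conjugation; the XORs the paper produces via Fig.~\ref{fig:circuit_identities_1} reappear in your construction when you absorb the ${\bf A}'$-components of the propagated Paulis into the classical output bits. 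Two small remarks: your ``extra buffer sublayer'' is not needed---your gadgets already have depth three and hence $\depth(U)\le 3\depth(C)\le 4\depth(C)$; the paper's factor four simply counts the ancilla preparation inside its gadget. And ``pushing preparations to the initial layer'' tacitly assumes each mid-circuit re-preparation of an ancilla introduces a fresh qubit (otherwise the interior of $C'$ is not unitary); the paper makes this step explicit, and you should too.
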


Therein, the notation $\partial L$ refers to boundary edges relative to the connectivity graph of the circuit $C$.

\begin{proof}
We start with the circuit $C$ and carry out a sequence of modifications which do not change the action of the circuit until we reach $C'$.
First we eliminate two-qubit measurements.
Each joint measurement is replaced by a pair of two-qubit gates and a single-qubit measurements using the identify in Fig.~\ref{fig:circuit_simplification}.
This transformation increases the circuit depth by at most a factor four.

Given $L$, we initially define $L' = L$.
If the joint measurement is supported on two qubits of $L$ (respectively $L^{\mathsf{c}}$), the new ancilla qubit is added to $L'$ (respectively $L'^{\mathsf{c}}$).
If the joint measurement is supported on an edge of $\partial L$, then we add the ancilla qubit to $L'$.
By definition of $L'$, each layer of the circuit after this transformation contains at most $|\partial L|$ gates which act non-trivially on both $L'$ and its complement.

\begin{figure}[t]
  \centering
  \includegraphics[scale=.6]{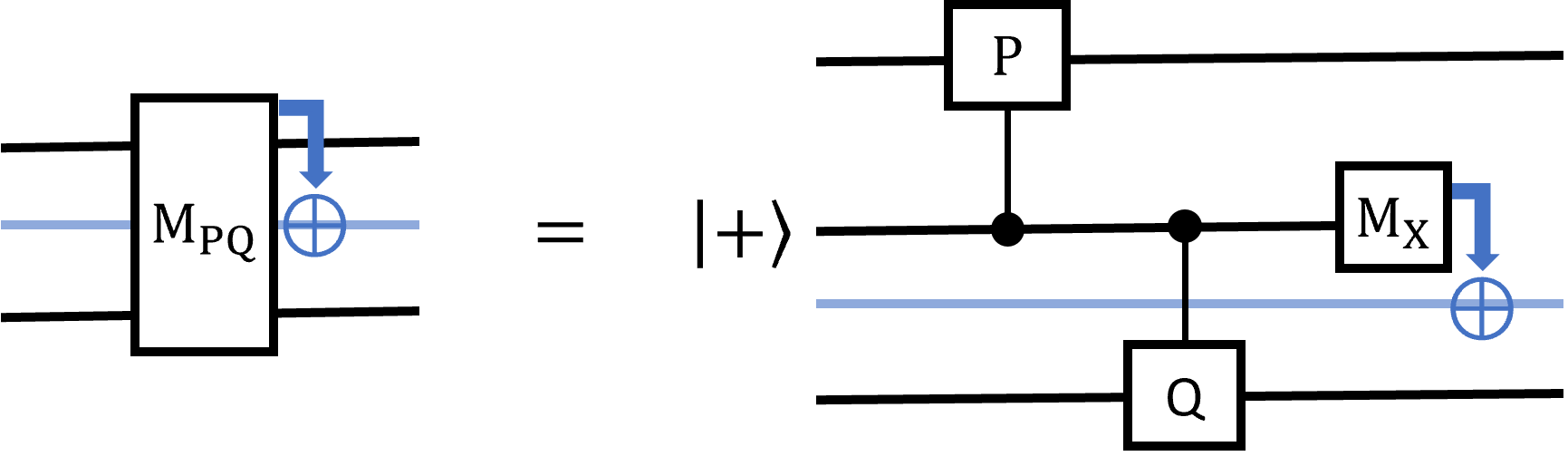}
  \caption{
  Simulation of the measurement of a two-qubit Pauli operator $PQ$ using unitary Clifford gates, single-qubit measurements and an ancilla qubit.
  }
  \label{fig:circuit_simplification}
\end{figure}

Then, each ancilla qubit $q_i$ is replaced by a set of $s_i$ ancilla qubits $q_{i, 1}, \dots, q_{i, s_i}$ where $s_i$ is the number of times $q_i$ is measured in the original circuit.
These ancilla qubits are all initialized in $\ket +$ and at most one of them can be involved in a non-trivial operation in a given time step.
Let $t_{i, 1}, \dots, t_{i, i_s}$ be the time steps where $q_i$ is measured and denote $t_{i, 0} = 0$.
The ancilla qubit $q_{i, j}$ plays the role of the qubit $q_i$ in the original circuit during all the time steps from $t_{i, j-1}+1$ to $t_{i, j}$ (included).
This transformation guarantees that each ancilla qubit is measured exactly once.
If the original ancilla qubit $q_i$ is in $L$ (respectively $L^{\mathsf{c}}$), we assign all its copies $q_{i, 1}$ to $L'$ (respectively $L'^{\mathsf{c}}$).

The number of gates supported on both $L'$ and its complement within a layer of the circuit is  unchanged during this transformation. This is because only one of the copies of an ancilla qubit is used at a given time step.

\begin{figure}[t]
  \centering
  \includegraphics[scale=.5]{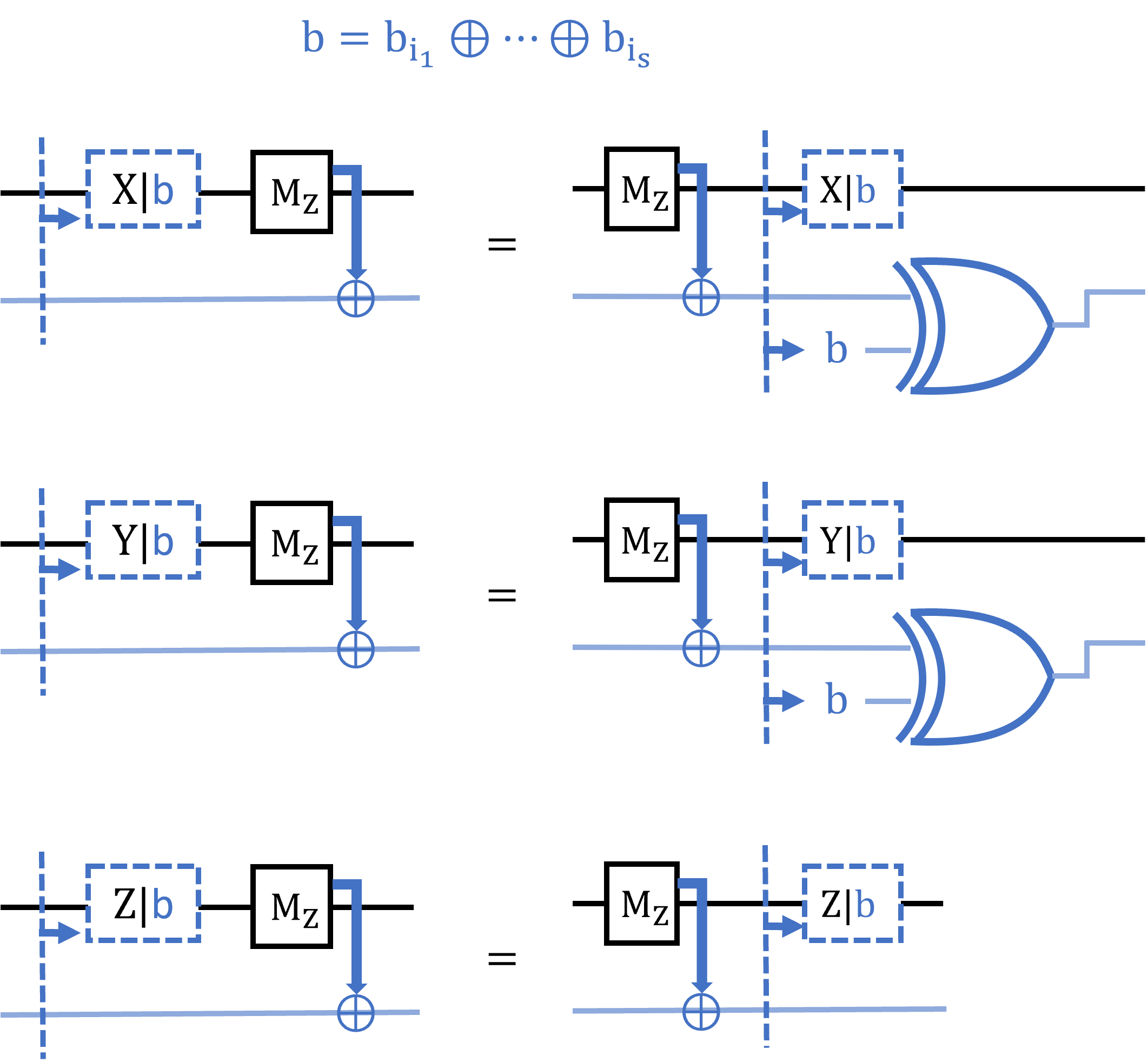}
  \caption{
  Effect of the commutation of a Pauli measurement and a classically-controlled Pauli operation.
  If the Pauli operation anti-commutes with the measured operator, swapping them introduces a XOR of the measurement outcome with the condition.
  }
  \label{fig:circuit_identities_1}
\end{figure}

\begin{figure}[t]
  \centering
  \includegraphics[scale=.5]{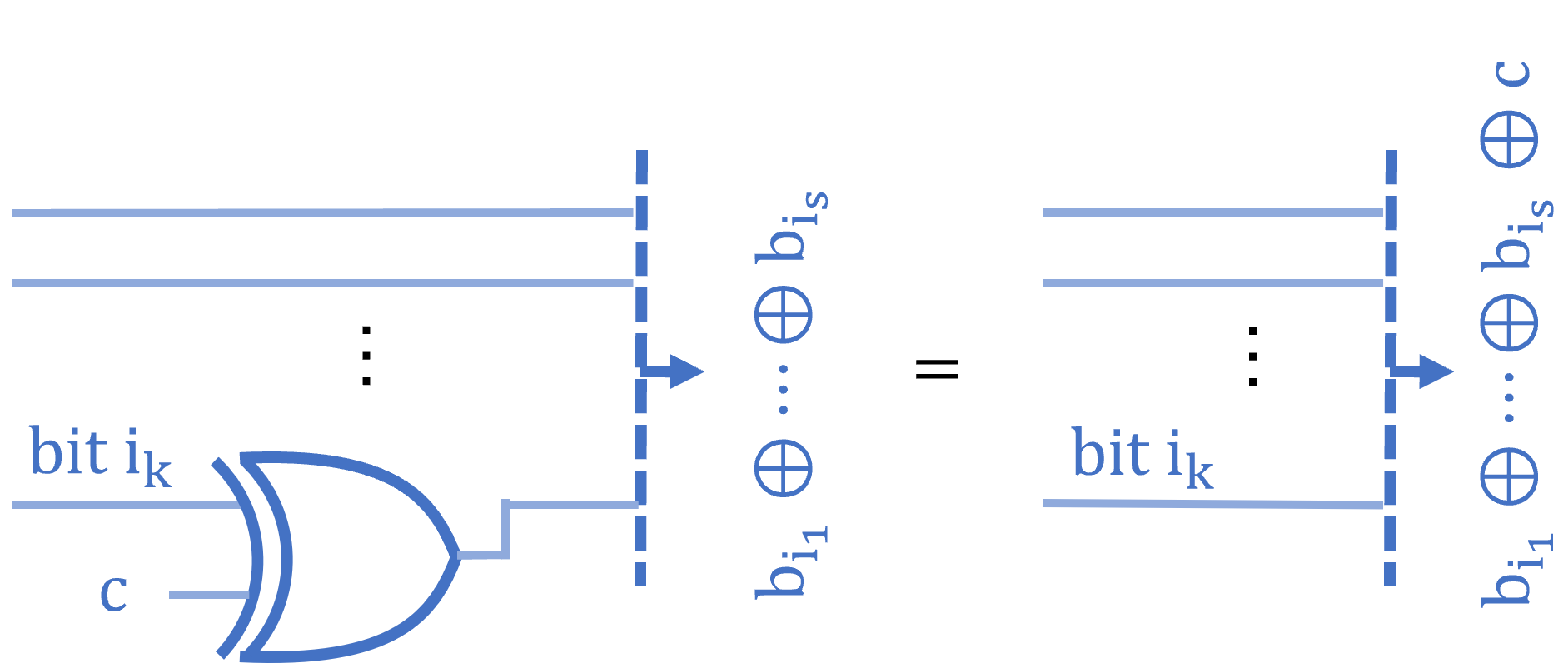}
  \caption{
  Effect of the XOR of a stored bit on a circuit output.
  }
  \label{fig:circuit_identities_2}
\end{figure}

\begin{figure}[t]
  \centering
  \includegraphics[scale=.5]{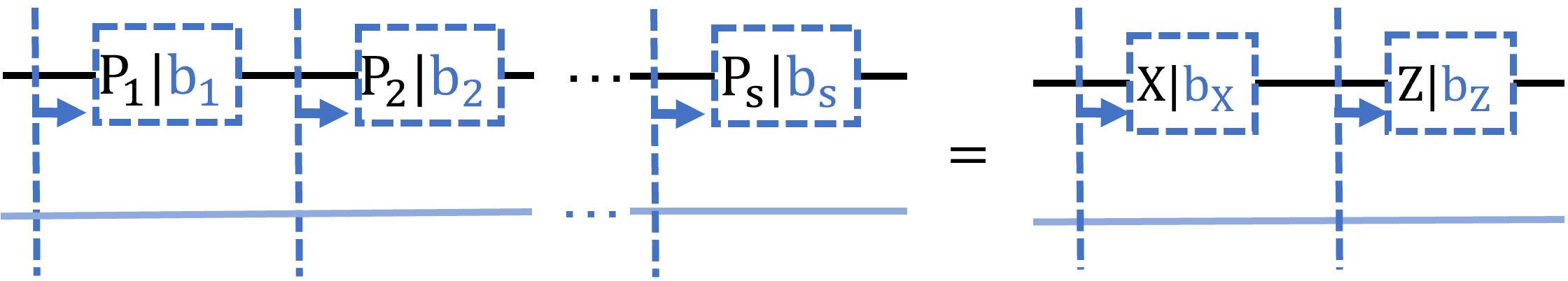}
  \caption{
  Up to a global phase, any sequence of consecutive classically-controlled Pauli operators acting on a single qubit can be decomposed into a classically-controlled $X$ followed by a classically-controlled $Z$. The conditions $b_X$ is $b_X = \oplus_{i \in I_X} b_i$ and $b_Z = \oplus_{i \in I_Z} b_i$ where $I_X$ is the set of indices $i$ such that $P_i = X$ or $Y$ and $I_Z$ is the set of indices such that $P_i = Z$ or $Y$.
  }
  \label{fig:circuit_identities_3}
\end{figure}

Now, we move all the classically-controlled Pauli operations to the end of the circuit. 
One can move a Pauli operation $P$ passed a Clifford gate $g$ using the relation $P g$ = $gg^{-1}Pg = gQ$ where $Q = g^{-1}Pg$ is also a Pauli operation. 
To move a classically-controlled Pauli operation passed a measurement, one can use the relations of Fig.~\ref{fig:circuit_identities_1}.
Then, we combine the sequences of classically-controlled Paulis into a layer of classically-controlled $X$ and a layer of classically-controlled $Z$ using the identity of Fig.~\ref{fig:circuit_identities_3}.
The classically-controlled Pauli operations at the end of the new circuit after the ancillas have been measured can be decomposed into those acting entirely on data qubits and those acting entirely on ancilla qubits. 
We discard those acting on the ancillas because they only change the parities used to produce the circuit outputs as explained in Fig.~\ref{fig:circuit_identities_2}.
These moves and other circuit transformations keep $L'$ invariant.

Finally, we can trivially postpone the measurements to perform them after the unitary gates and before the classically-controlled Pauli operations because ancilla qubits are not reused after measurement in this transformed circuit.
We can also implement all the measurements simultaneously for the same reason.
Again, the set $L'$ is kept unchanged.
\end{proof}

\subsection{The double measurement circuit}
\label{subsec:double_meas_circuit}

In this section, we consider the circuit $C \circ E \circ C$ which runs a circuit $C$ which measures Pauli operators $S_1, \dots, S_r$ followed by by a uniformly drawn random Pauli error $E$ on the data qubits before running $C$ again.
In the following lemma, we form a simplified version of the circuit $C \circ E \circ C$ using the transformation of Lemma~\ref{lemma:circuit_transformation}.

\begin{lemma}\label{lemma:circuit_transformation_double_meas}
Let $C$ be a Pauli measurement circuit on a set of data qubits {\bf D} using a set of ancilla qubits {\bf A}.
Then, there exists a circuit $\CEC$ represented in Fig.~\ref{fig:proof_circuit_transformation_2} which implements the same operation as $C \circ E \circ C$ on {\bf D}, using a set of ancilla qubits ${\bf \bar A} = {\bf A_E} \cup {\bf A_1} \cup {\bf A_2}$, made with the following steps.
\begin{enumerate}
\item The preparation of all the ancilla qubits in ${\bf A_E}$ in the state $\ket +$ and all the other ancilla in either the state $\ket 0$ or $\ket +$.
\item A unitary operation $U$ on ${\bf D \cup A_1}$ with $\depth(U) \leq 4\depth(C)$ made with single-qubit and two-qubit unitary Clifford gates.
\item A uniform random Pauli error generated by measuring the ancilla qubits of ${\bf A_E}$ in the computational basis and applying conditional Pauli error based on these measurement outcomes.
\item The same unitary operation $U$ applied to ${\bf D \cup A_2}$.
\item A layer of single-qubit measurements of all the qubits in ${\bf A_1}$ with outcome set ${O^{(1)}}$, followed by classically-controlled $X$ and $Z$ on ${\bf D \cup A_2}$ depending on $O^{(1)}$.
\item A layer of single-qubit measurements of all the qubits in ${\bf A_2}$ with outcome set ${O^{(2)}}$, followed by classically-controlled $X$ and $Z$ on ${\bf D}$ depending on $O^{(2)}$.
\end{enumerate}
Moreover, any subset $L \subset {\bf D} \cup {\bf A}$ of qubits of $C$ maps onto a subset $\bar L \subset {\bf D} \cup {\bf \bar A}$ such that each layer of the circuit $\CEC$ contains at most $|\partial L|$ gates supported on both $\bar L$ and its complement.
\end{lemma}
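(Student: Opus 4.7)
The plan is to obtain $\CEC$ by applying Lemma~\ref{lemma:circuit_transformation} to each of the two copies of $C$ appearing in $C \circ E \circ C$, realizing the intermediate random Pauli error $E$ with auxiliary qubits, and then rearranging the layers into the claimed form.

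First I would apply Lemma~\ref{lemma:circuit_transformation} to each of the two runs of $C$. This produces, for each run, a disjoint ancilla register (${\bf A_1}$ for the first run, ${\bf A_2}$ for the second), a unitary Clifford $U$ of depth at most $4\depth(C)$ (structurally identical for both runs, since it comes from the same $C$, but applied to different ancilla registers), a single-qubit measurement layer for the ancillas, and a layer of classically-controlled Paulis on ${\bf D}$. Separately, I would realize the random Pauli error $E$ on ${\bf D}$ using a register ${\bf A_E}$ of $2n$ qubits prepared in $\ket{+}$, each measured in the computational basis to produce a uniform random bit, which then classically controls a single-qubit $X$ or $Z$ on the corresponding data qubit.

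The resulting circuit has all the right ingredients but the wrong temporal ordering: the measurement of ${\bf A_1}$ and its associated classically-controlled Paulis $P^{(1)}$ on ${\bf D}$ occur before the second application of $U$, whereas the target form requires them to appear after. To fix this I would (i) move all preparations to the start, since preparation commutes with any gate not touching the prepared qubit; (ii) slide the measurement of ${\bf A_1}$ past the random-error stage and the second $U$, which is permissible because nothing touches ${\bf A_1}$ after it is measured; and (iii) commute $P^{(1)}$ past the same stages. Commuting $P^{(1)}$ past the random-error Paulis on ${\bf D}$ only introduces signs that become classically-controlled global phases and can be discarded. Commuting $P^{(1)}$ past the second $U$ uses the identity that conjugating a classically-controlled Pauli by $U$ replaces $P$ by $U P U^\dagger$, which is a Pauli on ${\bf D} \cup {\bf A_2}$ and can be decomposed (up to a further classically-controlled global phase) into single-qubit classically-controlled $X$ and $Z$ operations on ${\bf D} \cup {\bf A_2}$. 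This yields the circuit $\CEC$ in the six-step form stated.

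The boundary claim reduces to the corresponding claim in Lemma~\ref{lemma:circuit_transformation}. I would take $\bar L$ to be the union of the extensions of $L$ produced by Lemma~\ref{lemma:circuit_transformation} applied to each of the two copies (these are compatible because they agree on ${\bf D}$), assigning ${\bf A_E}$ arbitrarily, say entirely to the complement of $\bar L$. The only layers of $\CEC$ containing multi-qubit gates are the layers of the two copies of $U$, each of which contributes at most $|\partial L|$ boundary-crossing gates per layer by Lemma~\ref{lemma:circuit_transformation}. All other layers consist of preparations, single-qubit measurements, or single-qubit classically-controlled Paulis, which do not contribute edges to the connectivity graph. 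The main obstacle I anticipate is the careful bookkeeping of signs and phases during the commutation step; the key observation that makes it work is that every sign picked up when commuting classically-controlled Paulis past each other or past the conjugating unitary is a classically-controlled global phase, which has no effect on later measurement statistics or on the final state of ${\bf D}$ and can therefore be discarded without changing the implemented operation.
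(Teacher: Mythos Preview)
Your proposal is correct and follows essentially the same route as the paper: apply Lemma~\ref{lemma:circuit_transformation} to each copy of $C$, realize $E$ with $2n$ ancillas in $\ket{+}$ measured in the computational basis, then push the first-round classically-controlled Paulis past $E$ and the second $U$ by Clifford conjugation, decomposing the result into single-qubit controlled $X$'s and $Z$'s on ${\bf D}\cup{\bf A_2}$.

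One small divergence worth flagging: you assign ${\bf A_E}$ arbitrarily (all to the complement of $\bar L$), whereas the paper places the two ${\bf A_E}$ ancillas for a data qubit $q$ on the same side of the cut as $q$ itself. For the statement of Lemma~\ref{lemma:circuit_transformation_double_meas} your choice is fine, since the random-error stage consists only of single-qubit measurements and single-qubit controlled Paulis and hence contributes no boundary-crossing gates. But the paper's assignment is not incidental: it is what makes the random-error step factor as independent CPTP maps on $\bar L$ and $\bar R$ in the proof of Lemma~\ref{lemma:entropic_upper_bound}. If you intend to carry your $\bar L$ forward, adopt the paper's assignment.
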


Like in Lemma~\ref{lemma:circuit_transformation}, the notation $\partial L$ refers to boundary edges relative to the connectivity graph of the circuit $C$.

\begin{figure}[t]
  \centering
  \includegraphics[scale=.45]{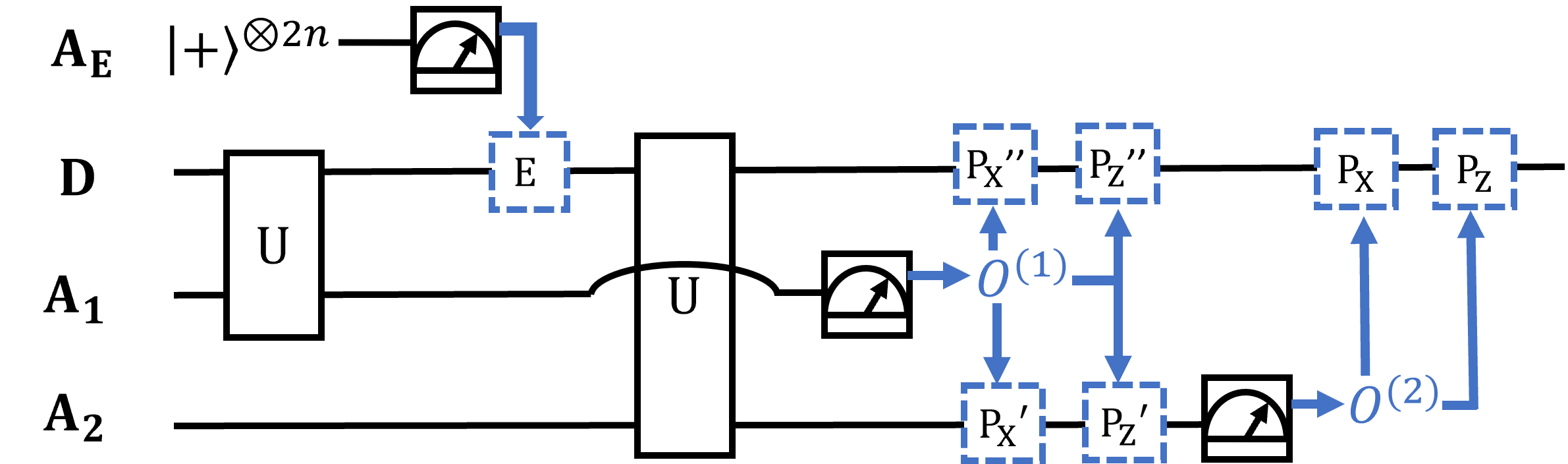}
  \caption{
  The circuit $\CEC$ obtained from $C \circ E \circ C$ by applying Lemma~\ref{lemma:circuit_transformation} to each of the two rounds of $C$.
    In this figure, each wire represents a block of multiple qubits, $U$ is a product a single-qubit and two-qubit Pauli operations, the blue dashed blocks represent layers of classically-controlled single-qubit Paulis.
  }
  \label{fig:proof_circuit_transformation_2}
\end{figure}

\begin{proof}
Starting from the circuit $C \circ E \circ C$, we build the circuit $C' \circ E \circ C'$ by applying the transformation of Lemma~\ref{lemma:circuit_transformation} to $C$.
Each round uses $|{\bf A'}|$ ancilla qubits. We assume that the ancilla qubits are not reused and we denote by ${\bf A_1}$ and ${\bf A_2}$ the set of ancilla qubits used by the first and the second application of $C'$.
The circuit includes a random Pauli error $E$ acting on the $n$ data qubits which we generate with classically-controlled Pauli operations which depend on the outcomes of the measurement of $2n$ $\ket +$ states in the computational basis.
Each data qubit $q$ corresponds to a pair of $\ket +$ states $a_X, a_Z$. 
We apply an $X$ gate (respectively a $Z$ gate) to the qubit $q$ controlled on the value of the outcome of the measurement of $a_X$ (respectively $a_Z$).

Finally, we move the first two rounds of classically-controlled Pauli operations associated with the application of the first measurement circuit to after the second block of unitary gates $U$. This is done by conjugating $P_X$ and $P_Z$ by $U^{-1}$ and separating the $X$ part and the $Z$ part as using the relation of Fig.~\ref{fig:circuit_identities_3}, which results in the Pauli operations $P'_X, P'_Z, P''_X$ and $P''_Z$ in Fig.~\ref{fig:proof_circuit_transformation_2}.

Given $L \subset {\bf D} \cup {\bf A}$, the application of Lemma~\ref{lemma:circuit_transformation} to each copy of $C$ maps $L$ onto a subset $L'$ of ${\bf D} \cup {\bf A_1} \cup {\bf A_2}$ and it guarantees that each layer contains at most $|\partial L|$ gates acting non-trivially on both $L'$ and its complement.
The set $\bar L$ is obtained from $L'$ by adding the ancilla qubits of ${\bf A_E}$ used to generate errors on the data qubits of $L$.
Moving the classically-controlled gates cannot introduce any operation acting on $\bar L$ and its complement because these gates are decomposed into single-qubit Pauli operations.
\end{proof}

\subsection{Notation}
\label{subsec:proof_notations}

In what follows, we consider a Pauli measurement circuit $C$ for the measurement of a set of Pauli operators $S_1, \dots, S_r$.
The circuit $\CEC$ is the circuit obtained in Lemma~\ref{lemma:circuit_transformation_double_meas} by simplifying the circuit $C \circ E \circ C$ where $E$ is a round of Pauli errors.
We refer to the circuit $\CEC$, represented in Fig.~\ref{fig:proof_circuit_transformation_2}, as the double measurement circuit.

Denote by $O^{(1)}$ or $O^{(2)}$ the sets of outcomes extracted during the two runs of $C$.
The circuit outputs the values $m_i^{(t)}$ where $i=1, \dots, r$ and $t=1, 2$.
The output $m_i^{(t)}$ is the outcome of the measurement of the Pauli operators~$S_i$ produced by the $t$ th run of $C$.
It is obtained as
$$
m_i^{(t)} = \bigoplus_{o \in O_i^{(t)}} o,
$$
for some subset $O_i^{(t)}$ of $O^{(t)}$.

In what follows, we consider a partition of the qubit set ${\bf D} \cup {\bf A}$ as $L \cup R$ with $R = L^\mathsf{c}$.
The subset of the qubit set ${\bf D} \cup {\bf \bar A}$ of the double measurement circuit $\CEC$ induced by $L$ is denoted $\bar L$ and its complement is denoted $\bar R$. Throughout the proof, when we use the notation $\bar L$, it is always for a set $\bar L$ induced by a subset $L$ of ${\bf D} \cup {\bf A}$.

The sets of outcomes $O^{(t)}$ obtained during the first and the second application of $C$ split along the partition $\bar L \bar R$ as $O^{(t)} = O^{(t)}_{\bar L} \cup O^{(t)}_{\bar R}$.
We use the notation $E_{\bar L}$ and $E_{\bar R}$ for the restrictions of the Pauli error $E$ to $\bar L$ and $\bar R$ respectively.

\subsection{Lower bound on the mutual information}

Here, we derive a lower bound on the mutual information between the outcomes of the double measurement circuit $\CEC$ obtained on each side of a partition of the qubits into two subsets. We use the notations of Fig.~\ref{fig:proof_circuit_transformation_2} and Section~\ref{subsec:proof_notations}.

\begin{lemma} \label{lemma:entropic_lower_bound}
With the notations of Section~\ref{subsec:proof_notations}, we have
$$
I(O_{\bar L}^{(2)}, E_{\bar L}; O_{\bar R}^{(2)}, E_{\bar R} | O^{(1)}) \geq n_{\cut}/2,
$$
where $n_{cut}$ is the number of operators $S_i$ supported on both $L$ and $R$.
\end{lemma}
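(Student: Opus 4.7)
The plan is to reduce the conditional mutual information to an entropy via data processing, and then lower-bound that entropy using the uniform randomness of $E$ in the $\CEC$ circuit.

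\emph{Reduction.} For each $S_i$ supported on both $L$ and $R$, the repeated-measurement identity $m_i^{(2)} = m_i^{(1)} \oplus s_i(E)$, combined with the splits $m_i^{(2)} = m_{i,\bar L}^{(2)} \oplus m_{i,\bar R}^{(2)}$ and $s_i(E) = s_i(E_{\bar L}) \oplus s_i(E_{\bar R})$, yields, for each $i$ in the cut,
\[ f_{i,\bar L}(X) := m_{i,\bar L}^{(2)} \oplus s_i(E_{\bar L}), \quad f_{i,\bar R}(Y) := m_{i,\bar R}^{(2)} \oplus s_i(E_{\bar R}), \]
satisfying $f_{i,\bar L}(X) \oplus f_{i,\bar R}(Y) = m_i^{(1)}$. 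I stack these identities over the $n_{\cut}$ cut stabilizers into vectors $\vec f_{\bar L}(X), \vec f_{\bar R}(Y)$ and apply the data-processing inequality,
\[ I(X;Y \mid Z) \geq I(\vec f_{\bar L}(X); \vec f_{\bar R}(Y) \mid Z) = H(\vec f_{\bar L}(X) \mid Z), \]
where the last equality uses that, given $Z$, $\vec f_{\bar R}(Y)$ is a deterministic shift of $\vec f_{\bar L}(X)$ by $\vec m^{(1)}_{\cut}$. By the same shift, $H(\vec f_{\bar L}(X) \mid Z) = H(\vec f_{\bar R}(Y) \mid Z)$, so it suffices to lower-bound either one.

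\emph{Entropic bound.} To show $H(\vec f_{\bar L}(X) \mid Z) \geq n_{\cut}/2$, I will use that $E$ is uniformly random and independent of $Z$ in $\CEC$: then $\vec s_{\bar L}(E_{\bar L})$ is uniform over a subspace of $\{0,1\}^{n_{\cut}}$ of dimension $r_L$ (the rank of the restrictions of cut stabilizers to $L$), and $\vec s_{\bar R}(E_{\bar R})$ is uniform over an independent subspace of dimension $r_R$. Since no nonzero combination of independent cut stabilizers can be trivial on both sides simultaneously, $r_L + r_R \geq n_{\cut}$. Together with the symmetry $H(\vec f_{\bar L} \mid Z) = H(\vec f_{\bar R} \mid Z)$ and the observation that $\vec m^{(2)}_{\cut} = \vec m^{(1)}_{\cut} \oplus \vec s(E)$ is uniform on $\{0,1\}^{n_{\cut}}$ given $Z$, I expect the target to reduce to $2 H(\vec f_{\bar L} \mid Z) \geq n_{\cut}$.

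\emph{Main obstacle.} The hardest step will be controlling the correlation between the ancilla-parity vector $\vec m^{(2)}_{\bar L,\cut}$ and the error bits $\vec s_{\bar L}(E_{\bar L})$ inside $\vec f_{\bar L} = \vec m^{(2)}_{\bar L,\cut} \oplus \vec s_{\bar L}(E_{\bar L})$. A naive application of subadditivity is insufficient because these two variables can correlate through the second-round Clifford propagation that feeds $E_{\bar L}$ into ancilla outcomes in $\bar L$. My plan is to exploit the stabilizer nature of $\CEC$: being a Clifford circuit with a stabilizer input, the joint distribution of every single-qubit measurement outcome together with every bit defining $E$ is uniform over a coset of a linear subspace of $\mathbb{F}_2^N$. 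Under such linear distributions, all conditional entropies become dimensions of subspace projections, and the desired inequality should follow from $r_L + r_R \geq n_{\cut}$ together with the $\bar L \leftrightarrow \bar R$ symmetry via a combinatorial rank identity.
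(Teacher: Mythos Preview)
Your overall strategy---data processing plus the repeated-measurement identity, then uniformity of $E$---matches the paper's, but your specific reduction differs in a way that creates exactly the obstacle you flag, and the resolution you sketch is not a proof.

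The paper does \emph{not} fold the syndrome bits $s_i(E_{\bar L})$ into the processed variable. It applies data processing only to the ancilla outcomes, replacing $O_{\bar L}^{(2)}$ by the parity vector $M_{\bar L}^{(2)}$ while keeping $E_{\bar L}$ \emph{raw} (and symmetrically on $\bar R$). This yields the exact identity
\[
I\bigl(M_{\bar L}^{(2)}, E_{\bar L};\, M_{\bar R}^{(2)}, E_{\bar R} \,\big|\, O^{(1)}\bigr) \;=\; H\bigl(M_{\bar L}^{(2)} \,\big|\, O^{(1)}, E_{\bar L}\bigr),
\]
which is then lower-bounded by additionally conditioning on $M_{\bar R}^{(2)}$ and restricting to the subset $S_{\cut,\bar L}\subseteq S_{\cut}$ of cut stabilizers whose output parity actually uses some bit of $O_{\bar L}$. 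With $E_{\bar L}$ sitting in the conditioning, the repeated-measurement relation converts this into an entropy of $(m_i(E_{\bar R}))_{i\in S_{\cut,\bar L}}$, a function of the \emph{independent} uniform variable $E_{\bar R}$ only; the correlation between ancilla parities and same-side error bits that worries you never arises. The factor $1/2$ then comes from $S_{\cut}=S_{\cut,\bar L}\cup S_{\cut,\bar R}$ after swapping $\bar L\leftrightarrow\bar R$.

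By contrast, forming $\vec f_{\bar L}=\vec m^{(2)}_{\bar L,\cut}\oplus\vec s_{\bar L}(E_{\bar L})$ absorbs $E_{\bar L}$ into the processed variable, so it is no longer available for conditioning. The rank fact $r_L+r_R\geq n_{\cut}$ controls only $H(\vec s_{\bar L}(E_{\bar L}))=r_L$, not $H(\vec f_{\bar L}\mid Z)$: the ancilla-parity piece $\vec m^{(2)}_{\bar L,\cut}$ can correlate with $\vec s_{\bar L}(E_{\bar L})$ through second-round Clifford propagation, and your ``combinatorial rank identity'' is never specified. Concretely, suppose every ancilla used by the cut stabilizers lies in $\bar R$ (so $\vec m^{(2)}_{\bar L,\cut}=\vec 0$) while the $\bar L$-restrictions of the cut stabilizers have rank $r_L\ll n_{\cut}$; then $H(\vec f_{\bar L}\mid Z)=r_L$ and, by your own symmetry, $H(\vec f_{\bar R}\mid Z)=r_L$ as well, so your target $2H(\vec f_{\bar L}\mid Z)\geq n_{\cut}$ does not follow from $r_L+r_R\geq n_{\cut}$. (A minor point: your identity $f_{i,\bar L}\oplus f_{i,\bar R}=m_i^{(1)}$ omits the contribution $m_i(P')$ of the first-round classically-controlled Paulis in $\CEC$; since $m_i(P')$ is a function of $O^{(1)}$, your ``deterministic shift given $Z$'' conclusion survives, but the formula should be corrected.) The remedy is the paper's: keep $E_{\bar L}$ unprocessed so it can be moved into the conditioning, and restrict to $S_{\cut,\bar L}$ rather than to all of $S_{\cut}$.
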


\begin{proof}
By the data processing inequality, we have 
\begin{align}
& I(O_{\bar L}^{(2)}, E_{\bar L}; O_{\bar R}^{(2)}, E_{\bar R} | O^{(1)}) \\
& \geq I(M_{\bar L}^{(2)}, E_{\bar L}; M_{\bar R}^{(2)}, E_{\bar R} | O^{(1)}),
\end{align}
where $M_{\bar L}$ and $M_{\bar R}$ are the parities corresponding to the restrictions of the sets $O_i$ to each subset $\bar R$ and $\bar L$. 
For instance, $M_{\bar L}$ is the set of values $m_{i, {\bar L}}^{(t)} = \oplus_{o \in O_i^{(t)} \cap O_{\bar L}} o$.
Then, using the relation $I(A; B | C) = H(A | C) - H(A | B, C)$, we obtain
\begin{align}
& I(M_{\bar L}^{(2)}, E_{\bar L}; M_{\bar R}^{(2)}, E_{\bar R} | O^{(1)}) \label{eq:lemma_lower_bound:cond_entropy_term0} \\
& = H(M_{\bar L}^{(2)}, E_{\bar L} | O^{(1)}) 
\label{eq:lemma_lower_bound:cond_entropy_term1} \\
& \phantom{=} - H(M_{\bar L}^{(2)}, E_{\bar L} | M_{\bar R}^{(2)}, E_{\bar R}, O^{(1)}) \cdot \label{eq:lemma_lower_bound:cond_entropy_term2}
\end{align}

Consider the term $H(M_{\bar L}^{(2)}, E_{\bar L} | M_{\bar R}^{(2)}, E_{\bar R}, O^{(1)})$ in Eq.~\eqref{eq:lemma_lower_bound:cond_entropy_term2}.
Let us show that the values in $M_{\bar L}^{(2)}$ are fully determined by $O^{(1)}, M_{\bar R}^{(2)}$ and $E$. 
We have
$
m_{i}^{(2)} = m_{i}^{(1)} + m_i(E) + m_i(P')
$ 
where $m_i(E)$ is $0$ if $E$ and $S_i$ commute and $1$ otherwise and $m_i(P')$ is defined similarly for the product $P'$ of the conditional operations applied before the measurement of ${\bf A_2}$.
The conditional operation $P'$, and therefore $m_i(P')$, depends only on $O^{(1)}$.
Splitting the other term along the partition, this implies 
\begin{align} \label{eq:proof_lemma_cmi_lower_bound:outcome_decomposition}
& m_{i, {\bar L}}^{(2)} + m_{i, {\bar R}}^{(2)} \\
& = m_{i, {\bar L}}^{(1)} + m_{i, {\bar R}}^{(1)} + m_i(E_{\bar L}) + m_i(E_{\bar R})  + m_i(P'),
\end{align}
which proves that $m_{i, {\bar L}}^{(2)}$ can be obtained from $O^{(1)}, M_{\bar R}^{(2)}$ and $E$.
As a result, we get
\begin{align}
& H(M_{\bar L}^{(2)}, E_{\bar L} | M_{\bar R}^{(2)}, E_{\bar R}, O^{(1)}) \\
& = H(E_{\bar L} | M_{\bar R}^{(2)}, E_{\bar R}, O^{(1)}) \\
& = H(E_{\bar L}) \cdot
\end{align}
because $E_{\bar L}$ is independent of $M_{\bar R}^{(2)}, E_{\bar R}$ and $O^{(1)}$.

Injecting this in Eq.~\eqref{eq:lemma_lower_bound:cond_entropy_term0}, we find
\begin{align}
& I(M_{\bar L}^{(2)}, E_{\bar L}; M_{\bar R}^{(2)}, E_{\bar R} | O^{(1)}) \\
& = H(M_{\bar L}^{(2)}, E_{\bar L} | O^{(1)}) - H(E_{\bar L}) \\
& = H(M_{\bar L}^{(2)}, E_{\bar L}, O^{(1)}) - H(O^{(1)}) - H(E_{\bar L}).
\end{align}
Given that $O^{(1)}$ and $E_{\bar L}$ are independent, we have 
$
H(O^{(1)}) + H(E_{\bar L}) = H(O^{(1)}, E_{\bar L})
$
which leads to
\begin{align}
& I(M_{\bar L}^{(2)}, E_{\bar L}; M_{\bar R}^{(2)}, E_{\bar R} | O^{(1)}) \\
& = H(M_{\bar L}^{(2)}, E_{\bar L} | O^{(1)}, E_{\bar L}) \\
& = H(M_{\bar L}^{(2)} | O^{(1)}, E_{\bar L}) \cdot
\end{align}

To obtain a lower bound on this quantity, we introduce the set $S_{\cut}$ which is a maximum set of independent operators $S_i$ that have support on both $L$ and $R$.
By definition, we have $|S_{\cut}| = n_{\cut}$.
Let $S_{\cut, {\bar L}}$ (resp. $S_{\cut, {\bar R}}$) be the subset of $S_{\cut}$ that contains the operators that depend on at least one outcome in $O_{\bar L}$ (resp. $O_{\bar R}$). 
Clearly, $S_{\cut} = S_{\cut, {\bar L}} \cup S_{\cut, {\bar R}}$.

Denote by $M_{{\bar L}, \cut}^{(2)}$ the subset of $M_{\bar L}^{(2)}$ corresponding to the outcomes of operators in $S_{\cut, {\bar L}}$.
By the data processing inequality, we have
\begin{align}
H(M_{\bar L}^{(2)} | O^{(1)}, E_{\bar L})
& \geq H(M_{{\bar L}, \cut}^{(2)} | O^{(1)}, E_{\bar L}) \cdot
\end{align}
Conditioning can not increase the entropy therefore
\begin{align}
& H(M_{{\bar L}, \cut}^{(2)} | O^{(1)}, E_{\bar L}) \\
& \geq H(M_{{\bar L}, \cut}^{(2)} | O^{(1)}, E_{\bar L}, M_{\bar R}^{(2)}) \cdot
\end{align}
Finally, based on Eq.~\eqref{eq:proof_lemma_cmi_lower_bound:outcome_decomposition}, the data $m_{i, {\bar L}}^{(2)}$ given $O^{(1)}, M_{\bar R}^{(2)}$ and $E_{\bar L}$
is equivalent to the data $m_i(E_{\bar R})$.
Here again we use the fact that $P'$ is fully determined by $O^{(1)}$.
This produces
\begin{align}
& H(M_{{\bar L}, \cut}^{(2)} | O^{(1)}, E_{\bar L}, M_{\bar R}^{(2)}) \\
& = H(m_i(E_{\bar R}), S_i \in S_{\cut, {\bar L}} | O^{(1)}, E_{\bar L}, M_{\bar R}^{(2)}) \\
& = H(m_i(E_{\bar R}), S_i \in S_{\cut, {\bar L}})\cdot
\end{align}
The last equality is due to the fact that $m_i(E_{\bar R})$ is independent of $O^{(1)}, M_{\bar R}^{(2)}$ and $E_{\bar L}$.

To compute this entropy, we introduce the function $\sigma_{\cut, L}$ that maps a right-side error $E_{\bar R}$ onto the binary vector with values $m_i(E_{\bar R})$ for $S_i \in S_{\cut, {\bar L}}$.
It is a linear map.
Moreover, by definition of $S_{\cut, {\bar L}}$, it is surjective because the operators of $S_{\cut}$ are assumed to be independent and there exists at least one error $E_{\bar R}$ which anti-commutes with any $S_i \in S_{\cut, {\bar L}}$.
Therefore, by linearity the preimage of any vector has the same size as the kernel of $\sigma_{\cut, L}$.
Given that the distribution of $E_{\bar R}$ is uniform, this proves that the vector with coefficient $m_i(E_{\bar R})$ for $S_i \in S_{\cut, {\bar L}}$ is uniformly distributed, which yields
\begin{align}
H(m_i(E_{\bar R}), S_i \in S_{\cut, {\bar L}}) = |S_{\cut, {\bar L}}| \cdot
\end{align}
To conclude the proof, note that the same result holds by swapping the role of $L$ and $R$, which leads to
\begin{align}
& I(O_{\bar L}^{(2)}, E_{\bar L}; O_{\bar R}^{(2)}, E_{\bar R} | O^{(1)}) \\
& \geq \max(|S_{\cut, {\bar L}}|, |S_{\cut, {\bar R}}|).
\end{align}
This lower bound is at least $|S_{\cut}|/2$ because $S_{\cut} = S_{\cut, {\bar L}} \cup S_{\cut, {\bar R}}$.
\end{proof}

\subsection{Upper bound on the mutual information}

We establish the following bound using a strategy similar to~\cite{kull2019spacetime}.

\begin{lemma} \label{lemma:entropic_upper_bound}
With the notations of Section~\ref{subsec:proof_notations}, we have
$$
I(O_{\bar L}^{(2)}, E_{\bar L}; O_{\bar R}^{(2)}, E_{\bar R} | O^{(1)}) \leq 32 |\partial L|\depth(C) \cdot
$$
\end{lemma}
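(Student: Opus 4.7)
The plan is to upper-bound the classical conditional mutual information by the quantum mutual information of the pure state produced by the unitary part of $\CEC$, and then to bound that quantum mutual information by tracking entanglement entropy growth layer by layer. This reduces the task to counting two-qubit gates of $\CEC$ that straddle the cut.

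To set up the pure state, I first purify the random error generation: prepare each qubit of $\mathbf{A_E}$ in $\ket{+}$, apply controlled Paulis from it to its associated data qubit, and defer the measurements that reveal $E$ to the end of the circuit. Assigning each qubit of $\mathbf{A_E}$ to the same side of the cut as its associated data qubit ensures that these extra gates do not straddle $\bar L \cup \bar R$. Let $\ket{\psi}$ denote the resulting pure state on $\mathbf{D} \cup \mathbf{A_E} \cup \mathbf{A_1} \cup \mathbf{A_2}$ obtained by running all the unitaries of Fig.~\ref{fig:proof_circuit_transformation_2} up to and including the second copy of $U$. The remaining operations are: measure $\mathbf{A_1}$ to obtain $O^{(1)}$, apply the step-5 corrections on $\mathbf{D} \cup \mathbf{A_2}$ controlled by $O^{(1)}$, and then measure $\mathbf{A_2}$ and $\mathbf{A_E}$ to read off $O^{(2)}$ and $E$. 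The step-6 corrections act only on $\mathbf{D}$ and cannot influence $O^{(2)}$ or $E$, so they can be ignored.

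Fix a value $o$ of $O^{(1)}$. After measuring $\mathbf{A_1}$ and conditioning on this outcome, the remaining state on $\mathbf{D} \cup \mathbf{A_E} \cup \mathbf{A_2}$ is pure; call it $\ket{\psi_o}$. The step-5 corrections then reduce to a fixed product of local Paulis on $\mathbf{D} \cup \mathbf{A_2}$; since each factor is supported on a single qubit, the overall operation is a tensor product across $\bar L \cup \bar R$, which preserves the quantum mutual information across the cut. Because a computational-basis measurement yields a classical mutual information bounded by the quantum mutual information of the pre-measurement state, the conditional classical mutual information between $X = (O^{(2)}_{\bar L}, E_{\bar L})$ and $Y = (O^{(2)}_{\bar R}, E_{\bar R})$ satisfies $I(X; Y \mid O^{(1)} = o) \leq I_q(\bar L; \bar R)_{\ket{\psi_o}} = 2 S(\bar L)_{\ket{\psi_o}}$, using purity of $\ket{\psi_o}$. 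Averaging over $o$ and using that measuring $\mathbf{A_1}$, which splits as local measurements on $\mathbf{A_1} \cap \bar L$ and $\mathbf{A_1} \cap \bar R$, cannot increase the quantum mutual information on average, I obtain $I(X; Y \mid O^{(1)}) \leq I_q(\bar L; \bar R)_{\ket{\psi}} = 2 S(\bar L)_{\ket{\psi}}$.

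It remains to bound $S(\bar L)_{\ket{\psi}}$. Each two-qubit Clifford unitary supported on both sides of the cut changes $S(\bar L)$ by at most $2$ bits (Proposition~2 of~\cite{marien2016entanglement}), while gates supported entirely on one side do not change it. By Lemma~\ref{lemma:circuit_transformation_double_meas}, each layer of the unitary part of $\CEC$ contains at most $|\partial L|$ gates supported on both sides of the cut, and the unitary part consists of two copies of $U$, each of depth at most $4\,\depth(C)$, together with a constant number of cut-free layers realizing the coherent error generation, for a total of at most $8\,\depth(C)$ cut-crossing layers. This gives $S(\bar L)_{\ket{\psi}} \leq 16\,|\partial L|\,\depth(C)$ and hence $I(X; Y \mid O^{(1)}) \leq 32\,|\partial L|\,\depth(C)$. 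The main subtlety I expect is the treatment of the step-5 corrections: they are classically controlled by $O^{(1)}$, whose bits are read from qubits on both sides of the cut, so replacing them naively by quantum CNOTs would introduce many new cut-crossing gates. Conditioning on $O^{(1)}$ is precisely what turns these corrections into deterministic tensor products of local Paulis, which is the reason the conditional mutual information is the right quantity to bound.
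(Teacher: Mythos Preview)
Your proof is correct and in fact handles one step more carefully than the paper's own argument. The paper opens with the inequality
\[
I(O^{(2)}_{\bar L},E_{\bar L};O^{(2)}_{\bar R},E_{\bar R}\mid O^{(1)})\le I(O^{(2)}_{\bar L},E_{\bar L};O^{(2)}_{\bar R},E_{\bar R}),
\]
justified by the slogan ``conditioning can only decrease the mutual information'', and then tracks the quantum mutual information of the \emph{mixed} state through the circuit: it traces out $\mathbf{A_1}$ and $\mathbf{D}$ so that the step-5 classically-controlled Paulis become local CPTP maps on the mixture, and uses Proposition~\ref{prop:qmi_rate} (each cut-crossing two-qubit unitary increases the QMI by at most~$4$) over the $8\,\depth(C)$ unitary layers. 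Your route is different: you keep the state pure, condition on $O^{(1)}=o$ \emph{before} touching the step-5 corrections so that they become fixed tensor products of single-qubit Paulis, bound each conditional term by $2S(\bar L)_{\ket{\psi_o}}$ via purity, and then average over $o$ using concavity of the entanglement entropy under local projective measurements on $\mathbf{A_1}\cap\bar L$ and $\mathbf{A_1}\cap\bar R$. Both approaches reduce to counting cut-crossing gates in the two copies of $U$ and invoking the entanglement-growth bound of Mari\"en et~al.; your ``$+2$ per gate on $S$'' and the paper's ``$+4$ per gate on $I_q$'' give the same constant $32$ since $I_q=2S$ on pure states. The benefit of your approach is that it avoids the paper's opening inequality, which is not valid as a general fact about conditional mutual information; your conditioning step is precisely what turns the classically-controlled Paulis into cut-respecting local unitaries and makes the argument go through cleanly.
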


This lemma is proven by studying the growth of the mutual information in the quantum double measurement circuit $\CEC$ introduced in Section~\ref{subsec:double_meas_circuit} and represented in Fig.~\ref{fig:proof_circuit_transformation_2}.
We use the following bound which is a straightforward application of Proposition~2 from \cite{marien2016entanglement}.

\begin{proposition} \label{prop:qmi_rate}
Consider a partition of the set of qubits into two subsets $L$ and $R$.
Let $\rho$ be a $n$-qubit density matrix and let $\tilde \rho$ be the density matrix of the system after an operation $g$.
If $g$ is a two-qubit unitary gate acting on a qubit of $L$ and a qubit of $R$, then the quantum mutual information between $L$ and $R$ satisfies
\begin{align*}
S(\tilde \rho_L; \tilde \rho_R)
\leq S(\rho_L; \rho_R) + 4 \cdot
\end{align*}
\end{proposition}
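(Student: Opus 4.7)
The plan is to split the $+4$ change in quantum mutual information into two contributions of $+2$ from each side of the cut, each of which follows from the Araki--Lieb triangle inequality. First I would expand the definition
\[
S(\rho_L;\rho_R) = S(\rho_L) + S(\rho_R) - S(\rho_{LR}),
\]
and observe that since $g$ is a unitary acting on the whole bipartite system $LR$, it preserves the joint entropy: $S(\tilde\rho_{LR}) = S(\rho_{LR})$. The proposition therefore reduces to the one-sided bound
\[
[S(\tilde\rho_L) - S(\rho_L)] + [S(\tilde\rho_R) - S(\rho_R)] \leq 4,
\]
and it is enough to show each bracket is at most $2$.

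The second step exploits the locality of $g$: let $a \in L$ and $b \in R$ be the two qubits on which $g$ acts, and write $L = \{a\} \cup L'$. Because $g$ does not touch $L'$, the marginals agree: $\tilde\rho_{L'} = \rho_{L'}$. Applying the Araki--Lieb triangle inequality together with subadditivity to the bipartition $L = \{a\} \cup L'$ gives
\[
|S(\rho_L) - S(\rho_{L'})| \leq S(\rho_a) \leq \log 2 = 1,
\]
and likewise $|S(\tilde\rho_L) - S(\tilde\rho_{L'})| \leq S(\tilde\rho_a) \leq 1$. The absolute-value triangle inequality, combined with $\rho_{L'} = \tilde\rho_{L'}$, then yields
\[
|S(\tilde\rho_L) - S(\rho_L)| \leq |S(\tilde\rho_L) - S(\tilde\rho_{L'})| + |S(\rho_{L'}) - S(\rho_L)| \leq 2,
\]
and the identical argument on the $R$ side, with $b$ in place of $a$, gives $|S(\tilde\rho_R) - S(\rho_R)| \leq 2$. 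Summing the two one-sided bounds produces the required $+4$.

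I do not anticipate any serious obstacle. The only delicate point is to invoke Araki--Lieb in the form $|S(XY) - S(Y)| \leq S(X)$, so that it is the single-qubit dimension of $a$ (rather than the full dimension of $L$) that caps the entropy change on the $L$ side; applying the inequality in the other standard form $|S(X) - S(Y)| \leq S(XY)$ would give the useless bound $|S(\rho_L) - S(\rho_{L'})| \leq S(\rho_L)$. Once this convention is fixed, everything else is bookkeeping around the definition of mutual information and the unitary invariance of $S(\rho_{LR})$. The same skeleton, with $\{a\}$ replaced by the (at most $\lfloor w/2 \rfloor$) qubits of a $w$-qubit gate on a given side of the cut, recovers the generalised constant mentioned in Section~\ref{sec:generalizations}.
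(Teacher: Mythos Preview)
Your proof is correct and follows the same skeleton as the paper's: both use unitary invariance of $S(\rho_{LR})$ and then reduce to the per-side bounds $S(\tilde\rho_L)\le S(\rho_L)+2$ and $S(\tilde\rho_R)\le S(\rho_R)+2$. The only difference is how the per-side bound is obtained. The paper invokes Proposition~2 of \cite{marien2016entanglement} as a black box, whereas you prove it directly from subadditivity and the Araki--Lieb inequality, via the observation that the marginal on the untouched part $L'=L\setminus\{a\}$ is unchanged by $g$. Your route is more elementary and self-contained; the paper's citation has the compensating advantage that the referenced result already covers the general $w$-qubit case, so the constant $2\lfloor w/2\rfloor$ quoted in Section~\ref{sec:generalizations} comes for free, while in your approach one must redo the Araki--Lieb step with $\{a\}$ replaced by the at-most-$\lfloor w/2\rfloor$ qubits of the gate on each side, exactly as you sketch.
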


\begin{proof}
Assume that $g$ acts on qubit $i$ in $L$ and qubit $j$ in $R$. 
Applying Proposition 2 of~\cite{marien2016entanglement} with
$A=\{i\}$, $a = L \backslash A$, $B=\{j\}$, and $b = R \backslash B$, 
we find
\begin{align} \label{eq:lemma_entanglement_growth_L}
S(\tilde \rho_{L}) \leq S(\rho_{L}) + 2,
\end{align}
and 
\begin{align} \label{eq:lemma_entanglement_growth_R}
S(\tilde \rho_{R}) \leq S(\rho_{R}) + 2 \cdot
\end{align}
By definition, we have 
\begin{align}
S(\tilde \rho_L; \tilde \rho_R)
& = S(\tilde \rho_{L}) + S(\tilde \rho_{R}) - S(\tilde \rho_{LR}) \\
& = S(\tilde \rho_{L}) + S(\tilde \rho_{R}) - S(\rho_{LR}) \\
& \leq S(\rho_{L}) + S(\rho_{R}) - S(\rho_{LR}) + 4,
\end{align}
where the last equation is obtained using Eq.~\eqref{eq:lemma_entanglement_growth_L} and~\eqref{eq:lemma_entanglement_growth_R}.
\end{proof}

\begin{proof} [Proof of Lemma \ref{lemma:entropic_upper_bound}]
For classical conditional mutual information, conditioning can only decrease the mutual information. This leads to
\begin{align} \label{eq:lemma_mi_upper_bound:simplify_mi}
& I(O_{\bar L}^{(2)}, E_{\bar L}; O_{\bar R}^{(2)}, E_{\bar R} | O^{(1)}) \\
& \leq I(O_{\bar L}^{(2)}, E_{\bar L}; O_{\bar R}^{(2)}, E_{\bar R}).
\end{align}
In the rest of the proof, we derive an upper bound on $I(O_{\bar L}^{(2)}, E_{\bar L}; O_{\bar R}^{(2)}, E_{\bar R})$ by bounding the growth of quantum mutual information through the circuit.

Let $\rho(t_0)$ be the initial state of the double measurement circuit $\CEC$.
Initially, the quantum mutual information between $\bar L$ and $\bar R$ is trivial, that is
\begin{align}
S(\rho_{\bar L}(t_0); \rho_{\bar R}(t_0)) = 0,
\end{align}
because the input state of the circuit is a product state.

Let $\rho(t_1)$ be the state obtained after the application of $U$ to ${\bf D}$ and ${\bf A_1}$.
The operation $U$ contains two types of gates. The gates that are supported inside $\bar L$ or inside $\bar R$ leave the quantum mutual information unchanged and the gates acting on both sides of the partition increase the quantum mutual information by at most $4$ from Proposition~\ref{prop:qmi_rate}.
As a result, we have
\begin{align}
S(\rho_{\bar L}(t_1); \rho_{\bar R}(t_1)) \leq 4 \depth(U) |\partial L|,
\end{align}
because each layer of the circuit contains at most $|\partial L|$ gates supported on both sides of the partition by Lemma~\ref{lemma:circuit_transformation_double_meas}.

The application of the random Pauli error $E$ cannot increase the quantum mutual information because it can be decomposed as a pair of independent CPTP maps ${\cal E}_L, {\cal E}_R$ acting on each side of the partition. This leads to
\begin{align}
S(\rho_{\bar L}(t_2); \rho_{\bar R}(t_2)) \leq 4 \depth(U) |\partial L|,
\end{align}
where $\rho(t_2)$ is the state of the system after the application of $E$.

The second application of $U$ produces a state $\rho(t_3)$ with
\begin{align}
S(\rho_{\bar L}(t_3); \rho_{\bar R}(t_3)) \leq 8 \depth(U) |\partial L|,
\end{align}
because it increases the quantum mutual information by at most $4$ for each gate crossing the $\bar L \bar R$ partition like the previous application of $U$.

Then, each of the qubits of ${\bf A_1}$ is measured, producing the outcomes $O^{(1)}$. These single-qubit measurements, which are CPTP maps acting either on $\bar L$ or $\bar R$, cannot increase the quantum mutual information.
Therefore, we are left with a state $\rho(t_4)$ after the measurement of ${\bf A_1}$ (without discarding ${\bf A_1}$) such that 
\begin{align}
S(\rho_{\bar L}(t_4); \rho_{\bar R}(t_4)) 
& \leq 8 \depth(U) |\partial L|.
\end{align}

Before looking into the effect of the classically-controlled Pauli operations, we consider the state $\rho(t_5)$ obtained by tracing out the subsystems ${\bf A_1}$ and ${\bf D}$.
This state, supported on ${\bf A_2}$ and ${\bf A_E}$, satisfies
\begin{align}
S(\rho_{\bar L}(t_5); \rho_{\bar R}(t_5)) 
& \leq S(\rho_{\bar L}(t_4); \rho_{\bar R}(t_4)) \\
& \leq 8 \depth(U) |\partial L|,
\end{align}
because discarding a subsystem does not increase the quantum mutual information.

Consider now the classically-controlled Pauli operations controlled on $O^{(1)}$ acting on $\rho(t_5)$.
These operations can be decomposed as a product of single qubit Pauli operation controlled on the discarded outcomes of the measurement of ${\bf A_1}$.
Such an operation is a CPTP map acting either on $\bar L$ or on $\bar R$ and therefore it cannot increase the quantum mutual information which results in 
\begin{align}
S(\rho_{\bar L}(t_6); \rho_{\bar R}(t_6)) 
& \leq 8 \depth(U) |\partial L|,
\end{align}
where $\rho(t_6)$ is the state of the subsystem ${\bf A_2 A_E}$ after the Pauli operations controlled on $O^{(1)}$.

Finally, the measurement of the qubits of ${\bf A_2}$, which can again be decomposed into CPTP maps acting either on $\bar L$ or on $\bar R$, produces a quantum state $\rho(t_7)$ over ${\bf A_2 A_E}$ with
\begin{align} \label{eq:lemma_mi_upper_bound:qmi_bound}
S(\rho_{L}(t_7); \rho_{R}(t_7)) 
& \leq 8 \depth(U) |\partial L| \cdot
\end{align}

The outcome of the measurement of ${\bf A_2}$ is the set $O^{(2)}$.
Moreover, the value of the error $E$ is stored in the qubits of ${\bf A_E}$ after measurement.
Given that the final state $\rho_{\bar L}(t_7)$ over ${\bf A_2 A_E}$ was projected onto a mixture of orthogonal pure states by measurement, the quantum mutual information of $\rho_{\bar L}(t_7)$ coincides with the classical mutual information
\begin{align} \label{eq:lemma_mi_upper_bound:cmi_to_qmi}
I(O_{\bar L}^{(2)}; O_{\bar R}^{(2)}) = S(\rho_{\bar L}(t_7); \rho_{\bar R}(t_7)) \cdot
\end{align}

Putting together Eq.~\eqref{eq:lemma_mi_upper_bound:simplify_mi}, \eqref{eq:lemma_mi_upper_bound:cmi_to_qmi} and ~\eqref{eq:lemma_mi_upper_bound:qmi_bound}, we get
\begin{align}
& I(O_{\bar L}^{(2)}, E_{\bar L}; O_{\bar R}^{(2)}, E_{\bar R} | O^{(1)}) \\
& \leq I(O_{\bar L}^{(2)}, E_{\bar L}; O_{\bar R}^{(2)}, E_{\bar R}) \\
& = S(\rho_{\bar L}(t_7); \rho_{\bar R}(t_7)) \\
& \leq 8 \depth(U) |\partial L|,
\end{align}
and by Lemma~\ref{lemma:circuit_transformation_double_meas}, we have $\depth(U) \leq 4\depth(C)$, proving the result.
\end{proof}

\subsection{Proof of Theorem~\ref{theorem:separator_bound}}

Combining all the ingredients of this section, we can conclude the proof of Theorem~\ref{theorem:separator_bound}.

\begin{proof} [Proof of Theorem~\ref{theorem:separator_bound}]
By Lemma~\ref{lemma:entropic_lower_bound} and Lemma~\ref{lemma:entropic_upper_bound}, we have
\begin{align}
n_{\cut}/2 
& \leq I(O^{(2)}_{\bar L}, E_{\bar L}; O^{(2)}_{\bar R}, E_{\bar R} | O^{(1)}) \\
& \leq 32 |\partial L| \depth(C),
\end{align}
which leads to
\begin{align} 
\depth(C) \geq \frac{n_{\cut}}{64 |\partial L|},
\end{align}
proving the theorem.
\end{proof}

\section{Space-optimal circuits for bounded-depth syndrome extraction}
\label{sec::constant_depth_circuits}

\begin{figure*}[t]
\centering
\includegraphics[scale=.52]{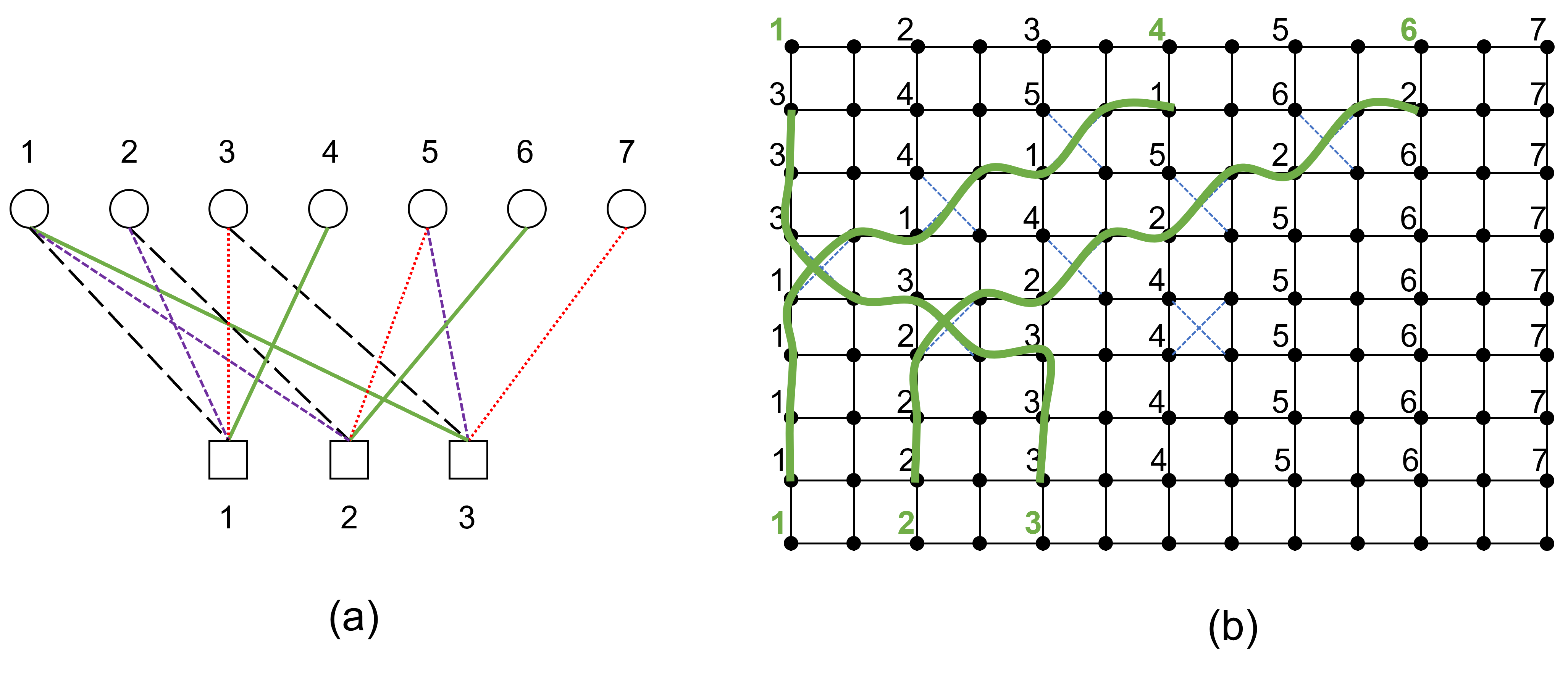}
\caption{
Constant-depth circuits to measure the $X$ stabilizers of the Steane code with full connectivity and with 2D local connectivity.
  (a)
With fully connected qubits, the syndrome can be extracted using one readout qubit per check, with four rounds of CNOT gates.
Each round corresponds to a color $c$, and the round $c$ implements the CNOT gates along all edges with color $c$ in parallel.
  (b)
In two dimensions, the seven data qubits of Steane code are mapped onto the top side of the grid and the three readout qubits are placed on the bottom.
We replace the CNOT gates of the fully connected circuit by long range CNOTs applied along paths of ancilla qubits connecting readout and data qubits.
To generate the paths with color $c$, we first place the index of each readout qubit for color $c$ below its data qubit. Then we generate the next rows of indices by applying an odd-even sorting network.
Following the index of a readout qubit, we obtain the path connecting it to its data qubit.
To simultaneously apply CNOTs along the paths with color $c$, we first prepare Bell states entangling the endpoints of crossing edges (dashed blue edges) and we use these states to apply long range CNOT gates as in Fig.~\ref{fig:long_distance_CNOT} between each readout qubit and the corresponding data qubit.
}
\label{fig:2d_constant_depth_circuit}
\end{figure*}

\begin{figure*}[ht]
\centering
\includegraphics[scale=.5]{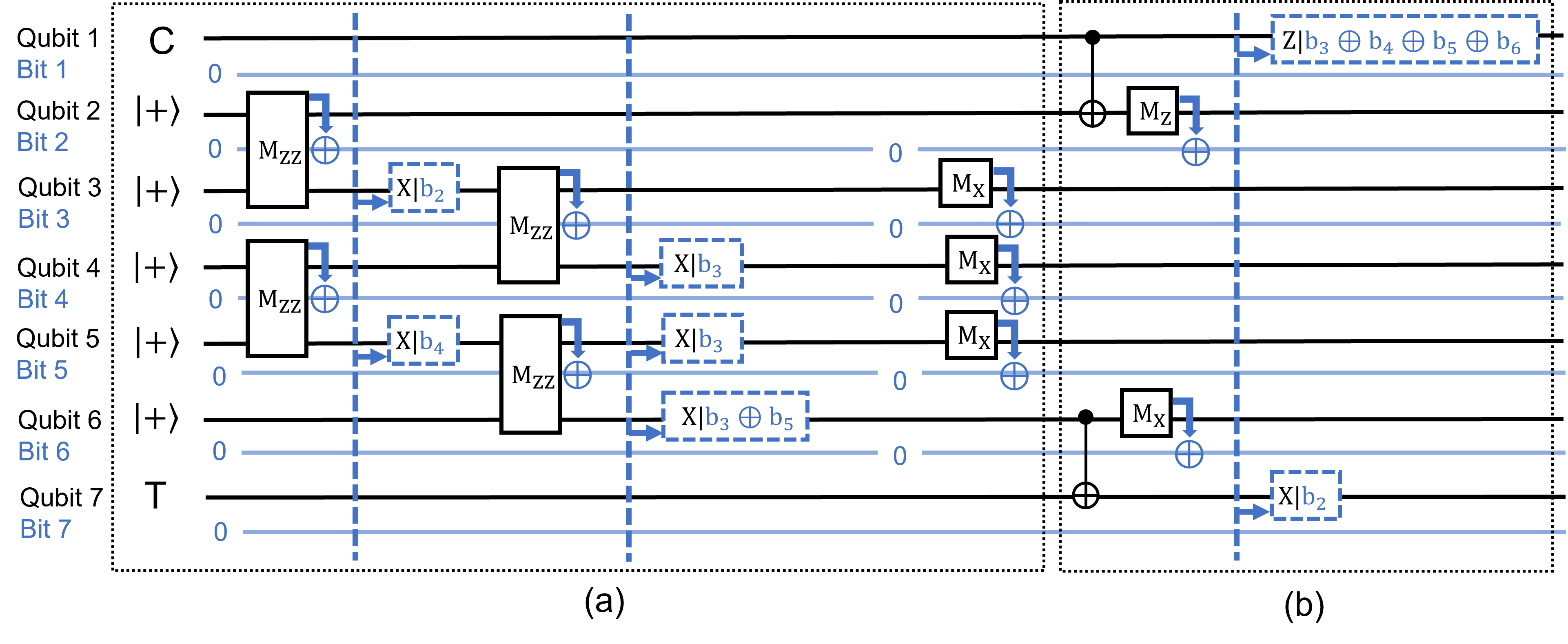}
\caption{
A long distance CNOT can be implemented in two steps. With the block (a), a Bell pair is created between two ancilla qubits, one next to the control qubit $C$ and the other one next to the target qubit $T$.
Then, this entangled state is used in the block (b) to perform a CNOT with control qubit $C$ and target $T$.
}
\label{fig:long_distance_CNOT}
\end{figure*}

Proposition~\ref{prop:dense_2d_local_meas_circuit} shows that any Clifford bounded-depth 2D local syndrome extraction circuit for a quantum expander LDPC code would require at least $\Omega(n^2)$ ancilla qubits.
In this section, we prove that this bound is tight by constructing a family of bounded-depth 2D local Clifford syndrome extraction circuits using $O(n^2)$ ancilla qubits for any CSS codes with bounded degree Tanner graph.
More precisely, we prove the following result.

\begin{proposition} \label{prop:bounded_depth_circuit_construction}
Let $Q$ be a CSS code with length $n$, with $r \leq n$ stabilizer generators.
The code $Q$ admits a 2D local Clifford syndrome extraction circuit using $a_n = 2n^2 + 2n - 2$ ancilla qubits with depth $14(\deg(T_X) + \deg(T_Z)) + 4$ where $T_X$ is the $X$ Tanner graph and $T_Z$ is the $Z$ Tanner graph of $Q$.
\end{proposition}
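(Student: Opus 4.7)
The plan is to construct the circuit in two sequential stages: one stage measures all $X$-type stabilizers, and a second stage measures all $Z$-type stabilizers (these commute, so the serialization does not affect correctness). Place the $n$ data qubits in a single row at the top of the 2D grid. For the $X$ measurements, allocate a rectangular ancilla patch of roughly $n \times n$ directly below the data row, terminated by a row of $n$ readout ancillas (one per $X$-stabilizer, using $r \le n$). Mirror this construction below for the $Z$ measurements. After accounting for shared boundary rows and the row holding Bell-pair preparation qubits, the total ancilla count is $2n^2 + 2n - 2$.

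The core subroutine measures one readout ancilla against the data qubits in its stabilizer by a sequence of CNOTs (with Hadamards conjugating for the $X$ part). Edge-color each Tanner graph so that each color class is a partial matching between stabilizer vertices and data vertices; $T_X$ needs $\deg(T_X)$ colors and $T_Z$ needs $\deg(T_Z)$ colors. Within a single color class, every readout ancilla has to apply at most one CNOT to exactly one data qubit. To execute these CNOTs in parallel with only 2D local gates, apply the layout idea illustrated in Fig.~\ref{fig:2d_constant_depth_circuit}: in the row just below the data qubits, write the index of each readout ancilla in the column of its partner data qubit for that color; then let successive rows realize an odd-even sorting network on those indices, so that the bottom row has indices in the fixed permanent positions of the physical readout ancillas. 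Following each index through the network traces a path of ancilla sites linking a data qubit to its matched readout ancilla, and by construction these paths are pairwise vertex-disjoint.

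Disjointness of the paths is what enables the parallel long-distance CNOTs: for a given color class, all Bell pairs along every edge of every path can be prepared in parallel in constant depth, and the long-distance CNOT circuit of Fig.~\ref{fig:long_distance_CNOT} can then be executed simultaneously on every pair, also in constant depth. Each color class thus contributes a bounded number of layers (adding up to the constant $14$ in the statement, after carefully counting Bell-pair preparation, the Bell-pair-consuming long CNOT block, and intermediate single-qubit measurements and classically controlled Paulis). Summing over $\deg(T_X) + \deg(T_Z)$ color classes, plus $4$ final layers for ancilla preparations, Hadamards rotating the $X$-readouts into the computational basis, final readout measurements, and classical post-processing of the outcomes into the syndrome bits, gives the claimed depth $14(\deg(T_X) + \deg(T_Z)) + 4$.

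The main obstacle is the routing argument: one must verify that the odd-even sorting network actually yields pairwise vertex-disjoint trajectories inside the $n \times n$ ancilla patch for every possible matching arising from a color class of the Tanner graph, since this disjointness is what makes the parallel execution of long-distance CNOTs in a color class constant-depth and 2D local. Once disjointness is settled, the remainder is bookkeeping: showing that the long-distance CNOT subroutine can be written as a fixed depth $14$ block per color (including Bell-pair creation, consumption, measurement, and classically conditioned corrections), and that the boundary rows absorbing the sorting network and the readout row account precisely for the $2n^2 + 2n - 2$ extra ancillas.
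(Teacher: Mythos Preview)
Your strategy is the paper's: edge-color $T_X$ and $T_Z$, and for each color class route every readout ancilla to its matched data qubit through an odd--even sorting network laid out in the ancilla block, then run all the long-range CNOTs for that color in parallel via the constant-depth Bell-pair / long-CNOT subroutine of Fig.~\ref{fig:long_distance_CNOT}. You also correctly single out vertex-disjointness of the sorting-network trajectories as the crux.

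The layout, however, differs from the paper's in a way that matters. The paper does not allocate two separate $n\times n$ patches; it uses a \emph{single} $(2n-1)\times(n+2)$ grid with data qubits on every other site of the top row and readout qubits on every other site of the bottom row, and this one grid is reused sequentially for the $X$ and the $Z$ stages. The exact count $a_n=(2n-1)(n+2)-n=2n^2+2n-2$ falls out of this directly, not from any bookkeeping of two stacked patches with shared boundary rows. More importantly, the doubled width is not cosmetic: in a width-$n$ grid a sorting-network path can hit a vertex that is the endpoint of \emph{two} diagonal (crossing) edges---this happens whenever the same column participates in swaps in two successive odd/even layers---and such a qubit cannot be placed in two Bell pairs at once, so your ``prepare all Bell pairs along every edge in parallel in constant depth'' step would fail there. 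Spacing the indices on even columns and routing each swap through a unit face that borrows the adjacent odd column guarantees that all crossing faces are pairwise vertex-disjoint, which is precisely what lets the paper run its seven-step crossing-Bell-pair procedure on every face simultaneously (and then the seven-step long-CNOT, giving the $14$ per color). With the single $(2n-1)$-wide grid in place of your two $n$-wide patches, the remainder of your plan matches the paper's argument.
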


We first design a bounded-depth circuit for bounded degree CSS codes assuming fully connected qubits.
Then, we explain how to implement this circuit with local gates in a 2D grid of qubits.
For simplicity, we consider only the measurement of $X$ stabilizer generators. 
The same technique applies to $Z$ stabilizers.

\begin{algorithm}[h]
  \DontPrintSemicolon

  \SetKwInOut{Input}{input}\SetKwInOut{Output}{output}
  \Input{The $X$ Tanner graph $T_X$ of a CSS code.}
  \Output{A Clifford circuit for the measurement of all $X$ stabilizer generators of $T$ using fully connected qubits.}

  \BlankLine
	Compute a minimum edge coloration of $T_X$. \;
    Prepare each readout qubit in the state $\ket +$. \;
	\For{each color $c$ of $T_X$} 
 	{
    		Apply simultaneously $\cnot(s_i, q_j)$ for each edge with color $c$ connecting 
	    a readout qubit $s_i$ with a data qubit $q_j$. \;
    }
	Measure each readout qubit in the $X$ basis. \;
    	
	\caption{Fully connected syndrome extraction circuit.}
	\label{algo:fully_connected_circuit}
\end{algorithm}

In this section, we denote by $q_1, \dots, q_{n}$ the vertices of the $X$ Tanner graph $T_X$ corresponding to the $n$ data qubits and $s_1 \dots, s_{r}$ the vertices corresponding to the $r$ $X$-type stabilizer generators.
With each stabilizer vertex $s_i$, we associate an ancilla qubit that we refer to as the {\em readout qubit}.

Recall that an edge coloration is a map which associates a color with each edge in such a way that no edges sharing a vertex have the same color.
In general, computing a minimal edge-coloration for a graph, that is a coloration using a minimum number of colors, is difficult. 
However, this is easy in the case of Tanner graphs because they are bipartite, and the minimal edge-coloration is guaranteed to have at most $\deg(T_X)$ colors.
The algorithm of \cite{alon_simple_2003} returns a minimal edge-coloration 
for bipartite graphs and runs in time that is quasi-linear in the number of edges.
Assuming fully connected qubits, one can use an edge-coloration of the $X$ Tanner graph $T_X$ to measure the $X$ stabilizers with a circuit of depth $\deg(T_X)+2$ using just one ancilla per stabilizer as shown in Algorithm~\ref{algo:fully_connected_circuit}.

\begin{algorithm}[h!]
  \DontPrintSemicolon

  \SetKwInOut{Input}{input}\SetKwInOut{Output}{output}
  \Input{The $X$ Tanner graph $T_X$ of a CSS code with length $n$ and with $r \leq n$ $X$ stabilizer generators.}
  \Output{A 2D 1-local Clifford circuit for the measurement of all $X$ stabilizer generators of $T$.}

  \BlankLine
	
	Consider a grid of $(2n-1) \times (n+2)$ qubits. \;
	Place the data qubits $q_1, \dots, q_n$ on every other vertex of the top row of the grid. \;
	Place the readout qubits $s_1, \dots, s_r$ on every other vertex of the bottom row of the grid. \;
	Prepare each readout qubit in the state $\ket +$. \;
	Compute a minimum edge-coloration of $T_X$. \;
 	\For{each color $c$ of $T_X$} 
 	{
 		For $i=1, \dots, r$, denote $\{s_i, q_{j_i}\}$ the edges with color $c$. \;
 		Using a odd-even sorting network, construct a family of paths connecting $s_i$ and $q_{j_i}$ in the grid as shown in Fig.~\ref{fig:2d_constant_depth_circuit}. \;
 		Prepare a Bell state on each diagonal edge of a path. \;
 		Apply simultaneously a long-distance CNOT gate $\cnot(s_i, q_{j_i})$ along the path from $s_i$ to $q_{j_i}$, using the circuit of Fig.~\ref{fig:long_distance_CNOT}. \;
 	}
	Measure each readout qubit in the $X$ basis. \;
	\caption{Switch-based syndrome extraction circuit in a two-dimensional grid of qubits.}
	\label{algo:swicth_based_circuit}
\end{algorithm}

We now describe a mapping of this constant-depth circuit with unrestricted connectivity to a circuit with local connectivity on a two-dimensional $(2n-1) \times (n+2)$ grid as depicted in Figure~\ref{fig:2d_constant_depth_circuit}.
The $n$ data qubits $q_j$, for $j=1, 2, \dots, n$, are placed on top of the grid in position $(x(j), n+1)$ where $x(j) = 2j-2$.
The $r$ readout qubits $s_i$, with $i=1, 2, \dots, r$, are placed along the bottom side in position $(x(i), 0)$.
One could use a long-distance CNOT gate as in Figure~\ref{fig:long_distance_CNOT} 
to implement a CNOT gate between a readout qubit $s_i$ at the bottom and a data qubit $q_j$ at the top of the grid with the use of a path of ancillas that connects them.
However, this technique does not allow us to implement simultaneously all the CNOTs with color $c$ because the corresponding paths typically cross which means that they use the same ancilla qubit.
In what follows, we describe a procedure which allows for a simultaneous implementation of all these CNOT gates, leading to a bounded-depth syndrome extraction circuit.

First, for all edges of a color $c$, we generate a family of paths in the grid connecting each readout qubit $s_i$ appearing in one of the edges to the data qubit $q_{j_i}$ that the edge connects to (see Figure~\ref{fig:2d_constant_depth_circuit}). 
Consider the row below the data qubits.
We associate an index, denoted $\Index(j, n)$, with the qubits in position $(x(j), n)$ such that $\Index(j, n) \in \{1, 2, \dots, n\}$ and $\Index(j_i, n) = i$.
Then, the indices of the all the rows below are obtained by applying an odd-even sorting network~\cite{cormen2009introduction} to theses indices, as one can see in Figure~\ref{fig:2d_constant_depth_circuit}. 
To build the indices of the row $b$ with odd $b$, we copy the indices of the previous row (row $b+1$) and we swap $\Index(2a+1, b)$ and $\Index(2a+2, b)$ if $\Index(2a+1, b) > \Index(2a+2, b)$.
For even $b$, we instead swap the pairs $\Index(2a, b)$ and $\Index(2a+1, b)$ when they are in decreasing order.
When two indices are swapped, we add a pair of crossing edges (dashed gray edges) in the grid inside the corresponding face.
This sorting network produces a sorted sequence of indices after at most $n$ levels, which produces a path connecting each readout qubit  $s_i$ with its associated data qubit $q_{j_i}$.

Then, we generate Bell states along the crossing edges in a face.
A crossing pair of Bell states in the face with corners $(a, b)$ and $(a+1, b+1)$ 
can be obtained using allowed local operations in the grid with the following procedure. 
First, prepare a state $\ket +$ on each of the four vertices of the face.
Then, measure $XX$ and $ZZ$ on the two horizontal edges of the face to obtain Bell states on these edges.
Finally, swap the qubits $(a, b)$ and $(a, b+1)$ using a sequence of three CNOTs.
This circuit can be executed in parallel over all the faces supporting a crossing to produce all the required Bell states in seven steps (including a layer of classically-controlled Paulis).

A Bell pair entangling two qubits of the grid can be interpreted as a \introduce{virtual edge} connecting these two qubits.
This virtual edge can be used to perform a long distance CNOT gate~\cite{gottesman1999_gate_teleportation} with the circuit presented in Figure~\ref{fig:long_distance_CNOT}.
The only difference is that virtual edges are destroyed after being used because the long distance CNOT gates consumes the Bell states.
In the last step of the algorithm, we implement a long distance CNOT gate along each path produced by the sorting network. 
These CNOT gates can be implemented simultaneously in seven steps (including a layer of classically-controlled Paulis) because the corresponding paths do not share any qubit.

With this approach the measurement of all the $X$ stabilizer generators can be implemented in the 2D grid of qubits using a round of preparation of the readout qubits, $\deg(T_X)$ rounds of long distance CNOT gates in depth 14 and the measurement of the readout qubits.

\section{Depth-optimal circuits for linear space overhead syndrome extraction}
\label{sec::constant_overhead_circuits}

\begin{figure}[ht]
  \centering
  \includegraphics[width=0.45\textwidth]{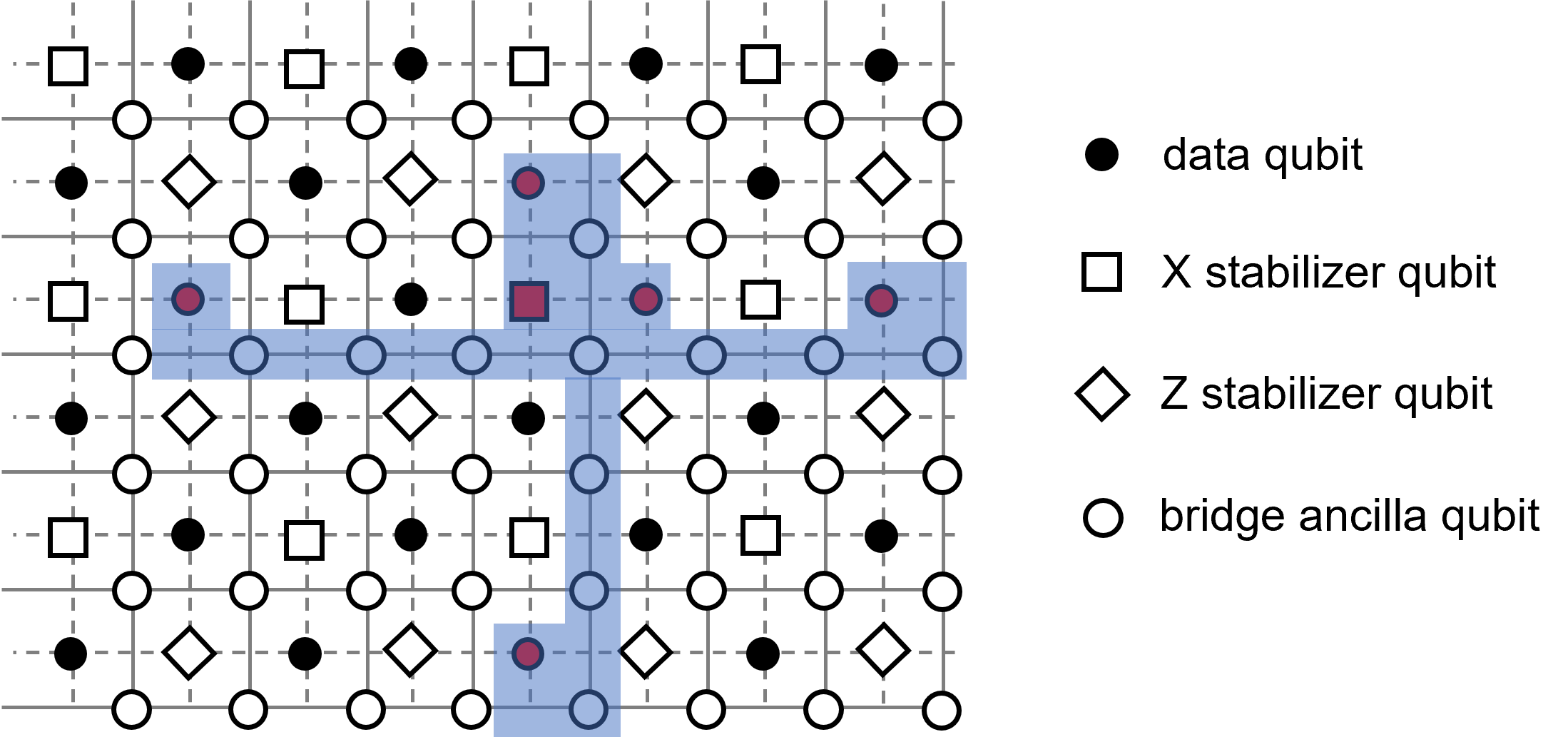}
  \caption{
	Each $X$ (or $Z$) stabilizer generators of a HGP code is supported on the union of a row and a column of a square grid of qubits and can be measured using operations supported on qubits in the vicinity of that row and column.
  	As an example, we mark in red the readout qubit and the data qubits in the support of an $X$ stabilizer generator.
	The measurement of this stabilizer is implemented using the readout qubit prepared in the state $\ket +$.
	Then, we implement a vertical multi-target CNOT controlled on the readout qubit and targeting the qubits of the support of the stabilizer on the column of the readout qubit using nearby bridge qubits.
	A similar multi-target CNOT is then implemented horizontally.
	The stabilizer outcome is extracted by measuring the readout qubit in the $X$ basis.
  }
  \label{fig:constant_overhead_layout}
\end{figure}

\begin{figure}[ht]
  \centering
  \includegraphics[scale=.5]{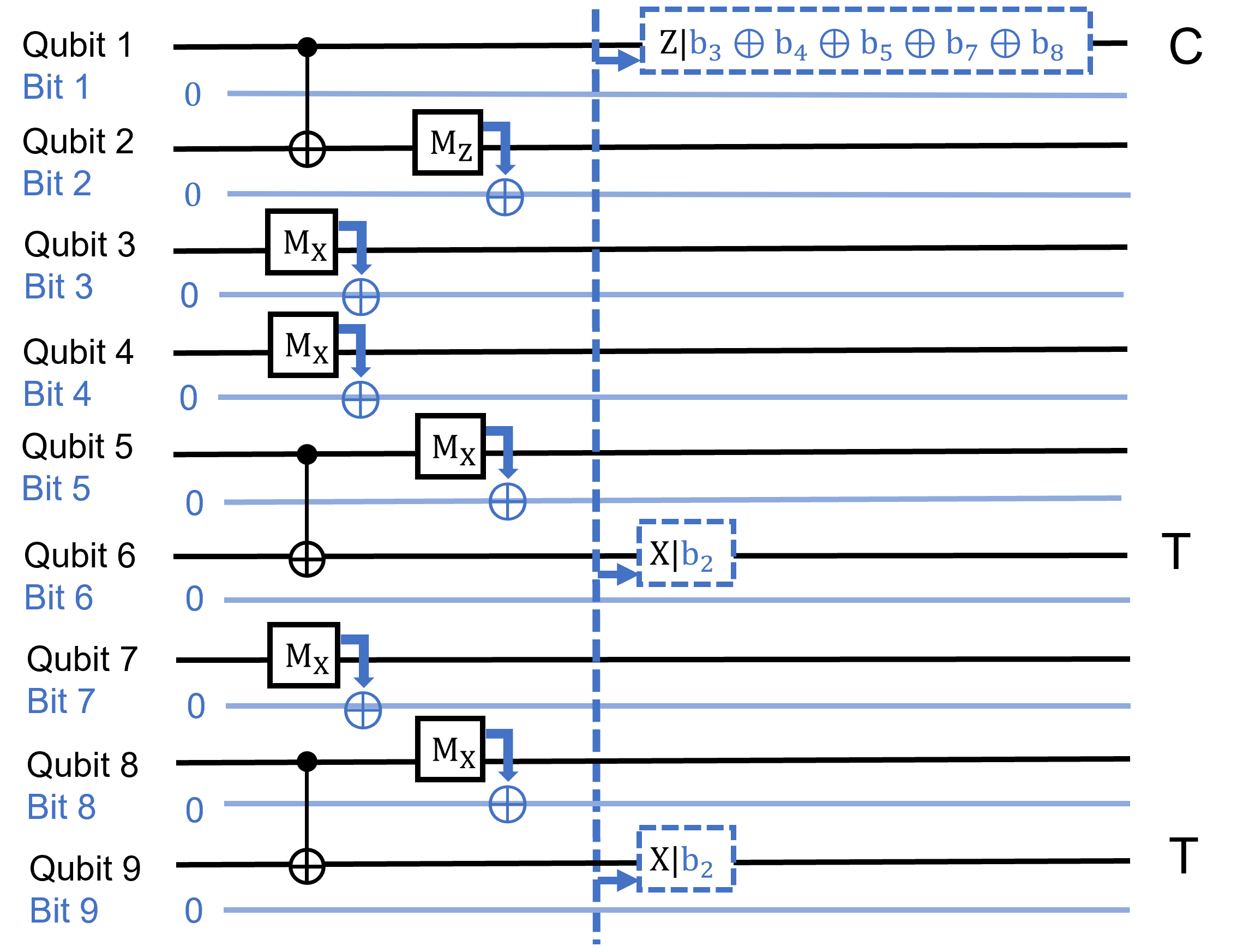}
  \caption{
  A multi-target CNOT gate implemented using a cat state with control qubit C and two target qubits T.
  We assume that qubits 2, 3, 4, 5, 7, 8 are initially in the six-qubit cat state $\frac{1}{\sqrt{2}}(\ket{000000} + \ket{111111})$.
  This cat state can be prepared using $\ket +$ states and two layers of $ZZ$ measurements and two layers of classically=-controlled Pauli operations as in Fig.~\ref{fig:long_distance_CNOT}.
  If the control qubit and the target qubits are associated with distinct cat state qubits all the CNOTs can be implemented in parallel. Then, the depth of the multi-target CNOT is eight (including the cat state preparation in depth five), independently of the number of targets.
  }
  \label{fig:multi_target_CNOT}
\end{figure}

In this section, we construct a family of 2D local Clifford syndrome extraction circuits for any hypergraph product (HGP) code~\cite{tillich_quantum_2014}.
HGP codes are one of the most popular constructions of quantum codes, where an HGP code is formed from a pair of classical codes.
Proposition~\ref{prop:dense_2d_local_meas_circuit} shows that any Clifford 2D local syndrome extraction circuit for a quantum expander LDPC code using $O(n)$ ancilla qubits has depth at least $\Omega(\sqrt{n})$.
Our 2D local syndrome extraction circuits prove this bound is tight when applied to any family of HGP codes formed from a pair of classical LDPC codes with proportional length, as they use $O(n)$ ancilla qubits and have depth $O(\sqrt{n})$ in this setting.

Recall that the Tanner graph of a classical linear code $C$ with length $n$ defined by $r$ checks is the bipartite graph $T = (V, E)$ with $V = V_B \cup V_C$ where $V_B = \{b_1, \dots, b_n\}$ corresponds to the set of bits and $V_C = \{c_1, \dotsm c_r\}$ corresponds to the checks. 
The bit $b_i$ and the check $c_j$ are connected by an edge iff the bit $b_i$ belongs to the support of the check $c_j$.
By swapping the roles of the bits and the checks, we obtain a second linear code called the {\em transposed code}, denoted $C^T$, which has length $r$ and is defined by $n$ checks.
In what follows, the parameters of a linear code are denoted $[n, k, d]$ and the parameters of its transposed code are denoted $[n^T, k^T, d^T]$.

The HGP construction, that we review below, takes as an input two classical linear codes with Tanner graphs $T_1$ and $T_2$ with parameters $[n_1, k_1, d_1]$ and $[n_2, k_2, d_2]$ and produces a quantum CSS code with parameters $[[n, k, d]]$ where $n = n_1 n_2$, $k=k_1 k_2 + k_1^T k_2^T$ and $d \geq \min\{d_1, d_2, d_1^T, d_2^T\}$.

In the rest of this section, we describe a syndrome extraction circuit that applies to any HGP codes which consumes $O(n)$ ancilla qubits, leading to the following proposition.

\begin{proposition} \label{prop:linear_space_circuit_construction}
Let $Q$ be the HGP of two classical codes $C_i$ with length $n_i$ and with $r_i$ checks for $i=1,2$.
The code $Q$ admits a 2D local Clifford syndrome extraction circuit using 
$
a_n = n_1 n_2 + 2n_1 r_2 + 2r_1n_2 + r_1r_2
$
ancilla qubits with depth 
$
8(n_1 + n_2 + r_1 + r_2) + 4.
$
\end{proposition}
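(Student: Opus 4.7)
The plan is to construct a 2D local layout that respects the cross-shaped support of each hypergraph product stabilizer (sketched in Fig.~\ref{fig:constant_overhead_layout}) and to measure every stabilizer with a cat-state-based multi-target CNOT (Fig.~\ref{fig:multi_target_CNOT}) between a readout qubit and the data qubits along its row and column.

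First I would fix the layout. The HGP code has $n = n_1 n_2 + r_1 r_2$ data qubits, naturally partitioned into a bit-bit block indexed by $V_{B_1}\times V_{B_2}$ and a check-check block indexed by $V_{C_1}\times V_{C_2}$. Embedding these as diagonal sub-grids of an $(n_1+r_1)\times(n_2+r_2)$ super-grid makes every $X$-stabilizer, indexed by $(c,b)\in V_{C_1}\times V_{B_2}$, supported on one column of the bit-bit block together with one row of the check-check block; the $X$-readout qubit is placed at the off-diagonal site $(c,b)$, and similarly for $Z$-readouts at sites indexed by $V_{B_1}\times V_{C_2}$. To enable the multi-target CNOT of Fig.~\ref{fig:multi_target_CNOT} we need a lane of bridge qubits running from each readout along its row and its column to the relevant data qubits. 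Doubling the super-grid so that $X$-lanes and $Z$-lanes live on two interleaved sub-lattices gives a total of $2(n_1+r_1)(n_2+r_2)$ sites, and subtracting the $n$ data qubits reproduces the ancilla count $a_n = n_1 n_2 + 2 n_1 r_2 + 2 r_1 n_2 + r_1 r_2$.

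Next I would schedule the stabilizer measurements. Following Fig.~\ref{fig:multi_target_CNOT}, the bridges of a single stabilizer are put into a linear cat state using a layer of $\ket{+}$ preparations, two layers of nearest-neighbor $ZZ$ measurements, and a layer of classically controlled $X$ corrections; one CNOT layer then targets the data qubits and a short uncomputation-and-measurement block finishes the round in a constant number of layers. Two $X$-stabilizers $(c,b)$ and $(c',b')$ can be fired in the same round whenever $c\neq c'$ and $b\neq b'$, since their row-lanes and column-lanes are then disjoint on the super-grid. This is an edge-coloring of the complete bipartite graph $K_{r_1,n_2}$, whose chromatic index is $\max(r_1,n_2)$; a trivial diagonal $(r_1+n_2)$-coloring already suffices. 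The $Z$-stabilizers are colored the same way in their own sub-lattice with at most $n_1+r_2$ colors.

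Multiplying the color counts by the constant depth per color and adding the single opening/closing layers for the readouts yields the bound $8(n_1+n_2+r_1+r_2)+4$, while 2D locality is built into the lane construction. The main obstacle I expect is the combinatorial bookkeeping: checking that the $X$- and $Z$-lanes genuinely coexist on a single 2D grid without collisions, that the bridge qubits of different stabilizers scheduled in the same color class do not overlap at row/column intersections, and that the classically controlled corrections coming from many simultaneous cat states compose correctly. Separating $X$- and $Z$-lanes into the two interleaved sub-lattices suggested by the doubling of $n_1 r_2$ and $r_1 n_2$ in $a_n$ is the cleanest way to dispose of these conflicts.
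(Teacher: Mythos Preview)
Your layout and ancilla count match the paper's, but your scheduling has a genuine gap. You claim that two $X$-stabilizers $(c,b)$ and $(c',b')$ can fire simultaneously whenever $c\neq c'$ and $b\neq b'$ because ``their row-lanes and column-lanes are then disjoint on the super-grid.'' This is false: the row-lane of $(c,b)$ runs along row $b$ and the column-lane of $(c',b')$ runs along column $c'$, and these two lanes cross at the grid cell $(c',b)$. With one bridge qubit per cell, which is all your ancilla budget allows, both L-shaped cat states would need that same bridge qubit at the same time step. You flag this row/column-intersection issue yourself at the end, but the fix you propose---putting $X$-lanes and $Z$-lanes on separate interleaved sub-lattices---only addresses $X$--$Z$ conflicts and does nothing for these $X$--$X$ crossings. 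Resolving them within your scheme would require separate row-bridges and column-bridges inside the $X$ sub-lattice, which changes the ancilla count.

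The paper avoids the problem by never building L-shapes. Each $X$-stabilizer is measured with two \emph{separate} multi-target CNOTs, one horizontal and one vertical, and the schedule has two phases. In the horizontal phase one loops over the $|V_{1,B}|$ columns that contain $X$-readouts; within a fixed column all $|V_{2,C}|$ horizontal multi-target CNOTs fire in parallel, since they sit on distinct bridge rows and cannot cross. In the vertical phase one loops over the $|V_{2,C}|$ rows analogously. Because only row-lanes or only column-lanes are ever active at once, no crossings occur and a single bridge sub-lattice (reused sequentially for $X$ and then $Z$) suffices. This gives $n_1+r_2$ rounds of depth-$8$ multi-target CNOTs for $X$ and $r_1+n_2$ rounds for $Z$, plus preparation and measurement, yielding exactly $8(n_1+n_2+r_1+r_2)+4$.
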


Applying this result to a family of HPG codes which are the product of linear codes with $r_i = O(n_i)$ and with length $n_2 = \Theta(n_1)$, we obtain 2D local syndrome extraction circuits using $O(n_1 n_2) = O(n)$ ancilla qubits with depth $O(n_1) = O(\sqrt{n})$.
This family of quantum codes includes includes local-expander quantum codes~\cite{leverrier_quantum_2015} which obey the bound of Corollary~\ref{cor:dense_2d_local_meas_circuit} proving that our circuit construction saturates the result of this corollary for syndrome extraction circuits using $O(n)$ ancilla qubits.

Let us review the HGP construction.
Consider a pair of Tanner graphs $T_1 = (V_{1, B} \cup V_{1, C}, E_1)$ and $T_2 = (V_{2, B} \cup V_{2, C}, E_2)$.
The Tanner graph $T = (V, E)$ of the HGP of $T_1$ and $T_2$ is defined to be the Cartesian product of $T_1$ and $T_2$. 
The vertex set of $T$ is $V = V_1 \times V_2$ with an edge connecting $(u_1, u_2)$ and $(v_1, v_2)$ iff either $u_1=v_1$ and $\{u_2,v_2\} \in E_2$, or $\{u_1,v_1\} \in E_1$ and $u_2 = v_2$. 
We associate a {\em data qubit} with each vertex $(v_1, v_2) \in V_{1, C} \times V_{2, C} \cup V_{1, B} \times V_{2, B}$.
The $X$ (respectively $Z$) stabilizer generators correspond to the vertices of $V_{1, B} \times V_{2, C}$ (respectively $V_{1, C} \times V_{2, B}$).

For $i=1, 2$, denote by $v_{i, 1}, \dots, v_{i, |V_i|}$ the vertices of $V_i$.
To implement the syndrome extraction circuit in a 2D grid of qubits, we use one {\em readout qubit} for each stabilizer generator. 
The data qubit or the readout qubits associated with the vertex $(v_i, v_j)$ is placed in position $(i, j)$ of the square grid.
We use an additional set of ancilla qubits, that we refer to as \emph{bridge qubits}, placed at half-integer coordinates $(i+1/2,j+1/2)$ for $i=1, \dots, |V_1|$ and $j=1, \dots, |V_2|$.
There is a total of $n_1 n_2$ data qubits, $n_1 r_2$ $X$ readout qubits, $r_1 n_2$ $Z$ readout qubits and $(n_1 + r_1)(n_2 + r_2)$ bridge qubits.

The qubits occupy occupy a constant fraction of a $2(n_1 + r_1) \times 2(n_2 + r_2)$ patch in $Z^2$ (up to a multiplication of the coordinates by two).
We will design a circuit that uses only local operations acting on qubits at bounded distance from each other.

Before stating the explicit circuit construction in Algorithm~\ref{algo:linear_space_circuit}, we explain the intuition behind the circuit by describing the measurement of a single $X$ stabilizer generator.
By construction, any stabilizer generator is supported on the union of a row and a column of qubits with the corresponding readout qubit lying at their intersection $(i, j)$.
The $X$ stabilizer generator $S_{i, j}$ with readout qubit $(i, j)$ can be measured by first preparing its readout qubit in the state $\ket{+}$, applying a multi-target CNOT from $(i, j)$ to all the qubits in the support of $S_{i, j}$ in column $i$, then applying a multi-target CNOT from $(i, j)$ to all the qubits in the support of $S_{i, j}$ in row $j$, before finally measuring the readout qubit in the $X$ basis.
A multi-target CNOT in row $j$ (respectively column $i$) can be implemented in depth eight using the row $j+1/2$ (respectively column $i+1/2$) of bridge qubits.
This mutli-target CNOT circuit is described in Fig.~\ref{fig:multi_target_CNOT}.

To build the full stabilizer measurement circuit for $X$ stabilizer generators, we parallelize the circuit measuring a single generator by simultaneously applying mutli-target CNOTs on different rows or different columns.
If $v_i \notin V_{1, B}$ then column $i$ contains no $X$ readout qubit.
If $v_i \in V_{1, B}$, then column $i$ contains $|V_{2, C}|$ $X$ readout qubits, which each control a horizontal mutli-target CNOT.
These $|V_{2, C}|$ non-overlapping mutli-target CNOTs can be implemented in parallel.
With this strategy, the $|V_{1, B}||V_{2, C}|$ multi-target CNOTs required to perform the measurement of all the $X$ stabilizer generators can be implemented in $|V_{1, B}|$ rounds of mutli-target CNOTs.
With the same approach, the $|V_{1, B}||V_{2, C}|$ vertical multi-target CNOTs can be implemented in $|V_{2, C}|$ rounds.
Adding the preparation and measurement of the readout qubits, this produces a circuit with depth $8(|V_{1, B}| + |V_{2, C}|) + 2$ which is described in Algorithm~\ref{algo:linear_space_circuit}.
The $Z$ stabilizers are measured in depth $8(|V_{1, C}| + |V_{2, B}|) + 2$ using the same strategy.

\begin{algorithm}[h!]
  \DontPrintSemicolon

  \SetKwInOut{Input}{input}
  \SetKwInOut{Output}{output}
  \Input{A pair of Tanner graphs $T_1 = (V_{1, B} \cup V_{1, C}, E_1)$ and $T_2 = (V_{2, B} \cup V_{2, C}, E_2)$.}
  \Output{A circuit for the measurement of all $X$ stabilizer generators in a two-dimensional grid of qubits for the HGP of $T_1$ and $T_2$.}

  \BlankLine

	Prepare each $X$ readout qubit in the state $\ket +$. \;
  \For{each $i$ $\in 1,\dots, |V_1|$} {
    Simultaneously, for each $j$ $\in 1,\dots, |V_2|$ \;
		  Apply a multi-target CNOT controlled on the readout qubit at $(i,j)$ and targeted on all qubits in column $j$ that belong to the support of the corresponding stabilizer. \;
    }
   \For{each $j$ $\in 1,\dots, |V_2|$} {
    Simultaneously, for each $i$ $\in 1,\dots, |V_1|$ \;
		  Apply a multi-target CNOT controlled on the readout qubit at $(i,j)$ and targeted on all qubits in row $i$ that belong to the support of the corresponding stabilizer. \;
    }
    	Measure each $X$ readout qubit in the $X$ basis. \;
  
	\caption{A linear space-overhead syndrome extraction circuit for HGP codes.}
	\label{algo:linear_space_circuit}
\end{algorithm}

\section{Numerical results}
\label{sec::numerical_results}

Here, we investigate the performance of a family of quantum LDPC codes implemented in 2D using the 2D local syndrome extraction circuit described in Algorithm~\ref{algo:linear_space_circuit}.
We chose this circuit because it uses a linear number of ancilla qubits which allows us to maintain a non-vanishing ratio between the number of logical qubits and the number of physical qubits, including all the ancillas consumed by the syndrome extraction circuit.

We simulate the circuit described in Algorithm~\ref{algo:linear_space_circuit} with the difference that we perform the readout qubit preparation (step 1) in the same time as the first operation of loop in step 2 and we perform readout qubit measurement (step 8) in the same time as the last operation of the loop 5.
This reduces the overall depth of the syndrome extraction circuit by $4$.

We consider HGP codes constructed from the product of (3, 4)-regular bipartite graphs with girth at least 8 and which have rate $1/25$.
Including the ancilla qubits used for syndrome extraction the number of logical qubits per physical qubit is $1/98$.
The codes are generated using the procedure of Ref.~\cite{grospellier_combining_2020} and for each code length, we pick the best code from a few hundred samples by decoding i.i.d noise using the (Small Set Flip) SSF decoder~\cite{leverrier_quantum_2015}.
For larger code length, we selected fewer samples to keep the runtime of the selection process reasonable ranging from 50 to 1000 code instance samples.

We use the same decoding scheme as Grospellier and Krishna~\cite{grospellier_numerical_2019}, but replace their idealized noise model by circuit noise generated according to the 2D local syndrome extraction circuit from Algorithm~\ref{algo:linear_space_circuit}.
For comparison, we also simulate the performance of this family of codes with noise generated using fully-connected syndrome extraction circuits described in Algorithm~\ref{algo:fully_connected_circuit} which cannot be implemented locally in 2D.

Our simulations use the standard model of {\em circuit noise model} as follows:
\begin{itemize}
\item Each preparation is followed by a random Pauli error acting on the prepared qubit with probability $p$. The error is selected uniformly from the set $\{X, Y, Z\}$.
\item Each waiting qubit is affected by a random Pauli error acting with probability $p$. The error is selected uniformly from the set $\{X, Y, Z\}$.
\item Each $s$-qubit operation is followed by a random Pauli error acting on its support with probability $p$. The error is selected uniformly from the set of non-trivial $s$-qubit Pauli error.
\item The outcome of each measurement is flipped with probability $p$.
\end{itemize}
Given that they can be postponed until the end of the circuit and implemented in software without any physical action on the qubits, we assume that classically-controlled Pauli operations are not noisy.

Following Ref.~\cite{grospellier_combining_2020}, decoding is performed for $T$ consecutive rounds.
After each round of stabilizer extraction, Belief Propagation (BP)~\cite{gallager_low-density_1962, poulin_iterative_2008} is iterated until the syndrome weight reaches a local minimum.
In general this does not return the system to the code space after each round but this is enough to avoid error accumulation.
To determine if a logical error occurred after $T$ rounds of error correction, we use a round of perfect stabilizer measurement and then the BP decoder and the SSF decoder~\cite{leverrier_quantum_2015} are alternated until the SSF decoder converges to a correction which has the right syndrome or if the number of cycles reaches its fixed bound.
At the end of this process, this correction based on a round of perfect measurement is applied to the data qubits.
We say that a failure has occurred by round $T$ if the net effect of all noise and corrections has a non-trivial syndrome or if it is a logical error.

In our simulations, we estimate the logical failure rate by averaging over 10 successive rounds of error correction.
We use of more than one round of error correction to ensure that the logical failure rate per round is close to its limit.
Based on our numerical simulation, we found that ten rounds are sufficient.

Figure~\ref{fig::numerical_results} shows our numerical results.
We contrast the performance of the 2D local syndrome extraction circuits of Algorithm~\ref{algo:linear_space_circuit} with the performance achieved using the bounded-depth circuit of Algorithm~\ref{algo:fully_connected_circuit}. 
We see an enormous gap in performance between these two scenarios, showing that a 2D local hardware is not well suited to the implementation of these codes.
We discuss an alterative implementation of HGP codes in 2D based on long-range gates in Ref.~\cite{tremblay2021layers}.


\end{document}